\newtheorem{definition}{Definition}
\newtheorem{theorem}{Theorem}
\newtheorem{corollary}{Corollary}
\newtheorem{prop}{Proposition}
\newtheorem{lemma}{Lemma}
\DeclareMathOperator{\tw}{tw}
\DeclareMathOperator{\poly}{poly}
\title{Local treewidth of random and noisy graphs with applications to stopping contagion in networks}
\author{Hermish Mehta \and Daniel Reichman}
\begin{document}

\maketitle
\begin{abstract}
    We study the notion of local treewidth in sparse random graphs: the maximum treewidth over all $k$-vertex subgraphs of an $n$-vertex graph. When $k$ is not too large, we give nearly tight bounds for this local treewidth parameter; we also derive tight bounds for the local treewidth of noisy trees, trees where every non-edge is added independently with small probability. We apply our upper bounds on the local treewidth to obtain fixed parameter tractable algorithms (on random graphs and noisy trees) for edge-removal problems centered around containing a contagious process evolving over a network. In these problems, our main parameter of study is $k$, the number of initially ``infected'' vertices in the network. For the random graph models we consider and a certain range of parameters the running time of our algorithms on $n$-vertex graphs is $2^{o(k)}\poly(n)$, improving upon the $2^{\Omega(k)}\poly(n)$ performance of the best-known algorithms designed for worst-case instances of these edge deletion problems.
\end{abstract}

% HEADER
%\makeatletter
%\begin{center}
%	\textbf{\@title} \\
%	\@author \\
%	\@date
%\end{center}
%\makeatother

% DOCUMENT
\section{Introduction}

Treewidth is a graph-theoretic parameter that measures the resemblance of a graph to a tree.
We begin by recalling the definition of treewidth. 

\begin{definition}[Tree Decomposition]
	A tree decomposition of a graph $G = (V, E)$ is a pair $(T, X)$, where $X$ is a collection of subsets of $V$, called bags, and $T$ a tree on vertices $X$ satisfying the properties below:
\end{definition}

\begin{enumerate}[noitemsep]
	\item The union of all sets $X_i \in X$ is $V$.
	\item For all edges $(u, v) \in E$, there exists some bag $X_i$ which contains both $u$ and $v$.
	\item If both $X_i$ and $X_j$ contain some vertex $u \in V$, then all bags $X_k$ on the unique path between $X_i$ and $X_j$ in $T$ also contain $u$.
\end{enumerate}

\begin{definition}[Treewidth]
	The width of a tree decomposition $(T, X)$ is one less than cardinality of the largest bag. More formally, we can express this as 
	\begin{align*}
		\max_i |X_i| - 1.
	\end{align*}
	The treewidth of a graph $G = (V, E)$ is the minimum width among all tree decompositions of $G$.
\end{definition}
Many graph-theoretic problems that are NP-hard admit polynomial-time algorithms on graph families whose treewidth is sufficiently slowly growing as a function of the number of vertices~\cite{kloks1994treewidth}. There is vast literature concerned with finding methods to relate the treewidth of graphs to other well-studied combinatorial parameters and leveraging this to devise efficient algorithms for algorithmic problems in graphs with constant or logarithmic treewidth~\cite{cygan2015parameterized}. An excellent introduction to the concept of treewidth as well as brief survey of the work of Robertson and Seymour in establishing this concept can be found in Chapter 12 of~\cite{diestel2005graph}.

These treewidth-based algorithmic methods, however, have historically found limited applicability in random graphs. Sparse random graphs $G(n, d/n)$ where every edge occurs independently with probability $d/n$, for some $d > 1$, exhibit striking contrast between their local and global properties---and this contrast is apparent when looking at treewidth. Locally, these graphs appear tree-like with high probability\footnote{Given a random graph model, we say an event happens with high probability if it occurs with probability tending to $1$ as $n$ tends to infinity.} (w.h.p.): the ball of radius $O(\log_d n)$ around every vertex looks like a tree plus a constant number of additional edges. Globally, however, these graphs have w.h.p. treewidth $\Omega(n)$. For example, the super-critical random graph $G(n, \frac{1+\delta}{n})$ has w.h.p. treewidth $\Omega(n)$~\cite{do2022note,perarnau2014tree,lee2012rank}. As a result of this global property, conventional techniques used to exploit low treewidth to derive efficient algorithms do not apply directly for random graphs.

In this paper, we take advantage of the local tree-like structure of random graphs by analyzing the \emph{local} behavior of treewidth in random graphs. Central to our approach is the following definition.

\begin{definition}[Local Treewidth]\label{def:local}
	Let $G$ be an undirected $n$-vertex graph. Given $k \leq n$ we denote by $t_k(G)$ the largest treewidth of a subgraph of cardinality $k$ of $G$.
\end{definition}

In words, the local treewidth of an $n$-vertex graph, with locality parameter $k$, is the maximum possible treewidth across all subgraphs of size $k$. We study two models of random graphs, starting with the familiar binomial random graph $G(n, p)$. While the binomial random graph $G(n,p)$ lacks many of the characteristics of empirically observed networks such as skewed degree distributions, studying algorithmic problems on random graphs can nevertheless lead to interesting algorithms. 

\begin{definition}[Noisy Trees]\label{def:noisy}
Let $T$ be an $n$ vertex tree. The noisy tree $T'$ obtained from $T$ is a random graph model where
every non edge of $T$ is added to $T$ independently, with probability $1/n$. 
\end{definition}

Here we assume $p=1/n$ for convenience; all our results regarding noisy trees also hold when the perturbation probability $p$ satisfies $p=\epsilon/n$ for $\epsilon <1$. Noisy trees are related to small world models of random networks~\cite{newman1999renormalization,newman2011scaling}, where adding a few random edges to a graph of high diameter such as a path results with a graph of logarithmic diameter w.h.p.~\cite{krivelevich2015smoothed}.

Below, we give an informal description of the concepts we study and sketch our main results; we defer discussion of formal results until Section~\ref{sec:formal} and later in the paper. Our main result is a nearly tight bound holding w.h.p. for the maximum treewidth of a $k$-vertex subgraph of $G(n, p)$ assuming $k \leq n^{1-\epsilon}$ for $\epsilon \in (0, 1)$ and $p=d/n$ where $d > 1$. In the notation introduced earlier, this provides a bound for $t_k(G)$. Assuming $k \leq n^{\epsilon}$ for a sufficiently small $\epsilon$ we obtain nearly tight bounds for the local treewidth of noisy trees as well.
 
Our upper bounds on the local treewidth are motivated by algorithmic problems related to containing the spread of a contagious process over undirected graphs by deleting edges. We focus on the bootstrap percolation contagious process (Definition~\ref{def:boot}) where there is a set of initially infected vertices and noninfected vertices are infected if they have at least $r \geq 2$ neighbors and consider two edge-removal problems: \emph{Stopping Contagion} and \emph{Minimizing Contagion}. Informally, in stopping contagion we are given a subset of infected nodes $A$ and seek to remove a minimal number of edges to ensure a ``protected" subset of vertices $B$ (disjoint from $A$) are not infected from $A$. In minimizing contagion we wish to ensure at most $m$ additional vertices are infected from $A$ for a target value $m$ by deleting a minimal number of edges. Such edge removal problems might arise, among other applications~\cite{enright2018deleting,enright2021deleting}, in railways and air routes, where the goal might be to prevent spread while also minimizing interference to transportation. In this context, edge deletion may correspond to removing a transportation link altogether or introducing  special requirements (such as costly checks) to people between the the endpoints. Edge removal can be also viewed as a \emph{social distancing}  measure to control an epidemic outbreak~\cite{babay2022controlling}. One can also study the problem of removing \emph{vertices} to control the spread of an epidemic which is related to vaccinations: making nodes immune to infection and removing them from the network~\cite{sambaturu2020designing}.

We design algorithms for stopping and minimizing contagion for random graphs and noisy trees.
Note that our algorithms do not achieve polynomial time, even for $k$ that is poly-logarithmic in $n$; whether there exists a polynomial time algorithm for minimizing contagion and stopping contagion in $G(n, p)$ for every value of $k$ is an open question. Nonetheless, the dependency of our algorithm on $k$ is better (assuming $k \leq n^{\epsilon}$ for an appropriate constant $\epsilon>0$) than the dependency of $k$ in the running time of the best known algorithms for minimizing contagion\footnote{We are not aware of previous algorithms for the stopping contagion problem.} in the worst case~\cite{cordasco2021parameterized}. Please see Subsection~\ref{sub:algorithms} for details. 

Our algorithms are based on the following three observations:

\begin{enumerate}[noitemsep]
\item The local treewidth of binomial random graphs and noisy trees is sublinear in $k$. 
\item There exist fast algorithms for minimizing and stopping contagion in graphs of bounded treewidth. 
\item The set of seeds $A$ has what we call the \emph{bounded spread} property: w.h.p. at most $c|A|$ additional vertices are infected from $A$ for some constant\footnote{For $G(n,d/n)$, our constant $c:=c(d)$ is a function of $d$. When $d$ is a constant independent of $n$ so is $c$.} $c$. Bounded spread allows us to solve minimizing contagion and stopping contagion on subgraphs that have small (sublinear in $k$) treewidth.
\end{enumerate}

For the sake of brevity and readability we focus on \emph{edge} deletion problems. We note that our algorithms can be easily adapted for the analogous problems of minimizing and stopping contagion by deleting \emph{vertices}
rather than edges. The reason is that our algorithms for minimizing/stopping contagion on bounded treewidth graphs work (with the same asymptotic running time guarantees) for vertex deletion problems.
Combining algorithms for bounded treewidth with the bounded spread property as well the upper bound on the local treewidth yields algorithms for the vertex deletion versions of minimizing and stopping contagion. 

Our main contribution is studying the concept of local treewdith for random graphs and connecting it to algorithmic problems involving stopping contagion in networks. Our calculations are standard and the contribution is conceptual rather than introducing a new technique. Our results for noisy trees regarding the bounded spread property are interesting as in contrast to other infection models considered in the literature~\cite{becchetti2021sharp}, the influence of adding random ``long range'' edges to the total spread of a seed set with at most $n^{1-\epsilon}$ vertices is minor in the sense that w.h.p. the total spread increases as a result only by a constant multiplicative factor.

\section{Our results}\label{sec:formal}
\subsection{Local treewidth bounds}
Recall we define the local treewidth of a graph $G$, denoted $t_k(G)$, to be the greatest treewidth among along subgraphs of size $k$. Trivially, for any graph with at least one edge and $k \leq n$, $1 \leq t_k(G) \leq k$.

Consider as an illustrative example the random graph $G = G(n, 1/2)$: with high probability, $t_k(G)=\Omega(k)$ for all values of $k$. For $k \leq 1.9 \log n$ this follows as there is a clique of size $k$ in $G$ w.h.p. For $k > 1.9 \log n$ this follows as a randomly chosen subset of size $k$ has, with high probability, minimum degree $\Omega(k)$, and a graph with treewidth $r$ has a vertex of degree at most $r$.

We can now state our bounds for $t_k$ in the random graph models we consider. From here onward, $\epsilon>0$ is taken to be a positive constant in $(0,1 )$. We give somewhat compressed statements; reference to the full Theorems are provided throughout this section.

\begin{theorem}
Let $G = G(n, p)$ with $p=d/n$ and $k \leq n^{1-\epsilon}$. Then, with high probability:
\begin{align*}
    t_k(G) \leq 3+ O\left(\frac{k \log d}{\log n}\right).
\end{align*}
\end{theorem}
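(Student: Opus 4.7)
The plan is to combine a standard random-graph concentration argument with the elementary bound relating treewidth to cyclomatic number. Concretely, I would first prove that w.h.p.\ every subset $S \subseteq V(G)$ with $|S| \leq k$ satisfies $e(G[S]) \leq |S| + T$ for $T := \lceil C k \log d / \log n \rceil$ and a suitable constant $C = C(\epsilon, d)$, then invoke the fact that any connected graph $H$ on $s$ vertices with $m$ edges has $\tw(H) \leq m - s + 2$.

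For the concentration step, fix a set $S$ of size $s \leq k$; since $e(G[S]) \sim \mathrm{Bin}(\binom{s}{2}, d/n)$, standard binomial tail estimates give
\[
    \Pr[e(G[S]) \geq s + T] \leq \binom{\binom{s}{2}}{s+T}(d/n)^{s+T} \leq \left(\frac{esd}{2n}\right)^{s+T}.
\]
Union-bounding over the $\binom{n}{s} \leq (en/s)^s$ subsets of each size and summing $s = 1, \dots, k$ yields
\[
    \sum_{s=1}^{k} \left(\frac{e^2 d}{2}\right)^{s} \left(\frac{esd}{2n}\right)^{T}.
\]
The hypothesis $k \leq n^{1-\epsilon}$ forces $esd/(2n) \leq (ed/2)\, n^{-\epsilon}$, so the second factor contributes an $n^{-\epsilon T}$ saving; choosing $C$ sufficiently large in terms of $\epsilon$ and $d$, this saving dominates the growth $(e^2 d/2)^s \leq (e^2 d/2)^k$ and drives the total sum to $0$.

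For the treewidth step, $H$ has cyclomatic number $\mu(H) = m - s + 1$, which equals its minimum feedback edge set size; selecting one endpoint per such edge yields a feedback vertex set $F$ with $|F| \leq \mu(H)$, and adjoining $F$ to every bag of a width-$1$ tree decomposition of the forest $H - F$ produces a tree decomposition of $H$ of width at most $|F| + 1 \leq m - s + 2$. Since treewidth equals the maximum treewidth over connected components and any $k$-vertex subgraph has treewidth at most that of its induced closure, it suffices to bound $\tw(G[S_j])$ for each component $S_j$ (with $|S_j| \leq k$) of an arbitrary induced $k$-subgraph $G[S]$: the concentration step gives $e(G[S_j]) \leq |S_j| + T$, whence $\tw(G[S_j]) \leq T + 2 \leq 3 + O(k \log d / \log n)$.

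The only nontrivial step is calibrating $T$ in the concentration bound: the factor $(e^2 d/2)^s$ grows exponentially in $s$ up to $s = k$ and must be overcome by the $n^{-\epsilon T}$ saving from the sparsity hypothesis $k \leq n^{1-\epsilon}$, which forces $T = \Theta(k \log d / \log n)$ with implicit constants depending on $\epsilon$ and $d$, matching the theorem's bound.
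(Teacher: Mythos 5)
Your proposal is correct and arrives at the same bound, but via a route that the paper explicitly calls out as likely to lose a $\log k$ factor --- and it's worth noting why yours does not.

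The paper first restricts attention to \emph{connected} subgraphs, then union-bounds by choosing a spanning tree (Cayley's formula gives $s^{s-2}$ of them) together with $r$ excess edges; this factorization is stated to be ``sharper'' than simply requiring $\geq s-1+r$ edges in an arbitrary $s$-set, which the paper claims yields the weaker $3+O\bigl(k(\log k + \log d)/\log n\bigr)$. Your proof takes precisely the ``naive'' edge-counting route, but you bound the number of $s$-subsets by $\binom{n}{s} \leq (en/s)^s$ rather than the loose $n^s$, and the number of edge-choices by $\binom{\binom{s}{2}}{s+T} \leq (es/2)^{s+T}$. The factor $s^{-s}$ from the first bound cancels the $s^s$ from the second, so your sum $\sum_{s=1}^k (e^2d/2)^s (esd/(2n))^T$ has no residual $k^k$-type term, and the calibration $T = \Theta(k\log d/\log n)$ goes through just as in the paper. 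In other words, the $\log k$ penalty the paper attributes to the edge-counting approach comes from using $\binom{n}{k} \leq n^k$, not from the approach itself; your tighter binomial estimates recover the same exponent. Your treewidth step (feedback vertex set of size at most the cyclomatic number, then pad a width-1 decomposition of the residual forest) is a standard alternative to the paper's Lemma~\ref{lem:excess} (spanning tree plus one-edge-at-a-time), and gives the same $\tw \leq m - s + 2$ for connected $H$. Both proofs then reduce arbitrary subgraphs to connected components and union-bound over all sizes $s \leq k$.

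Two small remarks. First, as in the paper's own computation, your constant actually controls $k(1+\log d)/\log n$ rather than $k\log d/\log n$; these are interchangeable for $d$ bounded away from $1$, which is the intended regime. Second, you write $C = C(\epsilon,d)$, but since $d$ may grow with $n$ (up to $n^{\epsilon/2}$, above which the statement is vacuous), you should note that $C$ can be chosen depending on $\epsilon$ alone, with the $\log d$ dependence entirely absorbed into the factor $k\log d/\log n$ in $T$.
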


Since it is always the case that $t_k(G)\leq k$, the upper bound in the Theorem above becomes trivial if $d \geq n^{\Omega(1)}$. Also observe that the Theorem does not hold for arbitrary $k \leq n$, as for $k = n, t_k(G)=\Omega(n)$ w.h.p. In terms of lower bounds, we have  the following:

\begin{theorem}
Suppose $p=d/n$ and $d>1+\delta$ where $\delta>0$ is a constant (not depending on $n$). Suppose $k \leq O(n/\log n)$; then, w.h.p.
\begin{align*}
    t_k(G) \geq \Omega\left(\frac{k}{\log n}\right).
\end{align*}
\end{theorem}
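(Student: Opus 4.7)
The plan is to exhibit, as a subgraph of $G$, a subdivision of a bounded-degree expander on $s := \Theta(k/\log n)$ vertices. Fix a $3$-regular expander $H$ on $s$ abstract vertices; by standard expansion-based lower bounds (no small balanced separator), $\tw(H) = \Omega(s)$. We realize a copy of $H$ inside $G$ by mapping its vertices to ``hub'' vertices of $G$ and its edges to internally vertex-disjoint paths in $G$, each of length $O(\log n)$. The resulting subgraph $G'$ then has at most $s + (3s/2) \cdot O(\log n) \leq k$ vertices (by choosing the constant in the definition of $s$ small enough). Since treewidth is invariant under edge subdivision for graphs of treewidth at least $2$, we conclude $\tw(G') = \tw(H) = \Omega(k/\log n)$. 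Extending $G'$ to a subgraph of cardinality exactly $k$ by adding arbitrary extra vertices of $G$ preserves this lower bound by subgraph monotonicity of treewidth, so $t_k(G) \geq \Omega(k/\log n)$.

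To carry out the realization, I would first select $s$ vertices of degree at least $3$ in $G$ as hub vertices; since $G(n,d/n)$ with $d > 1+\delta$ w.h.p.\ contains $\Omega(n)$ such vertices in its giant component (the Poisson limit of the degree distribution gives a positive probability of degree $\geq 3$, concentrated by standard Chernoff bounds), this step is immediate. Second, identify the abstract vertex set of $H$ with these hubs and realize each edge of $H$ by a short path in $G$, constructed greedily. Having built $\ell < 3s/2$ paths using at most $O(\ell \log n) = o(n)$ internal vertices in total, I run a BFS between the next hub pair in the graph obtained from $G$ by deleting the internal vertices of previously constructed paths. Because the giant component of $G(n,d/n)$ has diameter $O(\log n/\log d)$ and remains connected with diameter $O(\log n)$ after removing any $o(n)$ vertices, this BFS terminates in $O(\log n)$ steps and yields the next short path.

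The main obstacle is the robustness of the giant component under sublinear vertex deletion: at each greedy step, the residual graph must still contain a short path between the next pair of hubs. I would handle this by combining standard second-moment estimates on BFS-ball sizes in $G(n,d/n)$ with a concentration argument showing that deleting any $o(n)$ vertices from $G(n,d/n)$ leaves a connected subgraph of size $(1-o(1))n$ and diameter $O(\log n)$, while choosing the initial hub set to be suitably well-spread so that no hub is prematurely isolated. Careful bookkeeping then ensures that path lengths stay $O(\log n)$ throughout the construction, keeping the total vertex budget at $O(k)$ and delivering the claimed lower bound.
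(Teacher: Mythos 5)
Your high-level plan---embed a bounded-degree expander $H$ on $s=\Theta(k/\log n)$ vertices into $G$ with each ``branch vertex'' and each connecting path using $O(\log n)$ vertices, so that the resulting $O(k)$-vertex subgraph contains $H$ as a (topological) minor and hence has treewidth $\Omega(s)$---is the same idea the paper uses. The paper cites a black-box embedding theorem of Krivelevich--Nenadov: $G(n,(1+\delta)/n)$ w.h.p.\ contains an $\Omega(n)$-vertex subgraph that is an $\alpha$-expander, and any such expander admits a minor-embedding of any $H$ with $O(n/\log n)$ vertices and edges with embedding width $O(\log n)$. You instead try to re-derive the embedding from scratch via a greedy path-building argument in $G$.

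The gap is in the robustness claim you rely on: that ``deleting any $o(n)$ vertices from $G(n,d/n)$ leaves a connected subgraph of size $(1-o(1))n$ and diameter $O(\log n)$.'' This is false. First, for any constant $d>1+\delta$ the giant component has size $\gamma n$ with $\gamma=\gamma(\delta)<1$, not $(1-o(1))n$. More importantly, the giant component of $G(n,d/n)$ is not a vertex expander: it has constant-size separators (e.g.\ cut vertices and small cuts of the $2$-core, plus pendant trees), so an adversarial deletion of $o(n)$ vertices can disconnect it or blow up the diameter of the surviving piece. Even though your deletions are not arbitrary but are internal vertices of previously built paths, nothing in the greedy construction prevents a path from passing through a small cut, after which the next BFS can fail or return a long path. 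The standard repair---and the whole point of the Krivelevich--Nenadov machinery the paper invokes---is to first restrict attention to an $\Omega(n)$-vertex bounded-degree $\alpha$-expander subgraph of $G$. Inside such an expander, deleting $o(n)$ vertices provably leaves a large expander, so one retains $O(\log n)$ diameter and can iterate. Your proposal needs this intermediate pass-to-an-expander step (or a direct citation of the embedding theorem, as the paper does) to close the argument.

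Two smaller remarks. One, you should embed $H$ as a minor, not insist on a subdivision: the paper's embedding lemma gives vertex-disjoint connected sets, which is all that is needed since minors already lower-bound treewidth; insisting on internally disjoint geodesic paths is an unnecessary restriction. Two, the step ``extend $G'$ to a subgraph of exactly $k$ vertices'' is fine but unnecessary, since $t_k(G)$ is defined as a maximum over subgraphs of size $k$ and treewidth is monotone under adding isolated vertices; the paper simply exhibits a subgraph with $O(k)$ vertices.
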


More details can be found in Section~\ref{sec:random}. Our upper and lower bounds for the local treewidth of $G(n, d/n)$ also extend to the random $d$-regular graph $G(n, d)$--details can be found in Subsection~\ref{sec:random-reg}.

For noisy trees, we have the following results.

\begin{theorem}
Let $T$ be an $n$-vertex tree with maximum degree $\Delta$. Let $T'$ be a noisy tree obtained from $T$. Then w.h.p.
\begin{align*}
    t_k(T') \leq 3 + O\left(\frac{k( \log k+\log \Delta)}{\log n}\right).
\end{align*}
\end{theorem}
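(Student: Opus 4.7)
I would prove the theorem via a balanced-separator approach. Recall the well-known fact that if every subgraph of a graph $G$ admits a $2/3$-balanced vertex separator of size at most $s$, then $\tw(G)=O(s)$. It therefore suffices to exhibit, for every induced subgraph $T'[S']$ with $S' \subseteq S$ and $|S'|=m \leq k$, a balanced vertex separator of size $1 + O(m(\log m + \log\Delta)/\log n)$.

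For the separator, start with the structure of $T[S']$: let $C$ be the largest connected component of $T[S']$, which is itself a subtree of $T$, and let $v$ be a centroid of $C$. Removing $v$ breaks $C$ into pieces each of size $\leq |C|/2$, and since the remaining components of $T[S']$ have size $\leq m - |C|$, one verifies that $\{v\}$ is already a $2/3$-balanced separator of $T[S']$. To convert this into a balanced separator of $T'[S'] = T[S'] \cup R[S']$ (where $R$ denotes the noise-edge graph), augment $\{v\}$ with a vertex cover of the noise edges that cross the two halves produced by the $T[S']$-split; the resulting set $X$ separates $T'[S']$ into the same halves and has size $\leq 1 + X_C$, where $X_C$ is the number of such cross-noise edges at the centroid split of $C$.

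The probabilistic heart of the proof is bounding $X_C$ uniformly over subtrees $C$ of $T$. By the classical estimate, $T$ has at most $n(e\Delta)^{m-1}$ subtrees on $m$ vertices; for any fixed subtree of size $m$ and any centroid split into sides $A, B$, the cross-noise count is $\mathrm{Bin}(|A||B|,1/n) \leq \mathrm{Bin}(m^2/4,1/n)$. Using the Chernoff bound $\Pr[\mathrm{Bin}(m^2/4,1/n)\geq \ell]\leq (em^2/(4n\ell))^\ell$ and setting $\ell = Cm(\log m + \log\Delta)/\log n$ for a sufficiently large constant $C$, the union bound over subtrees is driven to $o(1)$. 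Consequently, w.h.p.\ every subtree of size $m \leq k$ has $X_C \leq O(m(\log m + \log\Delta)/\log n)$, giving balanced separators of size $1 + O(k(\log k + \log\Delta)/\log n)$ for every subgraph of $T'[S]$, and the balanced-separator characterization yields the theorem. The main obstacle I anticipate is producing the clean constant $3$ prefix: the naive recursive balanced-separator argument gives $\tw = O(s \log n)$, which is too weak, so one must invoke the sharper constant-factor conversion of treewidth to balanced-separator size (and also check the small-$m$ edge cases where $O(m(\log m+\log \Delta)/\log n)$ drops below $1$).
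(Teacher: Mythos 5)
Your proposal takes a genuinely different route from the paper --- the paper uses an ``edge excess'' principle (a connected graph on $m$ vertices with $m-2+\ell$ edges has treewidth at most $\ell$) and then union-bounds over subgraphs $H$ of $T$ with $m$ vertices and $j$ connected components, paying the probability $\approx n^{-(j-1)}$ that the noise edges connect $H$ into a single component while also contributing the excess --- whereas you argue via balanced separators. However, your argument has a genuine gap in the union bound.

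The issue is that the vertex set $S'$ of a $k$-vertex subgraph of $T'$ with potentially large treewidth need not induce a subtree of $T$: $T[S']$ will generally be a \emph{forest} with several components, glued together inside $T'$ by random noise edges. Your probabilistic step union-bounds only over the $n(e\Delta)^{m-1}$ \emph{subtrees} of $T$, which is precisely the class that does not capture this. Relatedly, the quantity you control, $X_C$ --- the number of noise edges crossing a centroid split of the single subtree $C$ --- is not the quantity you need: to make $\{v\}\cup(\text{vertex cover})$ a balanced separator of $T'[S']$ you have to cover \emph{all} noise edges crossing the two halves of $S'$, including edges between different components of $T[S']$, and your bound never touches these. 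If you try to repair this by union-bounding over all $\binom{n}{m}$ choices of $S'$, the entropy $m\ln(n/m)$ overwhelms the Chernoff tail $e^{-\Theta(\ell\ln(n\ell/m^2))}$ unless $\ell=\Omega(m)$, which is the trivial bound. The only way to make the accounting close is to do what the paper does: enumerate subgraphs of $T$ by their component counts (Bagchi-type bound per component), and charge the $n^{-(j-1)}$ probability of the connecting noise edges against the extra combinatorial choices. Once that bookkeeping is in place, the balanced-separator framing is no longer buying anything over the paper's direct edge-excess argument, which also produces the clean additive constant that you correctly anticipate would be awkward to extract from the separator-to-treewidth conversion.
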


Observe that the upper bound in the Theorem is trivial if $k, \Delta$ are $n^{\Omega(1)}$. As a result, in our proofs we will assume $k, \Delta \leq n^{\epsilon},$ for sufficiently small $\epsilon>0$. Our results can be extended to the case where each non-edge is added with probability $c/n$ for $c>1$. Similar ideas (which are omitted) yield the upper bound:

\begin{align*}
    t_k(T') \leq 3 + O\left(\frac{k( \log k+\log \Delta+\log c)}{\log n}\right).
\end{align*}

We also provide a lower bound, showing that up to the $\log k, \log \Delta$ terms, the upper bound above is tight. Namely, the noisy path has w.h.p. local treewidth of order $\Omega(k/ \log n)$. For more details on the lower and upper bounds please see Section~\ref{sec:noisy}.

% We proceed to describe our results regarding edge removal to contain a contagious process over a network as well as the contagious processes considered.

\subsection{Contagious process and edge deletion problems}\label{sub:contagion}
The local treewidth results outlined above prove useful in the context of two edge deletion problems we study. These problems arise when considering the evolution of a contagious processes over an undirected graph.

We focus on the $r$-neighbor bootstrap percolation model \cite{chalupa1979bootstrap}.
\begin{definition}\label{def:boot}
In $r$-neighbor bootstrap percolation we are given an undirected graph $G=(V,E)$ and an integer \emph{threshold} $r\geq 1$. Every vertex is either \emph{active} (we also use the term infected) or \emph{inactive}; a set of vertices composed entirely of active vertices is called active. Initially, a set of vertices called \emph{seeds}, $A_0$, is activated. A contagious process evolves in discrete steps, where for integral $i > 0$,

\begin{align*}
    A_i=A_{i-1}\cup \{v \in V:|N(v)\cap A_{i-1}|\geq r\}.
\end{align*}

Here, $N(v)$ is the set of neighbors of $v$. In words, a vertex becomes active in a given step if it has at least $r$ active neighbors. 
An active vertex remains active throught the process and cannot become inactive. Set

\begin{align*}
    \langle A_0 \rangle=\bigcup_i A_i.
\end{align*}

The set $\langle A_0 \rangle$ is the set of nodes that eventually get infected from $A_0$ in $G$. Clearly, $\langle A_0 \rangle$ depends on the graph $G$, so we sometimes write $\langle A_0 \rangle_G$ to call attention to the underlying graph. We say a vertex $v \in V$ gets \emph{activated} or \emph{infected} from a set of seeds $A_0$ if $v \in \langle A_0 \rangle$. 
\end{definition}

It is straightforward to extend this definition to the case where every vertex $v$ has its own threshold $t(v)$ and a vertex is infected only if it has at least $t(v)$ active neighbors at some point. As is customary in bootstrap percolation models, we usually assume that all thresholds are larger than $1$. Now, given a network with an evolving contagious process, we introduce the stopping contagion problem:

\begin{definition}[Stopping Contagion]
    In the stopping contagion problem, we are given as input a graph $G=(V,E)$ along with two disjoints sets of vertices, $A, B \subseteq V$. Given that the seed set is $A$, the goal is to compute the minimum number of edge deletions necessary to ensure that no vertices from $B$ are infected. In other words, we want to make sure $\langle A \rangle_{G'} \cap B=\phi$, where $G'$ is the graph obtained from $G$ after edge deletions. Given an additional target parameter, $\ell$, the corresponding decision problem asks whether it is possible to ensure no vertices from $B$ are infected by deleting at most $\ell$ edges.
\end{definition}

Next we consider the setting where given a set of infected nodes, we want to remove the minimal number of edges to ensure no more than $s$ additional vertices are infected.

\begin{definition}
    In the minimizing contagion problem, we are given a graph $G=(V,E)$, a subset of vertices $A \subseteq V$ and a parameter $s$. Given that the seed set is $A$, we want to compute the minimum number of edge deletions required to ensure at most $s$ vertices in $V \setminus A$ are infected. If $G'$ is the graph obtained from $G$ by edge deletions, then this condition is equivalent to requiring $|\langle A \rangle_{G'}| \leq |A|+ s$. In the decision problem, we want to decide if it is possible to ensure $ |\langle A \rangle_{G'}| \leq |A|+ s$ with at most $\ell$ edge deletions.
\end{definition}

Both stopping contagion and minimizing contagion are NP-complete, and stopping contagion remains NP-hard even if $|A|=2$ and $|B|=1$. For complete proofs, please refer to the Appendix.
\subsection{Algorithmic results}\label{sub:algorithms}
For minimizing contagion, current algorithmic ideas~\cite{cordasco2021parameterized} can be used to prove that if $|A|$ and the optimal solution are of size $O(k)$ the problem can be solved in time $2^{O(k)}\poly(n)$ on $n$-vertex graphs. No such algorithm, parameterized by $|A|$ and the size of the optimal solution, is known for stopping contagion. Using our upper bounds for local treewidth, however, we can prove:

\begin{theorem}
Let $\epsilon$ be a constant in $(0,1)$. Suppose that $k \leq n^{1-\epsilon}$ and that every vertex has threshold greater than $1$. Let $G:=G(n,p)$ where $p=d/n$. Assuming $d$ is a constant, we have that w.h.p. both minimizing contagion and stopping contagion can be solved in $G$ in time $2^{o(k)}\poly(n)$.
\end{theorem}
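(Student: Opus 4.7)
The plan is to combine the three ingredients flagged by the authors: the local treewidth bound (Theorem 1), the bounded spread property, and a tree-decomposition DP for bootstrap percolation. Write $S := \langle A \rangle_G$ for the set of vertices ever infected from $A$ in $G$, and set $H := G[S]$.

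First, bounded spread gives $|S| \leq c|A| \leq ck$ w.h.p.\ for some constant $c = c(d,r)$. Since $\langle A \rangle_{G'} \subseteq \langle A \rangle_G = S$ for every subgraph $G' \subseteq G$, an edge of $G$ with an endpoint outside $S$ neither causes nor prevents any relevant infection, so minimizing (resp.\ stopping) contagion on $(G, A, s)$ (resp.\ $(G, A, B)$) has the same optimum as the same problem on $(H, A, s)$ (resp.\ $(H, A, B \cap S)$). Both $S$ and $H$ can be computed in polynomial time by simulating the cascade in $G$.

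Next, applying Theorem 1 to $H$, viewed as a subgraph of $G$ on $|S| = O(k) \leq n^{1-\epsilon'}$ vertices, yields w.h.p.
\[
\tw(H) \leq t_{|S|}(G) \leq 3 + O\!\left(\frac{|S|\log d}{\log n}\right) = O\!\left(\frac{k}{\log n}\right) = o(k),
\]
since $d$ is constant, and an approximate tree decomposition of width $w := O(\tw(H))$ can be produced in $2^{O(w)}\poly(n)$ time by a standard treewidth-approximation algorithm. On this decomposition, one runs a bottom-up DP for the edge-deletion problem: the state at each bag records, for each bag vertex $v$, its eventual activation status and, if inactive, a count capped at its threshold $t(v)$ of the number of already processed active neighbors, together with the deletion choices on edges internal to the bag. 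Any vertex that can ever be activated satisfies $t(v) \leq \deg_H(v)$, and the maximum degree of $G(n,d/n)$ is $O(\log n/\log\log n)$ w.h.p., so the per-bag state has size $(\log n)^{O(w)}$ and the DP runs in $2^{O(w \log\log n)}\poly(n) = 2^{o(k)}\poly(n)$.

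The main obstacle is this last step: bootstrap percolation is a time-evolving process, and one has to argue that tracking only each vertex's final status together with a capped active-neighbor count correctly resolves the cascade, while simultaneously keeping the per-vertex state subexponential despite thresholds not being a priori bounded. Both issues rely on monotonicity of the process and on the $O(\log n)$ maximum degree of sparse random graphs; without the latter, thresholds could in principle be as large as $\Theta(n)$ and the state would blow up. The remaining pieces---bounded spread and the local treewidth bound---are direct appeals to facts already established elsewhere in the paper.
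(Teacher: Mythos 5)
Your proposal is correct and follows essentially the same route as the paper: (i) reduce to the induced subgraph on $\langle A\rangle$ via bounded spread, (ii) apply the local-treewidth upper bound of Theorem~\ref{thm:upperbound} to get width $O(k/\log n)$ for this subgraph, and (iii) run a tree-decomposition DP whose per-bag state is controlled by the $O(\log n/\log\log n)$ maximum degree of $G(n,d/n)$, giving $2^{O(w\log\log n)}\poly(n) = 2^{o(k)}\poly(n)$. The only cosmetic difference is that the paper's appendix makes step (iii) precise by subdividing every edge (so edge deletions become vertex immunizations, and one solves a Generalized Influence Diffusion Minimization instance with per-vertex state of size $3\cdot\min\{r,\Delta\}$), whereas you sketch a direct edge-deletion DP and append "deletion choices on edges internal to the bag" to the state; if taken literally the latter would add a $2^{\Theta(w^2)}$ factor, which for $k$ close to $n^{1-\epsilon}$ dominates $2^{o(k)}$, so the subdivision trick (or equivalently deciding each edge exactly once at the introduce node and not storing it in the state) is what keeps the running time as claimed.
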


\begin{theorem}
Suppose that $k \leq n^{\epsilon}$ for sufficiently small $\epsilon \in (0,1)$ and that every vertex has a threshold greater than $1$. Let $T'$ be a noisy tree where the base tree $T$ has maximum degree $\Delta=O(1)$. Then w.h.p. both minimizing contagion and stopping contagion can be solved in $T'$ in time $2^{o(k)}\poly(n)$.
\end{theorem}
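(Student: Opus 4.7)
The plan is to combine the three ingredients highlighted in the introduction: the local-treewidth bound for noisy trees, a bounded-spread estimate for the $r \geq 2$ bootstrap process on $T'$, and a standard tree-decomposition dynamic program that solves both edge-deletion problems on graphs of bounded treewidth. Concretely, I would first use bounded spread to reduce both problems to an $O(k)$-vertex subgraph of $T'$, then apply the local-treewidth theorem to that subgraph, and finally run the bounded-treewidth dynamic program on it.

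The reduction is clean once bounded spread is in hand. Since edge deletions can only shrink $\langle A\rangle_{T'}$, every optimal edge deletion set is contained in the edge set of $H := T'[\langle A \rangle_{T'}]$, and every vertex whose infection status can change lies inside $V(H)$; solving the two problems on $H$ is equivalent to solving them on $T'$. To bound $|V(H)|$, I would establish that for $k \leq n^{\epsilon}$ with sufficiently small $\epsilon$, a base tree $T$ of constant maximum degree $\Delta$, and thresholds $r \geq 2$, we have $|\langle A \rangle_{T'}| = O(k)$ w.h.p. In $T$ alone, the threshold-$2$ cascade from $A$ is already contained to $O(\Delta k) = O(k)$ vertices by a standard charging argument, since a newly activated vertex can be charged to one of the two activated neighbors that triggered it; the additional spread due to the noisy edges is governed by the fact that a fresh vertex $v$ picks up two or more noisy edges into a fixed infected set $S$ with probability at most $O(|S|^2/n^2)$, so the expected number of new infections attributable to the noisy edges per activation round is at most $O(|S|^2/n)$.

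With $H$ of size $O(k)$ in hand, the local-treewidth theorem for noisy trees with $\Delta = O(1)$ gives
\begin{align*}
\tw(H) \leq t_{|V(H)|}(T') \leq 3 + O\!\left(\frac{k \log k}{\log n}\right),
\end{align*}
which is $o(k)$ once $\epsilon$ is taken small enough that $\log k / \log n$ is correspondingly small. A tree decomposition of $H$ of comparable width can be computed in $\poly(n)$ by a standard constant-factor approximation algorithm for treewidth. A dynamic program over this decomposition then resolves both stopping contagion and minimizing contagion: at each bag, the state records which incident edges have been deleted together with the current infection pattern on the bag's vertices, yielding $2^{O(\tw(H))}$ states per bag and polynomial per-transition work. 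The overall running time is therefore $2^{O(\tw(H))} \poly(n) = 2^{o(k)} \poly(n)$.

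The hard part is the bounded-spread estimate. Threshold $r \geq 2$ renders cascades in the tree $T$ itself harmless, but the long-range noisy edges can, in principle, chain: two noisy edges landing on the same vertex create a new infection, whose own noisy edges may then combine with others to keep the process going. I would control this by stochastic domination against a subcritical branching-type process whose per-vertex offspring mean is of order $k^2/n$ in the $k \leq n^{\epsilon}$ regime, so the process dies out quickly and its total progeny is $O(k)$ with high probability. Combined with the local-treewidth bound and the bounded-treewidth dynamic program described above, this delivers the claimed $2^{o(k)} \poly(n)$ running time.
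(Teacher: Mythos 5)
Your high-level plan matches the paper's: reduce to the subgraph $H = T'[\langle A\rangle_{T'}]$, bound its size and treewidth, and run a bounded-treewidth dynamic program. The local-treewidth bound and the bounded-treewidth DP are the same two ingredients the paper uses (Theorems~\ref{thm:smallworld} and~\ref{thm:treewidth}). The genuinely different part is your bounded-spread argument, and there is a real gap there. The paper proves $|\langle A\rangle_{T'}| = O(k)$ deterministically from local sparsity: w.h.p.\ \emph{every} vertex subset of size at most $n^{\epsilon}$ in $T'$ spans at most its size plus a lower-order excess (Theorem~\ref{thm:edgespan}); each non-seed that becomes infected needs at least two infected neighbors; and a forest has no edge slack (Lemmas~\ref{lem:spreadtree} and~\ref{lem:edgeaddition}). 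If the infected set were $\omega(k)$, it would have to span too many edges, a contradiction. Because the sparsity statement quantifies over all small subsets, this works even when $A$ is chosen adversarially \emph{after} the noisy edges are revealed, which the paper explicitly requires. Your stochastic-domination-by-a-subcritical-branching-process argument does not have this uniformity, and it also glosses over the main technical difficulty: a vertex can be activated via one tree edge and one noisy edge, after which the activation can cascade along the deterministic tree structure with no further noisy edges involved. The ``offspring'' of a noisily infected vertex is therefore not a local quantity — it can be an entire tree cascade whose length depends on how the already-infected set sits inside $T$ — and conditioning the noisy edges on the evolving infected set while still extracting a clean domination is not carried out. If you try to repair this by noting that every noisily-infected vertex is an endpoint of a noisy edge spanned by $\langle A\rangle$ and bounding the number of such edges, you are back to the local-sparsity count, i.e.\ the paper's proof; the branching-process framing adds nothing and invites dependency errors.

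A secondary, smaller issue: your DP sketch records, in each bag's state, ``which incident edges have been deleted.'' The number of edges incident to a bag is not controlled by $O(\tw(H))$ once degrees are super-constant, and in $T'$ the noise layer $G(n,1/n)$ has maximum degree $\Theta(\log n/\log\log n)$ w.h.p. The paper instead subdivides every edge and tracks residual thresholds per bag vertex, yielding a $\min\{r,\Delta\}^{O(\tau)}$ state-space factor; this is what lets the degree dependence enter multiplicatively in the exponent rather than via a naive enumeration of incident edge subsets. You should either adopt that subdivision trick or argue separately that the relevant degrees inside $H$ are small enough for your state encoding.
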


We stress that the set $A$ of seeds can be chosen in arbitrary way. In particular, an adversary can pick $A$ \emph{after} the random edges in our graph models have been chosen.  

The dependence of the running time on $n, k, d$ and $\Delta$ can be made explicit: for precise statements, please see Section~\ref{sec:algorithms}. Algorithms for grids and planar graphs are presented in Section~\ref{sec:algorithms} as well.

For our purpose, to translate local treewidth bounds to algorithmic results, we need an algorithm for solving stopping contagion and minimizing contagion on graphs of low treewidth. We provide such an algorithm that runs in exponential time in the treewidth, assuming the maximum degree is not too large, using ideas from~\cite{cordasco2021parameterized}. More details can be found in Section~\ref{sec:treewidth}.

\subsection{Our techniques}

Our upper bounds for the local treewidth build on a simple ``edge excess principle": A $k$-vertex connected graph with $k+r$ edges has treewidth at most $r+1$. As the treewidth of a set of connected components is the maximum treewidth of a component, it suffices to analyze the number of edges in connected subgraphs of the random graphs we study. For $G(n, p)$ this is straightforward, but for noisy trees it is somewhat more involved. We find it easier to first analyze the edge excess of connected subgraphs, before considering connecting edges that allow us bound the excess of arbitrary subgraphs.

A key component in our lower bound is the simple fact that if $H$ is a minor of $G$ then $\tw(G) \geq \tw(H)$. Therefore it suffices to prove the existence of large treewidth subgraphs that are minors w.h.p. of random graphs and noisy trees. 
Recall that an $n$-vertex graph is called an $\alpha$-\emph{expander} if there exists $\alpha \in (0,1)$ such that every subset $S$ of vertices with at most $n/2$ vertices has at least $\alpha |S|$ neighbors not in $S$. We use the fact~\cite{krivelevich2019expanders} that for any graph $H$ with $k$ vertices and edges,
assuming $k=O(n/\log n)$ an $n$-vertex expander $G$ has an embedding\footnote{See Subsection~\ref{sec:def} for further details on minor-theoretic concepts we use.} of $H$ as a minor in $G$. Furthermore, every connected subgraph of $G$ corresponding to a vertex in $H$ is of size $O(\log n)$.  
The lower bound then follows as it is known that $G(n,\frac{1+\delta}{n})$ contains with high probability ~\cite{krivelevich2018finding,krivelevich2019expanders} a subgraph with $\Omega(n)$ vertices that is an $\alpha$-expander for an appropriate choice $\alpha$. Similar ideas are used to prove the existence of large minors with linear treewidth in the noisy trees (e.g., the noisy path).

Our algorithms for minimizing contagion and stopping contagion in graphs of bounded treewidth build on techniques designed to exploit the tree-like nature of low treewidth graphs, sharing similarities to algorithms for target set selection in ~\cite{ben2011treewidth}, where target set selection is the problem of finding a minimal set that infects an entire graph under the bootstrap percolation model. More directly, our problem resembles the Influence Diffusion Minimization (IDM) studied in \cite{cordasco2021parameterized}, where the goal is to minimize the spread of the $r$-neighbor bootstrap percolation process by preventing spread through vertices. After subdividing edges, minimizing contagion essentially reduces to IDM, albeit with additional restrictions on the vertices we can immunize (only vertices that belong to the "middle" of a subdivided edge can be deleted); we therefore solve a generalization of the IDM problem, see Definition~\ref{def:GIDM}, and use this to provide efficient algorithms for the problems we care about.

At a high-level, our algorithm works by solving the stopping contagion recursively on subgraphs and then combining these solutions via dynamic-programming until we have a solution for the whole graph. To combine subproblems successfully, at each step we explicitly compute solutions for all possible states of vertices in a bag. While this could take exponential time in general, this approach provides an efficient algorithm in graphs with bounded treewidth.

Our proof of bounded spread in noisy trees builds works by proving that small subsets of such trees contain few edges~\cite{coja2014contagious,feige2017contagious}. Since every non seed vertex needs at least two vertices to get infected, small contagious sets require small subsets that contain too many edges. Therefore, one can prove that small sets of seeds cannot infect too many vertices; the proof of small trees' local sparsity is similar to the proof that w.h.p. such noisy trees have small local treewidth. 

\subsection{Related work}

While the idea to remove edges or vertices to contain an epidemic has been studied before~\cite{ren2019generalized,braunstein2016network,aspnes2006inoculation}, most of these works focus using edge or vertex deletions that break the graph to connected components of sublinear (or even constant) size~\cite{enright2018deleting,ren2019generalized,braunstein2016network}. Recently approximation algorithms for edge deletion problems that arise in controlling epidemics has been studied in~\cite{babay2022controlling} for the SIR epidemic model. In particular,~\cite{babay2022controlling} studies the problem of deleting a set of edges of weight at most $B$ that minimizes the set of infected nodes after edges deletions. All these works consider a different contagion model from the $r\geq 2$ bootstrap percolation model studied here.

Bootstrap percolation was first introduced by statistical physicists~\cite{chalupa1979bootstrap} and has been studied on a variety of graphs such as grids~\cite{balogh2012sharp}, hypercubes~\cite{balogh2006bootstrap,morrison2018extremal}, random graphs~\cite{feige2017contagious}, graphs with a power law degree distribution~\cite{amini2014bootstrap,schoenebeck2016complex}, Kleinberg's small world model~\cite{ebrahimi2015complex} and trees~\cite{riedl2012largest}.

The fixed parameter tractability of minimizing contagion with respect to \emph{vertex} deletions, as opposed to edge deletions, has been thoroughly investigated with respect to various parameters such as the maximum degree, treewidth, and the size of the seed set $k$ in~\cite{cordasco2021parameterized}. The authors of ~\cite{cordasco2021parameterized} present efficient algorithms for minimizing contagion for graphs of bounded maximum degree and treewidth. With respect to $k$, using ideas from FPT algorithms for cut problems~\cite{fomin2013parameterized}, they give a $2^{k+\ell}\poly(n)$ algorithm for the case where the set of seeds is of size $k$ and there is a solution of size $\ell$ to the problem. Their algorithm can be easily adapted to the case of edge deletions: see Theorem~\ref{thm:exp}. We are not aware of the stopping contagion problem studied before, nor are we aware of previous studies of the minimizing contagion problem in random graphs. In order to deal with both stopping contagion and minimizing contagion for graphs of bounded treewidth, we build on algorithmic ideas from~\cite{ben2011treewidth}. The NP-hardness of minimizing contagion with respect to vertex deletion is proved in~\cite{cordasco2021parameterized}---our proof for the NP-hardness of the edge deletion version of minimizing contagion was found concurrently and independently; the proof is different from the proof appearing  in~\cite{cordasco2021parameterized}.

There are two regimes of interest for the study of treewidth in sparse random graphs. For the subcritical regime $p \leq d/n$ with $d<1$, $G(n, p)$ has w.h.p. unicyclic connected components of size $O(\log n)$~\cite{erdHos1960evolution} and hence has treewidth at most $2$. For the supercritical regime with $p \geq d/n$ and $d>1$, $G(n,p)$ has w.h.p. a giant component of size $\Omega(n)$~\cite{erdHos1960evolution} and determining the treewidth is more complicated. Kloks~\cite{kloks1994treewidth} proved that the treewidth of $G(n, d/n)$ is $\Omega(n)$ w.h.p. for $d \geq 2.36$. His result was improved by Gao~\cite{gao2012treewidth} who showed that for $d \geq 2.16$, the treewidth of $G(n,d/n)$ is $\Omega(n)$ with high probability. Gao asked if his result can be strengthened to prove that $G(n, d/n)$ has treewidth linear in $n$ w.h.p. for any $d>1$; this was later shown in in~\cite{lee2012rank}. A different and somewhat simplified proof establishing that the treewidth of $G(n,d/n)$ is $\Omega(n)$ w.h.p. was given in~\cite{perarnau2014tree}.  Finally, the fine-grained behavior of treewidth of $G(n, (1+\epsilon)/n)$ was studied in~\cite{do2022note} where it was shown that for sufficiently small $\epsilon$, the treewidth of $G(n, (1+\epsilon)/n)$ is w.h.p.

\begin{align*}
    \Omega\left(\frac{\epsilon^3}{\log{1/\epsilon}}\right)n.
\end{align*}

The first lower bound for the treewidth of random regular graphs appears to be from~\cite{perarnau2014tree}: the authors prove that for every constant $d > d_0$ where $d_0$ is a sufficiently large constant, the treewidth of the random regular graph $G(n,d)$ is $\Omega(n)$ w.h.p. In~\cite{feige2016giant} it was also shown that random graphs with a given degree sequence (with bounded maximum degree) that ensure the existence of a giant component w.h.p. (namely a degree sequence satisfying the Molloy-Reed criterion~\cite{molloy1995critical}) have linear treewidth as well, which implies, using a different arguments from those in~\cite{perarnau2014tree}, that $G(n,d)$ for $d>2$ has linear treewidth w.h.p. A different proof for the linear lower bound of the treewidth of $G(n,d)$ for $d>2$ is given in~\cite{do2022note}.

Several papers have examined notions of local treewidth in devising algorithms for algorithmic problems such as subgraph isomorphism~\cite{eppstein2002subgraph,hajiaghayi2001algorithms,grohe2003local, frick2001deciding}. For example, Grohe~\cite{grohe2003local} defines a graph family $\mathcal{C}$ of having bounded local treewidth if there exists a function $f:\mathbb{N} \rightarrow \mathbb{N}$ such that for every graph $G=(V,E)$ in $\mathcal{C}$ and every integer $r$, for every vertex $v \in V$ the treewidth of the subgraph of $G$ induced on all vertices of distance at most $r$ from $v$ is at most $f(r)$. These works primarily focus on planar graphs and graphs avoiding a fixed minor. The only work we are aware of that has examined the local treewidth of random graphs is that of~\cite{dreier2018local}. Their main goal is to demonstrate that the treewidth of balls of radius $r$ around a given vertex depends only on $r$, as opposed to analyzing the local treewidth as function of $n,d$ and $k$ as we do here. We employ a similar edge excess argument to the one in~\cite{dreier2018local} although there are some differences in the analysis and the results: please see Section~\ref{sec:random} for more details. We are not aware of previous work lower bounding the local treewidth of random graphs. 

Embedding minors in expanders has received attention in combinatorics~\cite{krivelevich2009minors} and theoretical computer science, finding applications in proof complexity~\cite{austrin2022perfect}.
Kleinberg and Rubinfeld~\cite{kleinberg1996short} proved that if $G=(V,E)$ is a $\alpha$-vertex expander with maximum degree $\Delta$, then every graph with $n/\log^{\kappa} n$ vertices and edges is a minor of $G$ for a constant $\kappa>1$ depending on $\Delta$ and $\alpha$. Later it was stated~\cite{chuzhoy2019large} that $\kappa(\Delta, \alpha)=\Omega(\log^2(d)/ \log^2(1/\alpha))$. Krivelevich~\cite{krivelevich2019expanders} together with Nandov proved that if $G$ is an $\alpha$-vertex expander then it contains every graph with $cn/\log n$ edges and vertices for some universal constant $c>0$. 

The sparsity of random graphs as well as randomly perturbed trees was used in showing that these families have w.h.p. bounded expansion\footnote{Bounded expansion should not be confused with the edge expansion of a graph. For a precise definition please see~\cite{nevsetvril2012characterisations,nevsetvril2012sparsity}.}~\cite{nevsetvril2012characterisations,demaine2014structural}. These results are incomparable with our treewidth results: it is known that graphs with bounded maximum degree have bounded expansion and that $G(n,d/n)$ has bounded expansion w.h.p.~\cite{nevsetvril2012characterisations,nevsetvril2012sparsity} In contrast, there exist $3$-regular graphs with linear treewidth and as previously mentioned the treewidth of $G(n,d/n)$ is $\Omega(n)$.

\subsection{Future directions}
Our work raises several questions. We consider undirected unweighted graphs. However directed edges can be more accurate in modeling epidemic spread~\cite{allard2020role} and some edges might be more costly to move than others. 
Extending our algorithms to directed weighted graphs is an interesting direction for future research. 

Our upper and lower bounds for the local treewidth of $G(n,p)$ (with $p=d/n$) currently differ by a multiplicative factor of order $\log d$. We believe that for $k \leq n^{1-\epsilon}$ the local treewdith of $G(n,p)$ is w.h.p. $\Omega(k \log d/\log n)$: whether this is indeed the case is left for future work. Our upper bounds on the local treewidth of noisy trees can be made independent of the maximum degree of the tree; namely, for arbitrary trees, the local treewidth should be upper bounded w.h.p. by $O(k /\log n)$ assuming $k$ is not too large. Proving or disproving this however remains open. Understanding how well one can approximate minimizing contagion and stopping contagion in general graphs, as well as graphs with certain structural properties (e.g. planar graphs) is a potential direction for future research as well. Finally, it could be of interest to study if our bounds for local treewidth coupled with sophisticated algorithms for graphs with bounded local treewidth~\cite{frick2001deciding,grohe2003local,eppstein2002subgraph} could lead to improved running time for additional algorithmic problems in random graphs.
\subsection{Organization}
Our bounds regarding the local treewidth of $G(n,p)$ can be found in Section~\ref{sec:random}. Bounds for the local treewidth of noisy trees can be found in Section~\ref{sec:noisy}. A high-level description of our algorithms for graphs of bounded treewidth can be found in Section~\ref{sec:treewidth}: Complete details and proofs can be found in the Appendix. 
Our algorithms for random graphs and planar graphs can be found in Section~\ref{sec:algorithms}. Our bounds on the local treewidth and algorithms for planar and random graphs are independent from one another: Understanding our algorithms
does not require delving into the the details of the proofs regarding bounds on the local treewidth. Similarly, understating the algorithm for noisy trees (Theorem~\ref{thm:algorthm noisy}) does not require
reading the proofs of any of the Theorems or Lemmas preceding Theorem~\ref{thm:algorthm noisy} in Subsection~\ref{sec:NT}.
\subsection{Preliminaries}~\label{sec:def}
Throughout the paper $\log$ denotes the logarithm function with base $2$; we omit floor and ceiling signs to improve readability. All graphs considered are undirected and have no parallel edges. Given a graph $G=(V,E)$ and two disjoint sets of vertices $A,B$ we denote by $E(A,B)$ the set of edges connecting a vertex in $A$ to a vertex in $B$. For $A,B$ as above we denote by $N_G(A,B)$ the set of vertices in $B$ with a neighbor in $A$. For a subset of vertices $A \subseteq V$ and an edge $e$ we say that $A$ \emph{touches} $e$ if at least one of the endpoints of $e$ belongs to $A$. If both endpoints of $e$ belong to $A$ then we say that $A$ \emph{spans} $e$. 

A graph $H$ is a \emph{minor} of $G$ if $H$ can be obtained from $G$ by repeatedly doing one of three operations: deleting an edge, contracting an edge or deleting a vertex. We keep our graphs simple and remove any parallel edges that may form during contractions. It can be verified~\cite{nevsetvril2012sparsity} that a graph $H$ with $k$ vertices is a minor of $G$
if and only if there are $k$ vertex disjoint connected subgraphs of $G,C_1 \ldots C_k$ such that for every edge $(v_i,v_j)$ of $H$, there is an edge connecting a vertex in $C_i$ to a vertex of $C_j$.
We refer to the map mapping every vertex of $H,v_j$ to $C_j$ as an \emph{embedding} of $H$ in $G$; the maximum vertex cardinality of $C_i,1\leq i \leq k$ is called the \emph{width} of the embedding. We shall be relying on the well-known fact~\cite{nevsetvril2012sparsity,diestel2005graph} that if $H$ is a minor of $G$ than the treewidth of $G$ is lower bounded by the treewidth of $H$. 

We will also need the following definition of an edge expander:

\begin{definition}
Let $\alpha \in (0,1)$. A graph $G$ is an $\alpha$-expander if every subset of vertices $S$ with $|S| \leq n/2$ satisfies $$|N_G(S, V \setminus S)| \geq \alpha|S|.$$ 
\end{definition}

\section{Local treewidth of random graphs}\label{sec:random}
In this section we prove both an upper and lower bound for $t_k(G(n,p))$ that with high probability. We assume $k \leq n^{1-\epsilon}$ for a constant $\epsilon>0$.

%%%%%%%%%%%%%%%%%%%%%%%%%%%%%%%%%%%%%%%%%%

\subsection{Upper bound}

Our main idea in upper bounding $t_k(G)$ is to leverage the fact that $G(n,p)$ is locally sparse and that if a few edges are added on top of a tree, the treewidth of the resulting graph cannot grow too much. 

\begin{lemma}\label{lem:excess}
Let $G$ be a connected graph with $n$ vertices and $n-2+\ell$ edges. Then $\tw(G) \leq \ell$.
\end{lemma}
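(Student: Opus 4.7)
The plan is to build a tree decomposition of $G$ by starting from one for a spanning tree and then patching in the remaining edges one at a time. Since $G$ is connected, it contains a spanning tree $T$ with exactly $n-1$ edges, leaving $(n-2+\ell)-(n-1) = \ell-1$ extra edges $e_1,\ldots,e_{\ell-1}$ not in $T$.

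First I would fix a tree decomposition $(\hat T, \{X_i\})$ of $T$ of width at most $1$; any standard construction works (for instance, take the bags to be the edges of $T$ and connect two bags in $\hat T$ whenever the corresponding edges share an endpoint). The main step is then a standard edge-insertion observation: given any tree decomposition of a subgraph $G'\subseteq G$ and a new edge $(u,v)\in E(G)\setminus E(G')$ whose endpoints already appear in some bags, one can extend the decomposition to one for $G'+(u,v)$ in which the maximum bag size grows by at most $1$. Concretely, pick bags $X_a\ni u$ and $X_b\ni v$, let $P$ be the unique $X_a$--$X_b$ path in $\hat T$, and add $v$ to every bag of $P$. After this, $u$ and $v$ coexist in $X_a$, and property~(3) for $v$ is preserved because the set of bags containing $v$ grows from a connected subtree $S\ni X_b$ to the still-connected union $S\cup P$ (they share $X_b$). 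No bag gains more than one new vertex.

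Applying this insertion step sequentially to $e_1,\ldots,e_{\ell-1}$, the maximum bag size grows from $2$ to at most $2+(\ell-1) = \ell+1$, so the resulting tree decomposition of $G$ has width at most $\ell$, as claimed. I do not anticipate any real obstacle: the one point that needs mild care is checking property~(3) after each insertion, which reduces to the fact that the union of two connected subtrees of $\hat T$ sharing a node remains a connected subtree. If a self-contained inductive write-up is preferred, one can instead induct on $\ell$ with base case $\ell=1$ (a tree, treewidth $1$) and use the above edge-insertion move as the inductive step.
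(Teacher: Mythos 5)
Your proof is correct and follows essentially the same route as the paper: take a spanning tree (width $1$), then add the $\ell-1$ extra edges one at a time, noting each edge insertion increases treewidth by at most $1$. The only difference is that you spell out the edge-insertion argument (adding $v$ to the bags along the $X_a$--$X_b$ path) where the paper simply invokes it as a known fact.
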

\begin{proof}
Since $G$ is connected, it must have a spanning tree $T$ with $n$ vertices and $n - 1$ edges. The graph $G$ has exactly $\ell - 1$ additional edges; since adding an edge can increase a graph's treewidth by at most $1$, we immediately get the desired bound.
\begin{align*}
    \tw(G) &\le \tw(T) + \ell - 1 = \ell
\end{align*}
\end{proof}
We can now prove:
\begin{theorem}\label{thm:upperbound}
	Suppose that $k\leq n^{1-\epsilon}$. Then for $G = G(n, p)$ we have that w.h.p. for every $m \leq k$:
	\begin{align*}
		t_m(G) \leq 3 + O\left(\frac{m\log d}{\log n} \right).
	\end{align*}
\end{theorem}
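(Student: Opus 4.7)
The plan is to apply Lemma~\ref{lem:excess} componentwise: since the treewidth of any graph equals the maximum treewidth over its connected components, it suffices to show that with high probability every \emph{connected} $m$-vertex subgraph $H$ of $G$ with $m \leq k$ satisfies
\begin{align*}
  |E(H)| \;\leq\; m + 1 + C \cdot \frac{m \log d}{\log n}
\end{align*}
for a sufficiently large absolute constant $C=C(\epsilon)$; applying the lemma with $\ell = |E(H)| - m + 2$ then yields the claimed treewidth bound. So the entire problem reduces to controlling the \emph{excess} $|E(H)| - (m-1)$ of connected subgraphs.

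The main tool will be a first-moment calculation. For each $m$ and each target excess $\ell^{*}$, I would upper bound the number of connected $m$-vertex subgraphs of $G(n,p)$ having excess at least $\ell^{*}$ by the number of triples (subset $S \subseteq [n]$ of size $m$, spanning tree on $S$, set of $\ell^{*}$ additional edges on $S$), all present in $G$. Every bad connected subgraph contributes at least one such triple, so by linearity and Cayley's formula,
\begin{align*}
  \mathbb{E}[\#\text{bad subgraphs on }m\text{ vertices}]
  &\leq \binom{n}{m}\, m^{m-2} \binom{\binom{m}{2}}{\ell^{*}} p^{\,m-1+\ell^{*}} \\
  &\leq \frac{e^{m} n\, d^{m-1}}{m^{2}} \left( \frac{e m^{2} d}{2 \ell^{*} n} \right)^{\ell^{*}},
\end{align*}
using $\binom{n}{m}\leq (en/m)^{m}$ and the standard bound $\binom{N}{\ell^{*}} \leq (eN/\ell^{*})^{\ell^{*}}$.

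I would then take $\ell^{*} = \max(3,\, C m\log d/\log n)$ for $C$ large in terms of $\epsilon$. Because $m \leq n^{1-\epsilon}$ (and we may assume $d = n^{o(1)}$, as the theorem is trivial otherwise since $t_m(G)\leq m$), one obtains
\begin{align*}
  \frac{e m^{2} d}{2 \ell^{*} n} \;\leq\; \frac{e d \log n}{2 C \log d} \cdot n^{-\epsilon} \;=\; n^{-\epsilon(1-o(1))},
\end{align*}
so the factor $(em^{2}d/(2\ell^{*}n))^{\ell^{*}}$ contributes roughly $-\epsilon \ell^{*} \log n$ in the logarithm, which swallows the $m\log d + \log n$ coming from the prefactor once $C$ is chosen large enough. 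This yields an expected count at most $n^{-2}$ for each fixed $m$, and a union bound over $1\leq m\leq k\leq n$ finishes the proof.

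The main obstacle will be handling both regimes of $m$ uniformly: when $m \log d < \log n$, the constant ``$3$'' carries the bound and one uses $\ell^{*} = 3$ to check that even $O(m^{2}d/n)^{3}$ is small enough after multiplying by $\binom{n}{m} m^{m-2} p^{m-1}$; when $m \log d \geq \log n$, the growing term $C m\log d/\log n$ dominates and the argument above applies. Care is also needed near the boundary $d = n^{\Omega(1)}$, where the bound degenerates but the theorem is already trivial.
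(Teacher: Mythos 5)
Your proof takes essentially the same approach as the paper's: reduce to connected subgraphs, apply the edge–excess lemma (Lemma~\ref{lem:excess}), and bound the probability of a high-excess connected subgraph via a first-moment count over triples (vertex set, Cayley spanning tree, set of $\ell^*$ extra edges), then union bound. The paper packages this as a single sum over $s \le k$ (bounding it by $k$ times the dominant term) and then applies a union bound over $m$; you fix $m$, bound the expected count of bad connected $m$-vertex subgraphs directly, and union bound at the end. These are the same argument in slightly different dress. One small inaccuracy: to rule out treewidth above $3 + C m\log d/\log n$ via the lemma you should take the excess threshold to be $\ell^{*}=3+\lceil C m\log d/\log n\rceil$ rather than $\max(3, Cm\log d/\log n)$ (the latter gives treewidth at most $4+Cm\log d/\log n$, which is still $3+O(\cdot)$ but not quite what you wrote), and the split between the ``$\ell^{*}=3$'' regime and the growing regime happens at $m\approx\log n/(C\log d)$ rather than at $m\log d<\log n$. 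Neither affects the asymptotic conclusion, and your explicit treatment of the two regimes is actually more careful than the paper's, which leaves that verification implicit.
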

\begin{proof}
	Since the Theorem is obvious for $d=n^{\Omega(1)}$ we assume that $d \leq n^{\epsilon/2}$. 
	We first prove the statement for $m=k$.
	Given a graph $G$ with treewidth $t$, it is always possible to find a connected subgraph of $G$ with identical treewidth to $G$. In that spirit, rather than bounding the probability there exists some  $k$-vertex subgraph of $G$ with treewidth exceeding some $r$, we bound the probability some subgraph on $s \le k$ vertices is connected and has treewidth greater than $r$ in $G$.

	Fix some $S \subseteq V$ with exactly $s$ vertices. Note there are $s^{s - 2}$ possible spanning trees which could connect the vertices in $S$, each requiring $s - 1$ edges. While the resulting subgraph would be connected, its treewidth is only $1$. Therefore, $r$ additional edges would also be required to produce a subgraph with treewidth at least $r + 1$. Accounting for the ways to choose these edges, the probability the subgraph induced on $S$ is connected and has treewidth greater than $r$ is at most

	\begin{align*}
		s^{s - 2} \binom{\binom{s}{2}}{r} \left(\frac{d}{n}\right)^{r + s - 1}.
	\end{align*}

	This follows since each edge occurs independently with probability $p=d/n$. Now, we bound the probability that any such subset $S$ with at most $k$ vertices exists. To that end, we take a union bound over all $\binom{n}{s}$ possible subsets of $s$ vertices, letting $s$ range from $1$ to $k$. Putting this together and using the inequality ${a \choose b} \leq (ea/b)^b$ yields

	\begin{align*}
		\sum_{s = 1}^k \binom{n}{s} \times s^{s - 2} \binom{\binom{s}{2}}{r}
	\left(\frac{d}{n}\right)^{r + s - 1}
		&\le \frac{d^r}{n^{r - 1}} \sum_{s = 1}^k e^s \left(\frac{es^2}{2r}\right)^{r} d^s \\
		&\le \frac{d^r}{n^{r - 1}} k e^k \left(\frac{ek^2}{r}\right)^{r} d^k
	\end{align*}

	To complete the proof, notice this probability can be made to be at most $n^{-1}$ (using  $k \leq n^{1-\epsilon}$ and $d \leq n^{\epsilon/2}$) when $r$ is taken to be

	\begin{align*}
		2 + O\left( \frac{k\log d}{\log n} \right).
	\end{align*}
	The Theorem now follows for $m=k$ from Lemma~\ref{lem:excess}.
	Using the above proof along with a simple union bound over all $m \leq k \leq n^{1-\epsilon}$ implies the statement for all $m \leq k$.
\end{proof}

Notice the approach above yields a sharper bound than if we solely attempted to bound the treewidth by counting the number of excess edges above $k - 1$. To explain, notice a $k$-vertex subgraph can have treewidth $r$ only if it has at least $r + k - 1$ edges. A simple union bound over all possible subsets of $k$ vertices, upper bounds the probability we are interested in.

\begin{align*}
	\binom{n}{k} \binom{\binom{k}{2}}{r + k - 1}  \left(\frac{d}{n}\right)^{r + k
- 1} \le \frac{k^{2k} k^{2r} d^k d^r}{n^{r -1}}
\end{align*}

This is implicitly used in~\cite{dreier2018local} to bound the treewidth of balls of radius $r$ in $G(n, p)$; as mentioned above, our method improves on this result. More concretely, since the upper bound now has a additional $k^k$ factor in the numerator, using this in our application would yield the weaker upper bound

\begin{align*}
	t_k(G) = 3 + O\left( \frac{k(\log k + \log d)}{\log n} \right).
\end{align*}

% \color{black}
\subsection{Lower bound}

Throughout this section we assume that $d>1+\delta$ where $\delta>0$.

First we need the following result from~\cite{krivelevich2019expanders}:
\begin{prop}\label{prop:kriv}
    Consider the random graph $G:=G(n,\frac{1+\delta}{n}).$ Then there is a constant $c>0$ depending on $\delta$ such that the following holds: For every graph $H$ with at most $k \leq cn/ \log n$ vertices and edges, $G$ contains an embedding $H'$ of $H$. Furthermore the width of the embedding is $O(\log n)$. 
\end{prop}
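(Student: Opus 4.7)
I would deduce the proposition from two ingredients: a large expander sitting inside the supercritical random graph, and a generic embedding lemma for sparse graphs into expanders.

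For the first ingredient, when $d = 1+\delta > 1$, the giant component of $G(n, d/n)$ contains w.h.p.\ a subgraph $G^\ast$ on $\Omega(n)$ vertices that is an $\alpha$-expander for some constant $\alpha = \alpha(\delta) > 0$; this is established in~\cite{krivelevich2018finding, krivelevich2019expanders} by a peeling argument on the giant component (roughly, iteratively remove non-expanding ``boundary'' pieces and show a linear-sized core survives). Since any minor of $G^\ast$ is also a minor of $G$, it suffices to embed $H$ into $G^\ast$.

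For the embedding step, let $n^\ast = |V(G^\ast)| = \Theta(n)$ and suppose $H$ has $k \leq c n^\ast / \log n^\ast$ vertices and edges, where $c = c(\delta)$ is a small constant to be chosen. I would build the embedding greedily: pick arbitrary centers $u_1, \ldots, u_k$ in $G^\ast$, initialize branch sets $C_i = \{u_i\}$, and process the edges of $H$ one at a time. To realize $(v_i, v_j) \in E(H)$, run BFS from $C_i$ inside $G^\ast$ restricted to vertices not yet assigned to any other branch set, until it hits $C_j$; append the interior of the resulting path to $C_i$. The expansion property gives that a BFS ball of radius $t$ in an $\alpha$-expander has size at least $(1+\alpha)^t$, so two sets that together exclude only a small fraction of $G^\ast$ are joined by a path of length $O(\log n^\ast) = O(\log n)$. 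Accumulated over all $k$ edges this leaves each $C_i$ with $O(\log n)$ vertices, giving the desired width.

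The main obstacle is preserving expansion under the vertex deletions performed by earlier rounds. Each connecting path consumes $O(\log n^\ast)$ fresh vertices, so over all $k$ edges of $H$ we have used at most $k \cdot O(\log n^\ast) = O(c n^\ast)$ vertices of $G^\ast$. A standard robustness lemma for expanders asserts that deleting an $\eta$-fraction of vertices from an $\alpha$-expander leaves a subgraph whose expansion is at least $\alpha - O(\eta)$; picking $c$ sufficiently small relative to $\alpha$ keeps this positive throughout the construction, so BFS from any remaining branch set still reaches any prescribed target within $O(\log n^\ast)$ steps. Quantifying this invariant precisely, handling branch sets that have themselves grown large, and tracking how expansion degrades across the $k$ rounds is the technical heart of the argument worked out in~\cite{krivelevich2019expanders}.
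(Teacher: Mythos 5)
Your plan matches the paper exactly: Proposition~\ref{prop:kriv} is not proved in the paper but cited directly from~\cite{krivelevich2019expanders}, and the paper's ``Our techniques'' paragraph decomposes it into precisely your two ingredients — a linear-sized $\alpha$-expander subgraph of the supercritical random graph (from~\cite{krivelevich2018finding,krivelevich2019expanders}) combined with the Krivelevich--Nenadov minor-universality theorem for bounded-degree expanders. Your greedy BFS sketch also correctly identifies the technical crux (maintaining expansion under the cumulative $O(cn)$ vertex deletions), which is exactly what the cited reference handles, so there is no gap.
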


\begin{prop}\label{prop:linear}
There exist graphs with $m$ vertices and $m$ edges of treewidth $\Omega(m).$
\end{prop}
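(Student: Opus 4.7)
The plan is to take the desired graphs to be bounded-degree expanders. For infinitely many $n$, there exists a $3$-regular $\alpha$-expander $H_n$ on $n$ vertices, for some absolute constant $\alpha > 0$; this can be obtained either as a random $3$-regular graph (via standard second-moment or configuration-model arguments) or via explicit algebraic constructions such as Ramanujan graphs. Such an $H_n$ has $n$ vertices and exactly $3n/2$ edges.

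The key step is to show that $\tw(H_n) = \Omega(n)$. I would invoke the standard separator-treewidth inequality: any graph of treewidth $t$ admits a vertex set $S$ of size at most $t+1$ whose removal leaves connected components each of size at most $2n/3$. For a $3$-regular $\alpha$-expander, any such separator must have size $\Omega(n)$: since no component of $G \setminus S$ has more than $2n/3$ vertices, one can group the components into two parts $A'$ and $B'$ whose sizes lie in $[n/3, 2n/3]$; all edges in $G$ between $A'$ and $B'$ are then incident to $S$; combining the $\alpha$-expansion bound applied to the smaller side (which yields $\Omega(n)$ outside neighbours, all contained in $S$) with the degree-$3$ cap forces $|S| = \Omega(n)$. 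This gives $\tw(H_n) = \Omega(n)$.

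To match the phrasing of the proposition, I would set $n = \lfloor 2m/3 \rfloor$, so that $H_n$ has at most $m$ vertices and at most $m$ edges (up to parity/rounding). If the statement is read strictly as ``$m$ vertices and $m$ edges,'' padding with $m - n$ isolated vertices preserves both the edge count and the treewidth, yielding a graph on exactly $m$ vertices with the desired properties.

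The main obstacle is essentially bookkeeping, specifically reconciling the $(n, 3n/2)$ parameters of a $3$-regular expander with the $(m,m)$ target; the substantive ingredients---the existence of constant-degree expanders and the separator-to-treewidth inequality---are entirely classical and need only be cited.
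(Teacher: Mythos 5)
Your proof is correct and follows essentially the same approach as the paper: both take a $3$-regular graph with $\Omega(n)$ treewidth (the paper cites random $3$-regular graphs; you use $3$-regular expanders) and then pad with isolated vertices to reconcile the $(n, 3n/2)$ parameter pair with the $(m,m)$ target. The only real difference is that you sketch a self-contained proof of the treewidth lower bound via the balanced-separator characterization of treewidth together with vertex expansion, whereas the paper simply cites known results on the treewidth of random $3$-regular graphs; your route is a bit longer but makes the proposition independent of those references. One small remark: the ``degree-$3$ cap'' is not actually needed in your separator argument---once the smaller side $A'$ of the partition (which is a union of components of $G \setminus S$) has all of its $\Omega(n)$ outside neighbours forced into $S$ by vertex expansion, the bound $|S| = \Omega(n)$ follows directly; the degree bound would only enter if you were converting an \emph{edge}-expansion hypothesis into a vertex bound.
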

\begin{proof}
As random $3$-regular graphs have with high probability linear treewidth~\cite{do2022note,feige2016giant} there are $m$-vertex graphs with $m$ vertices and $3m/2$ edges and treewidth $\Omega(m)$. Adding to such a graph $m/2$ isolated vertices results with a graph with the desired property.  
\end{proof}

Using our results we can lower bound the local treewidth of a random graph:
\begin{theorem}\label{thm:lowerbound}
  Let $G:=G(n,d/n)$ be a random graph with $d>1+\delta$. 
  Assume $k \leq O(n/\log n)$. Then w.h.p. $G$ contains a subgraph with $O(k)$ vertices whose treewidth is $\Omega(\frac{k}{\log n}).$
  
\end{theorem}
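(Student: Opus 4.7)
The plan is to combine Propositions~\ref{prop:kriv} and~\ref{prop:linear} by choosing a target minor of the right size and then embedding it. Set $m := c' k / \log n$ for a small constant $c' > 0$ (to be fixed below). By Proposition~\ref{prop:linear}, there is a graph $H$ with $m$ vertices and $m$ edges whose treewidth is $\Omega(m) = \Omega(k/\log n)$. Our goal is to show that w.h.p.\ $H$ appears as a minor of $G$ via an embedding of total vertex cost $O(k)$, as this would yield a subgraph of $G$ on $O(k)$ vertices whose treewidth is at least $\tw(H) = \Omega(k / \log n)$.

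To apply Proposition~\ref{prop:kriv}, first observe that since $d > 1 + \delta$, we may couple the edges of $G(n, (1+\delta)/n)$ and $G(n, d/n)$ so that the former is a subgraph of the latter; hence any minor embedding found in $G(n, (1+\delta)/n)$ is automatically an embedding in $G = G(n, d/n)$. Next, check the hypothesis of Proposition~\ref{prop:kriv}: since $k \leq O(n / \log n)$, we have $m = O(k/\log n) = O(n/\log^2 n)$, which for a suitable choice of $c'$ lies well within the required bound $m \leq c n / \log n$ from the proposition. Proposition~\ref{prop:kriv} then guarantees, w.h.p., an embedding $(C_1, \ldots, C_m)$ of $H$ into $G$ with each $|C_i| = O(\log n)$.

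Let $G' \subseteq G$ be the subgraph induced by $\bigcup_i C_i$. Then $G'$ has at most $m \cdot O(\log n) = O(k)$ vertices, and by construction $H$ is a minor of $G'$. Using the standard fact (recalled in Subsection~\ref{sec:def}) that treewidth does not increase under taking minors, we conclude
\begin{align*}
\tw(G') \;\geq\; \tw(H) \;=\; \Omega\!\left(\frac{k}{\log n}\right),
\end{align*}
which completes the proof.

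The main obstacle is really just a bookkeeping one: we must make sure the size of the target graph $H$ is small enough to be embedded (so that $m \leq cn/\log n$ holds, which forces $m = \Theta(k/\log n)$ rather than $\Theta(k)$), while still being large enough that its treewidth is $\Omega(k/\log n)$. The $\log n$-factor gap between upper and lower bounds mentioned in the ``Future directions'' subsection essentially arises from this embedding-width loss, so there is no obvious way to tighten the argument without strengthening Proposition~\ref{prop:kriv}.
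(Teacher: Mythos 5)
Your proof is correct and follows essentially the same route as the paper's: choose a target minor $H$ of size $\Theta(k/\log n)$ with linear treewidth via Proposition~\ref{prop:linear}, embed it via Proposition~\ref{prop:kriv} with width $O(\log n)$, and use monotonicity of treewidth under minors. The only (trivial) point the paper handles that you omit is the case $k = O(\log n)$, where $m = c'k/\log n$ may drop below $1$; there the bound $\Omega(k/\log n) = \Omega(1)$ holds immediately since $G$ contains an edge w.h.p.
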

\begin{proof}
We may assume that $k=\Omega(\log n)$, otherwise the lower bound in the Theorem is immediate.  
Let $H$ be a graph with $s$ vertices and edges of treewidth $\Omega(s)$ (Such a graph exists by proposition~\ref{prop:linear}).
Let $s \leq \left(\frac{n}{\log n}\right)$. 
By Proposition~\ref{prop:kriv}, $G$ contains an embedding of $H,H'$ of width $O(\log n)$. It follows that $H'$ has at most $O(s \log n)$ vertices and treewidth at least $\Omega(s)$ (as $H$ is a minor of $H'$) which is what we wanted to prove. 
\end{proof}

\subsection{Local treewidth of random regular graphs}~\label{sec:random-reg}

Similar bounds on the local treewidth of random regular graphs $G(n,d)$ can be established via similar arguments to those used for $G(n,d/n).$
For the upper bound, one can use the fact~\cite{coja2014contagious} that for every $k<nd/4$ distinct unordered pairs of vertices, the probability they all occur simultaneously in $G(n,d)$
is at most $(2d/n)^k$ and then nearly identical arguments to those in Theorem~\ref{thm:upperbound}. The lower bound follows easily from embedding results for expanders:
\begin{theorem}
    Let $d>2$ be a constant. Then with high probability a random $d$-regular graph $G$ is minor universal: any graph $H$ with at most $O(n/\log n)$ vertices and edges can embedded into $G$.
    Furthermore, the width of the embedding is $O(\log n).$
\end{theorem}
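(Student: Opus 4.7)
The plan is to mirror the two-step strategy used for $G(n,d/n)$ in Theorem~\ref{thm:lowerbound}: first show that $G(n,d)$ is, with high probability, an $\alpha$-expander for some constant $\alpha=\alpha(d)>0$; then invoke the Krivelevich--Nenadov embedding theorem (the result underlying Proposition~\ref{prop:kriv}), which states that any $n$-vertex $\alpha$-expander contains, as a minor, every graph with at most $c(\alpha)\cdot n/\log n$ vertices and edges, via an embedding of width $O(\log n)$. Applied to the expander produced in step one, this immediately yields the conclusion of the theorem.

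For step one, I would work in the configuration model, whose contiguity with the uniform random $d$-regular graph for constant $d\ge 3$ lets us transfer expansion events to $G(n,d)$ with only constant-factor probability loss. Fix a small constant $\alpha>0$. For a subset $S$ of size $s\le n/2$, the bad event $|N_G(S,V\setminus S)|<\alpha s$ forces the existence of a set $T\subseteq V\setminus S$ with $|T|=\lfloor \alpha s\rfloor$ such that every half-edge incident to $S$ is matched inside $S\cup T$. Counting such partial matchings in the configuration model and applying Stirling yields a probability bound which, after a union bound over $S$ of size $s$, over $T$, and summed over $s=1,\ldots,n/2$, is $o(1)$ provided $\alpha$ is sufficiently small in terms of $d$. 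Alternatively, vertex expansion of random regular graphs for $d\ge 3$ is classical and may simply be cited, e.g., from the arguments implicit in~\cite{feige2016giant,perarnau2014tree}.

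The main obstacle is purely notational bookkeeping: choosing $\alpha=\alpha(d)$ so that the first-moment union bound converges, and then tracking how $\alpha$ enters the Krivelevich--Nenadov constant $c(\alpha)$ so as to obtain the stated $O(n/\log n)$ budget on the vertices and edges of $H$ together with the $O(\log n)$ bound on the embedding width. There is no conceptual novelty beyond the observation that $G(n,d)$ plays the same role here that $G(n,(1+\delta)/n)$ plays in Proposition~\ref{prop:kriv}: once the expander structure of $G(n,d)$ is established, the minor-universal embedding of the desired width is immediate from the cited theorem.
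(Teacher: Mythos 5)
Your proposal is correct and follows essentially the same two-step route as the paper: cite that the random $d$-regular graph is w.h.p.\ an $\alpha$-expander (the paper cites \cite{bollobas1988isoperimetric,kolesnik2014lower} directly, whereas you additionally sketch a configuration-model first-moment argument as an alternative), then apply the Krivelevich--Nenadov embedding theorem for expanders to conclude minor universality with width $O(\log n)$.
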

\begin{proof}
    By a result of~\cite{krivelevich2019expanders} if $G$ is an $\alpha$-expander with $\alpha>0$ bounded away from zero then the claim in the Proposition hold. The result now follows as
    it is well known~\cite{bollobas1988isoperimetric,kolesnik2014lower} that with high probability the random $d$-regular graph is an $\alpha$-expander for $\alpha>0$.
\end{proof}

We summarize this with the following Theorem:
\begin{theorem}\label{thm:regular}
	Suppose that $2<d$ is a constant and $k\leq n^{1-\epsilon}$ for some constant $\epsilon \in (0,1)$ . Then for $G = G(n, d)$ we have that w.h.p.:
	\begin{align*}
		\Omega\left(\frac{k}{\log n} \right)\leq t_k(G) \leq 3 + O\left(\frac{k\log d}{\log n} \right).
	\end{align*}
\end{theorem}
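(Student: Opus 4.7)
The plan is to adapt the two proofs we just gave for $G(n,p)$ to the regular model, using the two ingredients highlighted in the remarks preceding the statement: (i) the pair-probability bound from~\cite{coja2014contagious}, which says that any fixed collection of $k < nd/4$ distinct unordered pairs occurs simultaneously as edges of $G(n,d)$ with probability at most $(2d/n)^k$, and (ii) the minor-universality statement proved just above. These replace, respectively, edge independence in $G(n,p)$ and Proposition~\ref{prop:kriv}.

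For the upper bound I would repeat the argument of Theorem~\ref{thm:upperbound} almost verbatim. Fix $s \leq k$ and a vertex set $S$ of size $s$; a connected subgraph on $S$ with treewidth at least $r+1$ contains some spanning tree on $S$ together with $r$ additional edges, i.e.\ a prescribed set of $s-1+r$ distinct pairs. There are $s^{s-2}$ labeled trees on $S$ and $\binom{\binom{s}{2}}{r}$ choices of the extra edges, and by the pair-probability bound each such configuration occurs inside $G(n,d)$ with probability at most $(2d/n)^{r+s-1}$ (this is where independence is bypassed). Union-bounding over $S$ via $\binom{n}{s}$ and over $s = 1,\dots,k$ gives essentially the same estimate as in Theorem~\ref{thm:upperbound} up to a harmless factor of $2^{r+s-1}$, so the choice
\begin{align*}
r \;=\; 2 + O\!\left(\frac{k\log d}{\log n}\right)
\end{align*}
still forces the total failure probability below $n^{-1}$. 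Lemma~\ref{lem:excess} then converts the edge-excess count into the stated treewidth bound, and a final union bound over $m \leq k$ covers all smaller values of $m$.

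For the lower bound I would copy the proof of Theorem~\ref{thm:lowerbound}, swapping in the minor-universality theorem established just above for $G(n,d)$ in place of Proposition~\ref{prop:kriv}. Choose $s = \Theta(k/\log n)$ and, using Proposition~\ref{prop:linear}, fix a graph $H$ on $s$ vertices with $s$ edges and $\tw(H) = \Omega(s)$. By minor-universality, $G(n,d)$ w.h.p.\ contains an embedding $H'$ of $H$ of width $O(\log n)$, so $H'$ is a subgraph of $G$ on $O(s \log n) = O(k)$ vertices whose treewidth is at least $\tw(H) = \Omega(s) = \Omega(k/\log n)$, as required.

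The only nontrivial point is the lack of edge independence in $G(n,d)$: a naive repetition of the $G(n,p)$ union bound is not valid. The Coja-Oghlan pair-probability bound is precisely the tool to sidestep this, and since it loses only a $2^k$ factor compared to the independent model it sits comfortably inside the slack of the $G(n,p)$ argument without altering the leading asymptotics. Everything else is bookkeeping, so the proof is essentially a transcription of the two $G(n,d/n)$ proofs with one probabilistic inequality swapped out and one minor-universality statement swapped in.
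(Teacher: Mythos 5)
Your proof matches the paper's argument essentially exactly: the paper's own justification for Theorem~\ref{thm:regular} is precisely the two remarks preceding it, namely rerunning the union bound of Theorem~\ref{thm:upperbound} with the Coja-Oghlan pair-probability bound $(2d/n)^k$ in place of edge independence, and invoking the minor-universality statement (derived from vertex expansion of $G(n,d)$) for the lower bound exactly as in Theorem~\ref{thm:lowerbound}. Your observation that the extra $2^{r+s-1}$ factor only replaces $d$ by $2d$ in the logarithm, and hence is absorbed into the $O(\log d)$, is the correct bookkeeping and is what the paper leaves implicit.
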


\section{Local treewidth of noisy graphs}\label{sec:noisy}

We study the local treewidth of noisy graphs: Recall that in this model there is a base $n$-vertex graph $G$ with maximum degree $\Delta$. On top of this base graph every non edge of $G$ is added independently with probability $1/n$. All proofs missing from this section can be found in the Appendix. Our main result is: 

\begin{theorem}\label{thm:smallworld}
Let $G$ be an $n$-vertex connected graph of maximum degree $\Delta$. Suppose that we add every non-edge of $G$ to $G$ with probability $1 /n$ independently of all other random edges. Call the resulting graph $G'$. With high probability, then, $t_k(G') \leq O(t_k(G) + r)$, where 
\begin{align*}
    r=3 + O\left(\frac{k (\log k+ \log \Delta)}{\log n}\right).
\end{align*}
\end{theorem}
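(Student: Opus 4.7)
My plan is to reduce the theorem to a statement about the number of random edges contained in small connected subgraphs of $G'$, and then to handle the resulting counting problem by exploiting the sparsity of the base graph $G$. Fix a $k$-subset $S \subseteq V$ and write $F_S$ for the set of added random edges with both endpoints in $S$. Since $G'[S]$ is obtained from $G[S]$ by inserting the edges of $F_S$ one at a time, and inserting a single edge raises treewidth by at most one, I obtain the pointwise inequality $\tw(G'[S]) \leq \tw(G[S]) + |F_S| \leq t_k(G) + |F_S|$. Because treewidth decomposes across connected components, this refines to $\tw(G'[S]) \leq t_k(G) + \max_C |F_C|$, where the maximum runs over connected components $C$ of $G'[S]$ and $F_C$ denotes the random edges with both endpoints in $V(C)$. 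Taking the maximum over all $S$, it suffices to bound $|F_C|$ by $r$ simultaneously across every connected subgraph $C$ of $G'$ with at most $k$ vertices.

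To carry out that bound, I would decompose a candidate connected subgraph $C$ of $G'$ on $s \leq k$ vertices as $V(C) = V_1 \sqcup \cdots \sqcup V_t$, where the $V_i$ are the maximal $G$-connected pieces of $V(C)$. The $V_i$ are disjoint with no $G$-edges between them, so in $G'[V(C)]$ they can only be linked by random edges; this forces at least $t-1$ random ``bridges'' and hence $|F_C| \geq t-1$. I would then invoke the standard bound that a graph of maximum degree $\Delta$ admits at most $(e\Delta)^{m-1}$ connected subsets of size $m$ rooted at any given vertex to estimate the number of tuples $(V_1, \ldots, V_t)$ with $\sum_i |V_i| = s$ by roughly $\tfrac{n^t (e\Delta)^{s-t}}{t!}$. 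Multiplying by the probability $n^{-(t-1)}$ that any fixed spanning pattern of bridges is present (times a Cayley-type factor for the choice of pattern), and summing over $s \leq k$, over $t$, and over the number $j$ of chord random edges beyond the bridges, a union bound should show that with high probability no connected subgraph of $G'$ on at most $k$ vertices contains more than $r = 3 + O\bigl(k(\log k + \log \Delta)/\log n\bigr)$ random edges.

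The main obstacle I anticipate is executing this final union bound quantitatively. Two entropy costs are in tension: each extra $G$-component adds $\log n$ from the choice of a root (partially cancelled by the $n^{-(t-1)}$ bridge-probability factor), while each extra vertex inside a $G$-component costs $\log \Delta$; and locating each bridge's two endpoints inside the pieces costs roughly $\log k$. Balancing these contributions against the gain from forcing random bridges to be present is exactly what produces the $(\log k + \log \Delta)/\log n$ factor in $r$. Some care is also needed in the small-$k$ regime, where the asymptotic expression degenerates and the additive constant $3$ in the statement is doing real work. Once the bound on $|F_C|$ is in hand, the first paragraph reduces the theorem immediately to the claimed form $t_k(G') \leq O(t_k(G) + r)$.
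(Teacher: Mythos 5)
Your overall strategy (decompose each connected subgraph of $G'$ into maximal $G$-connected pieces, count bridge edges and chords separately, union-bound) is the same as the paper's, but the reduction in your first paragraph is too lossy and the target you set for the union bound is in fact false. The inequality $\tw(G'[S]) \leq \tw(G[S]) + |F_S|$, and its per-component refinement $\tw(C) \leq t_k(G) + |F_C|$, charges one unit of treewidth to \emph{every} added random edge, including the $t-1$ bridge edges needed merely to connect the maximal $G$-pieces $V_1,\dots,V_t$. Those are exactly the edges your union bound cannot afford to pay for: the added edges alone form a (sub)critical $G(n,p)$, which w.h.p.\ contains components of size $\Theta(\log n)$, so there exist connected subgraphs $C$ of $G'$ on $m \le k$ vertices whose edge set is \emph{entirely} random, giving $|F_C| = m-1$. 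Already for $k=\Theta(\log n)$ and $\Delta=O(1)$ one gets $|F_C| = \Theta(\log n)$ while $r = O(\log\log n)$, so the claim ``w.h.p.\ every connected subgraph on at most $k$ vertices has $|F_C| \le r$'' cannot hold, and no amount of care in the counting will rescue it. This is also visible inside your own union bound: the $n^{t}$ from rooting the pieces and the $n^{-(t-1)}$ from the bridges cancel to leave a factor of $n$, which only becomes small once you insist on at least $j\ge 2$ \emph{chord} edges beyond the bridges — the bound never penalizes having many bridges.

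The missing ingredient is a structural lemma asserting that bridge edges joining disjoint components cost essentially nothing in treewidth: if a graph has components $C_1,\dots,C_t$ and one adds exactly $t-1$ edges to make it connected, the treewidth of the result is at most $\max\{\max_i \tw(C_i),\,1\}$. With that lemma, one replaces your bound by $\tw(C) \le \max\{t_k(G),1\} + \bigl(|F_C|-(t-1)\bigr)$, so only the \emph{chords} beyond a spanning set of bridges contribute. Your counting — pieces counted via the rooted-connected-subgraph bound, bridge probability $n^{-(t-1)}$, a separate sum over the chord count $j$ — is the right computation and does show w.h.p.\ that the number of chords is at most $r$; you were simply aiming it at the wrong quantity. (The paper also splits off the corner case where $G$ itself contains a $k$-vertex subgraph with $\binom{k}{2} - O(k)$ edges, in which $t_k(G) = \Omega(k)$ and the conclusion is vacuous; this is a convenience rather than an essential step, but it is worth handling explicitly so the $\binom{k^2}{r+\ell-1}$-type factor in the union bound is meaningful.)
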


To prove Theorem~\ref{thm:smallworld} we need several Lemmas.
The first is due to~\cite{bagchi2006effect}. While tighter bounds are known~\cite{beveridge1998random},
the simpler bound from~\cite{bagchi2006effect} suffices to establish our asymptotic upper bounds for the local treewidth.

\begin{lemma} \label{lem:bagchi}
	Let $G$ be an $n$-vertex graph of maximum degree $\Delta$. Then the number of connected subgraphs of $G$ with $k$ vertices is at most $n\Delta^{2(k-1)}$.
\end{lemma}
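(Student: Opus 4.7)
The plan is to bound the number of connected $k$-vertex subsets of $V(G)$ by bounding a larger quantity, namely the number of rooted spanning trees on $k$ vertices that embed into $G$. Every connected induced subgraph on $k$ vertices contains at least one such rooted spanning tree, so an upper bound on rooted spanning trees is also an upper bound on connected $k$-vertex subsets.

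The encoding step is an Euler tour. Given a connected subset $S\subseteq V$ with $|S|=k$, choose a spanning tree $T$ of $G[S]$, pick a root $v\in S$, and fix some total order on $N_G(v)$ for every $v\in V$ (this order is part of the encoding convention, independent of $T$). Do a depth-first traversal of $T$ from $v$, visiting children in the prescribed order. Because $T$ has $k-1$ edges and DFS crosses every tree edge exactly twice (once descending, once ascending), the traversal is a walk of length exactly $2(k-1)$ in $G$, starting and ending at $v$.

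The counting step is then immediate: at each of the $2(k-1)$ steps, the walk moves to one of the at most $\Delta$ neighbors of the current vertex in $G$, so there are at most $\Delta^{2(k-1)}$ such walks starting at a given root $v$. The walk uniquely recovers the plane-rooted tree that produced it (the first entry into each vertex identifies its parent, and the order of descents identifies the child order), so distinct rooted plane spanning trees give distinct walks. Summing over the $n$ choices of root yields the claimed bound of $n\Delta^{2(k-1)}$ on the number of rooted spanning trees that embed into $G$, hence on the number of connected $k$-vertex subgraphs.

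There is no serious obstacle here; the only care is to make the encoding map injective, which is handled by agreeing on a fixed ordering of neighbors at every vertex before running the DFS, so that different rooted trees produce different walks and hence different length-$2(k-1)$ sequences of neighbor-indices from $\{1,\dots,\Delta\}$. The bound is certainly loose (each connected subset is counted once for each of its rooted spanning trees, so a factor of at least $k$ could be saved), but the stated form $n\Delta^{2(k-1)}$ is all that will be needed in the subsequent applications.
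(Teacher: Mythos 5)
The paper does not prove this lemma; it is stated with a citation to Bagchi, Bhargava, Chaudhary, Eppstein, and Scheideler (your \texttt{\cite{bagchi2006effect}}) and no proof is given in the text. Your DFS/Euler-tour encoding argument is the standard proof of this kind of bound and is correct: each connected $k$-set contains a spanning tree, each rooted spanning tree with a fixed child-ordering convention is recoverable from the length-$2(k-1)$ walk of neighbor indices it generates, there are at most $\Delta^{2(k-1)}$ such index sequences per root and $n$ choices of root, and distinct connected $k$-sets yield distinct walks because the walk determines the visited vertex set. One small point worth stating explicitly if you polish this: to make the map from connected $k$-sets into $(\text{root}, \text{walk})$ pairs well-defined and injective you should fix, for each connected $k$-set $S$, a canonical choice of spanning tree of $G[S]$ and a canonical root (e.g.\ lexicographically smallest), so that $S$ maps to a single pair; injectivity then follows since the walk determines $S$. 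This is implicit in your write-up but worth spelling out.
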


\begin{lemma}
	Suppose that we have a graph $H$ composed of $k$ connected components $C_1, \ldots, C_k$. Suppose that we merge these connected components by adding exactly $k-1$ edges to produce a connected graph $G$. If $t=\max\{ \tw(C_1), \ldots, \tw(C_k) \}$ then the treewidth of $G$ is at most $\max\{t, 1\}$.
\end{lemma}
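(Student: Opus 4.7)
The plan is to construct an explicit tree decomposition of $G$ by gluing tree decompositions of the components $C_1, \dots, C_k$ together with small ``bridge'' bags for the $k-1$ added edges. Start by fixing, for each $i$, an optimal tree decomposition $(T_i, X_i)$ of $C_i$ of width at most $t$. Since exactly $k-1$ edges are added to merge $k$ components into a connected graph, contracting each $C_i$ to a single super-vertex turns the set of added edges into the edge set of a tree on $k$ vertices (any connected graph on $k$ vertices with $k-1$ edges is a tree). This ``meta-tree'' dictates how the gluing should proceed without creating cycles.

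For each added edge $e = (u,v)$ with $u \in C_i$ and $v \in C_j$, introduce a new bag $B_e = \{u,v\}$ of size $2$, and attach it in the merged tree by two new tree edges: one to a bag of $T_i$ containing $u$, and one to a bag of $T_j$ containing $v$. The resulting structure is a tree, since contracting each $T_i$ to a point recovers the meta-tree. Coverage of edges is immediate: interior edges of each $C_i$ are covered by the bags of $T_i$, while each added edge $e$ is covered by its dedicated bag $B_e$.

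The only property that takes some care is the running intersection property. A vertex $w \in C_i$ appears in some connected set of bags of $T_i$ and, in addition, in every bridge bag $B_e$ corresponding to an added edge incident to $w$. I would resolve this by choosing, for each such $e$, the $T_i$-attachment bag to be one that contains $w$; such a choice is always possible because $w$ lives in a nonempty subtree of $T_i$'s bags, and the bridge bag then attaches as a pendant leaf off that subtree, keeping the set of bags containing $w$ connected in the glued tree.

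For the width bound, the original bags have size at most $t+1$ and each bridge bag has size exactly $2$, so the maximum bag size in the merged decomposition is $\max\{t+1, 2\}$, giving treewidth at most $\max\{t,1\}$. The $\max$ with $1$ in the statement accounts precisely for the edge case $t=0$ (all $C_i$ edgeless), where the size-$2$ bridge bags still force the treewidth up to $1$. I do not anticipate a genuine obstacle here; the only point that needs attention is the attachment choice in the running intersection step, and that is routinely available.
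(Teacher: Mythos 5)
Your proposal is correct and takes essentially the same approach as the paper: fix tree decompositions of the components, introduce a size-2 bridge bag $\{u,v\}$ for each added edge, and attach each bridge bag to a bag of $T_i$ containing $u$ and a bag of $T_j$ containing $v$. Your write-up spells out the running-intersection verification a bit more explicitly (attaching the bridge bag as a pendant leaf off the subtree of bags containing $w$), but it is the same construction and the same reason it works.
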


\begin{proof}
	For connected components $C_1, \dots, C_k$ consider tree decompositions $T_1, \allowbreak \dots, T_k$ with widths at most $t$; these surely exist given the treewidth of each $C_i$ is at most $t$ for all $1 \le i \le k$.

	Assume $C_i$ and $C_j$ are connected in $G$ by an edge from vertex $v_i$ in $C_i$ to $v_j$ in $C_j$. Take the corresponding tree decompositions $T_i$ and $T_j$ and choose arbitrary bags containing $v_i$ and $v_j$ respectively; introduce a new bag $\{v_i, v_j\}$ and connect this to both. Repeating this process for all $k - 1$ edges added to $H$ connects all $T_1, \dots, T_k$. We claim the resulting graph is a valid tree decomposition of $G$ with width at most $t$, proving the proposition.

	Edge counting certifies the resulting connected graph is a tree. Furthermore, every edge in $G$ has some bag containing its endpoints: edges from $H$ have such a bag in some subtree $T_i$, while the remaining edges explicitly have a bag with its endpoints as constructed above. Finally, since new bags introduced share a vertex with each of its two neighbors, subgraphs corresponding to individual vertices remain trees, satisfying the final requirement for tree decompositions.
\end{proof}

Finally we need the following Lemma whose proof follows directly from the handshake Lemma:
\begin{lemma}\label{lem:ave} 
	Let $G$ be a graph with average degree $d$, Then $G$ contains a subgraph $H$ of $G$ with $H$ having minimal degree at least $d/2$.
\end{lemma}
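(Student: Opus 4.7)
The plan is to prove the lemma by a standard \emph{peeling} (or \emph{iterative removal}) argument, using the handshake identity only to count the total number of edges. Let $G$ have $n$ vertices, so by the handshake lemma the number of edges in $G$ is exactly $dn/2$.

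I would consider the following process. At each step, if the current graph contains a vertex of degree strictly less than $d/2$ (in the current graph, not in $G$), remove that vertex together with all its incident edges; otherwise halt. Let $H$ denote the graph at which the process halts. By construction, every vertex of $H$ has degree at least $d/2$ in $H$, so it suffices to show that $H$ is non-empty.

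The key observation is a charging argument. Each removal deletes strictly fewer than $d/2$ edges, since we only remove a vertex when its current degree is below $d/2$. If the process ran for all $n$ steps and emptied the graph, the total number of edges deleted would be strictly less than $n \cdot d/2 = dn/2$. But the process deletes every edge of $G$ exactly once, so the total number of deleted edges must equal $dn/2$, a contradiction. Hence the process must halt before exhausting all vertices, and the resulting $H$ is a non-empty subgraph of $G$ with minimum degree at least $d/2$, as required.

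There is no real obstacle here: the only subtle point is to compare the strict inequality coming from the removal rule against the exact edge count $dn/2$ from the handshake identity, which gives the contradiction needed to guarantee that $H$ is non-empty.
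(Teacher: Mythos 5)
Your proof is correct, and it is the standard argument that the paper is alluding to when it says the lemma ``follows directly from the handshake Lemma'': the handshake identity supplies the edge count $dn/2$, and the iterative peeling of low-degree vertices together with the strict-inequality charging argument shows the process cannot empty the graph. The paper gives no further detail, so there is no substantive difference to compare.
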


We can prove Theorem~\ref{thm:smallworld}.
%\begin{theorem}
%Let $G$ be an $n$-vertex connected graph of maximum degree $\Delta$. Suppose that we add every non-edge of $G$ to $G$ with probability $1 /n$ independently %of all other random edges. Call the resulting graph $G'$. Whp $t_k(G') \leq O(t_k(G) + r)$, where 
%\begin{align*}
%    r=3 + O\left(\frac{k (\log k+ \log \Delta)}{\log n}\right).
%\end{align*}
%\end{theorem}

\begin{proof}
We may assume $\Delta,k \leq n^{\epsilon}$ for sufficiently small $\epsilon$, as otherwise the upper bound in the Theorem follows from the fact that $t_k(G') \leq k.$
We begin by upper bounding $t_k(G')$ for connected subgraphs of $G'$ of size $k$. Later we show how to lift the connectedness requirement. We consider two possibilities for $G$. In the first, suppose all subgraphs of $G$ on $k$ vertices have at most

\begin{align*}
    \binom{k}{2} - (r + \ell - 1)
\end{align*}

edges. Now fix such a subgraph $H$ of $G$ with $k$ vertices and $\ell \le k$ connected components; we upper bound the probability the corresponding
subgraph $H'$ in $G'$ (the subgraph induced on the vertices of $H$ in $G'$) is connected and has treewidth at least $t_k(G) + r$.

To that end, the probability a fixed set of $\ell - 1$ random edges connect the $\ell$ connected components in $H$ into a single component in $G'$ is $n^{-(\ell - 1)}$. 
%Counting the total number of choices of $\ell - 1$ edges in a graph with $k$ vertices, we derive the upper bound

%\begin{align*}
%	\binom{\binom{k}{2}}{\ell - 1} \left( \frac{1}{n} \right)^{\ell - 1}.
%\end{align*}

By construction, the largest component in $H$ has size no larger than $k - \ell + 1$. Therefore, by our lemma, the treewidth of $H$ together with $\ell - 1$ connecting edges is upper bounded by $t_{k - \ell + 1}(G)$. For $H'$ to additionally have treewidth at least $t_k(G) + r$, a minimum of $r$ additional random edges must be present in $H$ as $t_{k - \ell + 1}(G) \leq t_k(G)$. Therefore, we can upper bound the probability that $H'$ is both connected and has treewidth larger than $r$ by  

\begin{align*}
   \binom{k^2}{r + \ell - 1} n^{-(r + \ell - 1)}
\end{align*}

We count the number of possible subgraphs $H$ with $k$ vertices and $\ell$ connected components. An upper bound can be derived by noticing there are $\binom{k - 1}{\ell - 1}$ ways to choose the positive sizes of the components, denoted $s_1, \dots, s_\ell$. For each set of sizes, we can bound the choices of components using Lemma~\ref{lem:bagchi}:

\begin{align*}
    \prod_{i = 1}^{\ell} n\Delta^{2(s_i -1)} &\le n^\ell \Delta^{2 \sum_{i = 1}^\ell s_i}
= n^{\ell}\Delta^{2k}.
\end{align*}

We can now upper bound the probability there exists some $k$-vertex subgraph of $G$, $H$, whose corresponding subgraph in $G'$ is connected and has treewidth at least $t_k(G) + r$, denoted  $p(n, \Delta, k, r)$. Since $G$ is connected, this is equivalent to the probability $t_k(G') \ge t_k(G) + r$. Now take a union bound over all possible subgraphs $H$ on $k$ vertices and $1 \le \ell \le k$ connected components; in particular, for each possible choice of $\ell$, we multiply an upper bound on the number of possible subgraphs by the maximum probability each subgraph ends up connected and with large treewidth, derived above. Applying the inequality $\binom{q}{s} \le q^s$, we arrive at the following upper bound:

\begin{align*}
    p(n, \Delta, k, r) &\le \sum_{\ell = 1}^k \binom{k - 1}{\ell - 1} n^{\ell}\Delta^{2k} \times \binom{k^2}{r + \ell - 1} n^{-(r + \ell - 1)}
    &\le \frac{k^{2(r+k)}\Delta^{2k}}{n^{r - 1}} \sum_{\ell = 1}^k k^{\ell} \\
    &\le \frac{k^{2r}\Delta^{2k}}{n^{r - 1}} k^{3k+1.}
\end{align*}

Taking logarithms and using our assumptions on $k,\Delta$ we have that for $$r= O\left(\frac{k (\log k+ \log \Delta)}{\log n}\right)$$ it holds that $p(n,\Delta,k,r) \leq O(1/n).$

%We now differentiate the cases where the case where $r=r(k,d,n)=o(1)$ or $r > \delta$ for some positive constant $\delta$. For $k=O(\log n/ (\log \log n+ \log d))$ the claim immediately follows by our assumption on $k$ and as $k^{3k}=O(n^{1/4})$ assuming $k \leq c\log n/ (\log \log n+ \log d)$ for a sufficiently small constant $c$. Namely for such $k$ (and $k>2$) with high probability $r = 1$. For $k=\Omega(\log n/(\log \log n+\log d))$ it can be verified that setting $r$ as in the claim implies that $p(n,d,k,r) =O(1/\sqrt{n})$, as desired. 

Now consider the second case where $G$ has some some $k$-vertex subgraph $H$ with $\ell \le k$ connected components and more than

\begin{align*}
    \binom{k}{2} - (r + \ell - 1) \ge {k \choose 2}- 2k + 1
\end{align*}

edges. We may assume without loss of generality that $r \leq k$, as otherwise the inequality
$t_k(G') \leq O(t_k(G)+r)$ trivially holds---in fact, $t_k(G') \leq r$. Given its edge count $H$ has average degree $\Omega(k)$. It follows from
Lemma~\ref{lem:ave} that $H$ contains a subgraph $\tilde{H}$ of minimum degree $\Omega(k)$. Therefore $\tilde{H}$, and hence $H$, have treewidth $\Omega(k)$, since any graph of treewidth $w$ must contain a vertex of degree at most $w$. Hence, we get that $t_k(G') \leq O(t_k(G)+r)$ as $t_k(G') \leq k$; in fact, $t_k(G') \leq O(t_k(G))$ in this case. 

To conclude the proof we need to consider $k$-vertex subgraphs of $G'$ that are not necessarily connected. To prove our claim for subgraphs that are not connected, we need to consider their connected components. As we have shown, the probability there is a connected subgraph with $k'$ vertices for a fixed $k'<k$ with treewidth larger than $t_k'(G)+O(r(k',n,\Delta)$ is $O(1/n)$. Therefore a simple union bounded argument over all $k'\leq k$ (using $k\leq n^{\epsilon}$) as well as the fact that the treewidth of a graph with components $C_1 \ldots C_j$ is $\max(\{\tw(C_1) \ldots \tw(C_j)\})$ concludes the proof. 
\end{proof}
% \color{black}
Observe that similarly to the $G(n,d/n)$ case, a simple union bound argument implies that w.h.p $t_s(G') \leq O(t_s(G) + r)$ for all $s \leq k.$ 

Finally, the upper bound in Theorem~\ref{thm:smallworld} is nearly tight for certain noisy trees. 

\begin{theorem}
Consider the $n$ vertex path, $P_n$. Suppose we add every nonedge to $P_n$ with probability $\epsilon/n$ where $\epsilon>0$ is an arbitrary constant.
Call the perturbed graph $P'.$
Then with high probability for any $\Omega(\log n) \leq k \leq O(n/ \log n)$, there exists a subgraph of $P'$ with $O(k)$ vertices with treewidth $\Omega(k/ \log n).$
\end{theorem}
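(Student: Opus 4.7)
The plan is to lower bound the local treewidth of the noisy path $P'$ by exhibiting a connected subgraph on $O(k)$ vertices that contains, as a minor, some graph of treewidth $\Omega(k/\log n)$. The strategy mirrors Theorem~\ref{thm:lowerbound}: build an auxiliary ``coarsened'' graph which behaves like a supercritical Erd\H{o}s--R\'enyi random graph, locate a linear-sized $\alpha$-expander inside it, and then invoke the Krivelevich--Nenadov embedding result to pull back a high-treewidth minor.

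Concretely, I would partition $V(P_n)=\{1,\ldots,n\}$ into $N=n/b$ consecutive blocks $B_1,\ldots,B_N$ of length $b$, where $b$ is a large constant depending on $\epsilon$ (say $b=\lceil 4/\epsilon\rceil$). Define an auxiliary graph $M$ on the vertex set $\{B_1,\ldots,B_N\}$ by placing an edge between $B_i$ and $B_j$ iff $P'$ contains at least one noise edge with one endpoint in $B_i$ and one in $B_j$. For $i\neq j$, at most one of the $b^2$ potential inter-block pairs is an edge of $P_n$, so the event that $B_iB_j$ is an edge of $M$ occurs with probability $q=\Theta(\epsilon b^2/n)=\Theta(b/N)$; moreover these events are mutually independent across disjoint pairs of blocks. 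In particular $M$ stochastically contains $G(N,(1+\delta)/N)$ for some fixed $\delta>0$ once $b$ is chosen large enough, putting $M$ in the supercritical regime.

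By the supercritical random graph results cited in the paper~\cite{krivelevich2018finding,krivelevich2019expanders}, w.h.p.\ $M$ has a subgraph $E$ on $\Omega(N)=\Omega(n)$ vertices that is an $\alpha$-expander for a constant $\alpha>0$. The expander version of Proposition~\ref{prop:kriv} then implies that any graph $H$ with at most $c\,|V(E)|/\log|V(E)|=\Omega(n/\log n)$ vertices and edges embeds as a minor of $E$ with embedding width $O(\log n)$. Taking $H$ to be the graph provided by Proposition~\ref{prop:linear} on $s=\Theta(k/\log n)$ vertices with $s$ edges and treewidth $\Omega(s)$---which fits within the cap since $k\le O(n/\log n)$---produces such an embedding in $E$.

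Finally, I would lift this embedding back to $P'$ by replacing each block-vertex $B_i$ appearing in a branch set by the $b$ path vertices it encodes, using the path edges inside each $B_i$ together with one noise edge per edge of $E$ to connect the branch sets. Each branch set then becomes a connected subgraph of $P'$ of size $O(b\log n)=O(\log n)$, so the whole minor model uses $O(s\log n)=O(k)$ vertices of $P'$ and realizes $H$ as a minor of this subgraph. This yields the desired subgraph of $P'$ with $O(k)$ vertices whose treewidth is at least $\tw(H)=\Omega(k/\log n)$. The main technical step is verifying that $M$ really does behave like a supercritical Erd\H{o}s--R\'enyi graph on $N=\Theta(n)$ vertices: once $b$ is chosen so that the effective edge probability is a constant multiple of $1/N$ strictly above the critical threshold, this reduces to quoting the cited results on expander subgraphs of supercritical random graphs. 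The lower bound $k=\Omega(\log n)$ enters only to guarantee $s=\Theta(k/\log n)\geq 1$ so that Proposition~\ref{prop:linear} applies nontrivially.
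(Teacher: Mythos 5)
Your proof is correct and follows essentially the same approach as the paper: partition $P_n$ into constant-size blocks, show the block-level auxiliary graph stochastically dominates a supercritical $G(N,(1+\delta)/N)$, invoke the expander/minor-embedding machinery to realize a treewidth-$\Omega(s)$ graph $H$ (with $s=\Theta(k/\log n)$) as a minor with branch sets of size $O(\log n)$, and lift the embedding back to $P'$ using intra-block path edges. The paper applies the Krivelevich--Nenadov embedding result to the dominated supercritical random graph directly, while you explicitly route through an $\Omega(n)$-vertex $\alpha$-expander subgraph before applying Proposition~\ref{prop:kriv}; this is the same argument with the intermediate step made explicit.
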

\begin{proof}
Fix $B$ to be a large enough constant. Chop $P_n$ to $n/B$ disjoint paths\footnote{To simplify the presentation we assume $B$ divides $n$. Similar ideas work otherwise.} $A_1 \ldots A_{n/b}$ each of length $B$. Consider now the graph $G$ whose vertex set is $A_1 \ldots A_{n/B}$ and two vertices $A_i$ and $A_j$ are connected if there is an edge (in $P'$) connecting $A_i$ to $A_j$. The probability two vertices in $G$ are connected is at least $$1-\left(1-\epsilon/n\right)^{B^2}\geq \epsilon B^2/2n.$$  

For a fixed graph $H$ with $s$ vertices and edges, it is known~\cite{krivelevich2019expanders} that the supercritical random graph $G(m, \frac{1+\epsilon}{m})$ contains an embedding of $H$ into $G$ as long as $s= O(m/\log m)$. Furthermore the width of the embedding is $O(\log m)$. The probability that two vertices in $G$ are connected is larger than $\frac{1+\epsilon}{n/B}$. Therefore we can embed $H$ into a subgraph $H'$ of $G$ whose size is at most $ s\log n$ such that $H$ is a minor of $H'$. Furthermore as the vertices of $G$ are paths of length $B$ (in $P_n$), the embedding of $H$ into $G$ directly translates to an embedding of $H$ into $P'$ whose width is 
$O(B \log n)=O(\log n)$. Choosing $H$ with $s$ vertices and edges and treewidth $\Omega(s)$ concludes the proof.  
\end{proof}

\section{Algorithms for graphs of bounded treewidth}\label{sec:treewidth}

In this section, we build on the results of ~\cite{cordasco2021parameterized} to provide polynomial time algorithms for bounded treewidth instances of minimizing contagion and stopping contagion. As we sketched in our introduction, we generalize the influence diffusion minimization problem introduced by the authors and use a similar dynamic-programming algorithm. Our main result is the following algorithm for graphs of bounded treewidth $\tau$:

\begin{theorem}\label{thm:treewidth}
 Let $G$ be an $n$ vertex graph with maximum degree $\Delta$, maximum threshold $r$ and treewidth $\tau$. Then both minimizing and stopping contagion can be solved in time $O\left(\tau 1296^\tau \min\{r, \max\{\Delta, 2\} \}^{4\tau} \poly(n) \right)$.
\end{theorem}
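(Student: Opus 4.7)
The plan is to reduce both edge-deletion problems to a constrained vertex-immunization variant of the Influence Diffusion Minimization (IDM) problem of \cite{cordasco2021parameterized} and then attack that variant by a dynamic program on a nice tree decomposition. The first step is to subdivide every edge $e$ of $G$ by inserting a new middle vertex $v_e$ of threshold $1$, producing a graph $G^\star$. Deleting $e$ from $G$ is then exactly equivalent to immunizing $v_e$ in $G^\star$, and only subdivision vertices are eligible for immunization. Edge subdivision increases treewidth by at most a constant (and by at most $1$ once $\tau \geq 2$), so $\tw(G^\star) = O(\tau)$, while the blow-up in vertex count is at most $\binom{n}{2}$ and is absorbed into the $\poly(n)$ factor. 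In this reduced form, stopping contagion asks for the fewest immunized subdivision vertices ensuring no vertex of $B$ is ever activated, and minimizing contagion asks the same subject to $|\langle A \rangle_{G^\star}| \leq |A|+s$; both fit into one generalized IDM-style optimization.

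I would next compute a nice tree decomposition of $G^\star$ of width $O(\tau)$ and run the usual bottom-up dynamic program whose table at a bag $X$ is indexed by a joint "state" assignment to the vertices of $X$. For each $u \in X$ the state records: (i) a categorical tag (seed, immunized, already activated by the processed subtree, or still inactive), (ii) a count, truncated at $\min\{r_u, \deg(u)\} \leq \min\{r, \max\{\Delta,2\}\}$, of how many neighbors of $u$ in the processed subtree have been activated, and (iii) a discretized activation round, likewise bounded because each relevant round must either increment some in-bag counter or fire a bag vertex. Since bootstrap percolation is monotone and a vertex fires the instant its cumulative active-neighbor count crosses its threshold, only these aggregate numbers, not the identities or orderings of activations outside the bag, influence future dynamics. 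This yields $O(\min\{r,\max\{\Delta,2\}\})^{O(1)}$ possible states per vertex and hence at most $O(\min\{r,\max\{\Delta,2\}\}^{O(\tau)})$ table entries per bag.

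Transitions at leaf, introduce-vertex, introduce-edge, and forget nodes are simple local updates; the forget node is where constraints are finalized—rejecting any configuration that tags a vertex of $B$ as active for stopping contagion, and incrementing the global infection counter for minimizing contagion. The bottleneck is the join node, where two child tables on the same bag are merged by adding their "neighbors-activated-below" fields and checking agreement of the tags and rounds; enumerating pairs of compatible assignments at a bag of size $O(\tau)$ costs $O(\min\{r,\max\{\Delta,2\}\})^{O(\tau)}$ time, which after tracking the constant exponents gives the claimed $O(\tau \cdot 1296^{\tau} \cdot \min\{r,\max\{\Delta,2\}\}^{4\tau} \poly(n))$. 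The optimum is read from the root bag.

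The main obstacle is showing that the compressed per-vertex state (truncated counters plus a discrete round index) carries exactly enough information for the join step to combine subproblems correctly and exactly the minimum required to keep the state space within $\min\{r,\max\{\Delta,2\}\}^{O(1)}$. The sufficiency argument rests on the monotone threshold rule: a vertex's eventual fate depends only on whether its count of active neighbors ever reaches $r_u$, so once a counter is saturated at $r_u$ it may be identified with "$\geq r_u$", and activations from disjoint subtrees combine additively. The necessity (no coarser state suffices) is what forces the exponent to depend on $\min\{r,\max\{\Delta,2\}\}$ rather than being a pure constant. Once this bookkeeping is verified, correctness follows from the standard tree-decomposition invariant and the stated running time follows directly from the cost of the dominant join operation.
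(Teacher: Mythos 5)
Your high-level plan is the same as the paper's: subdivide every edge to turn edge deletion into immunization of a degree-two ``middle'' vertex, reduce both problems to a generalized IDM problem, compute a nice tree decomposition of width $O(\tau)$, and run a bottom-up DP whose table at a bag is indexed by per-vertex tags (infected / safe / immunized), a bounded per-vertex counter or residual threshold taking at most $\min\{r, \max\{\Delta,2\}\}+1$ values, and a global budget. The subdivision-preserves-treewidth observation, the degree and threshold bounds, and the overall cost accounting match the paper.

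The gap is in your item~(iii), the ``discretized activation round.'' First, the compression claim is unjustified: the round at which a vertex fires in bootstrap percolation can be $\Theta(n)$, and the argument ``each relevant round must either increment some in-bag counter or fire a bag vertex'' at best bounds the number of round changes visible at the bag by $O(\tau\cdot\min\{r,\Delta\})$, not by $\min\{r,\Delta\}$ per vertex, so the state space is no longer $\min\{r,\Delta\}^{O(\tau)}$ as needed. Second, and more fundamentally, round indices cannot be reconciled at a join node: the actual round at which a bag vertex fires depends jointly on both child subtrees, so two child tables cannot carry consistent absolute rounds without circularity. The paper sidesteps rounds entirely. It handles the acyclicity-of-infection issue locally at introduce nodes: when $u$ is introduced with tag ``infected'' and residual threshold $t'(u)$, the DP enumerates a certifying subset $N\subseteq N_i(s)$ of exactly $t'(u)$ in-bag infected neighbors that fired before $u$, and then decrements the residual thresholds of $N_i(s)\setminus N$ (the in-bag infected neighbors that $u$ may in turn have helped fire). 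This explicit witness structure replaces any global ordering, keeps each per-vertex state in $\{0,\dots,\min\{r_u,\deg u\}\}$, and makes the join rule a simple budget split with identical $(s,t')$ on both sides. You would need to either adopt this certifying-set mechanism or prove a genuine round-compression lemma (which the join-node circularity makes unlikely) to close the gap.
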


For a proof, including a description of our algorithm and runtime analysis, please see the Appendix. Note that to combine subproblems, we must effectively account for the effect of infected vertices elsewhere on each subgraph we consider. We therefore essentially solve minimizing contagion and stopping contagion in a more flexible infection model, where thresholds are allowed to differ between vertices but remain at most $r$; as a result, our theorem cleanly translates to this setting as well.

\section{Algorithms for minimizing and stopping contagion in grids, random graphs and noisy trees}\label{sec:algorithms}
%In the minimizing contagion problem we are given an undirected graph $G=(V,E)$, a set $A$ of infected vertices and budget $k$. Our goal is to delete a minimal number of
%edges to ensure at most $k$ additional vertices are infected from $A$ after these edge deletions.
%The threshold of every vertex $v$ is denoted by $t(v)$. We let $M$ to be the maximum of $\sum_{v \in S}t(v)$ with respect to all subsets $S$ of $V$
%of size at most $k$. For example, if all the thresholds equal $r$ then $M=kr$.
In this section we study how to solve minimizing contagion and stopping contagion when the set of seeds $A$ is not too large and does not spread by too much. We use this along with local treewidth upper bounds to devise algorithms for minimizing and stopping contagion in random graphs. We also consider algorithms for grids and planar graphs. As usual all missing proofs appear in the Appendix. 

Using similar ideas to~\cite{cordasco2021parameterized} (who consider vertex deletions problems) we have the following result for the minimizing contagion problem.  

\begin{theorem}~\label{thm:exp}
  Let $G=(V,E)$ be an $n$-vertex graph. Suppose there are $t$ edges whose removal ensures no more than $r$ vertices are infected in $G$ from the seed set $A \subseteq V$. Then minimizing contagion can be solved optimally in (randomized) $2^{r+t}\poly(n)$ time where $n$ is the number of vertices.
\end{theorem}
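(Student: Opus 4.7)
The plan is to recast minimizing contagion as choosing a target set $I \subseteq V \setminus A$ of size at most $s$, and then to branch on the $r+t$ decisions that distinguish an optimal pair $(I^*, F^*)$. The key observation is that, for any target $I$, the contagion starting from $A$ stays inside $A \cup I$ in $G\setminus F$ if and only if every $v \in V\setminus(A\cup I)$ has strictly fewer than $t(v)$ neighbors in $A \cup I$ surviving $F$; since the resulting constraints for different $v$ involve disjoint sets of edges, the minimum feasible $|F|$ for a fixed $I$ equals
\[
D(I) \;=\; \sum_{v \in V\setminus(A\cup I)} \max\!\left(0,\; |N_G(v) \cap (A\cup I)| - t(v) + 1\right).
\]
Thus minimizing contagion reduces to finding some $I$ with $|I|\leq s$ that minimizes $D(I)$.

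The algorithm maintains a state $(A', F)$, initialized to $(A,\emptyset)$, where $A'\setminus A$ is the set of vertices committed to $I$ so far and $F$ is the set of edges committed to the deletion set. While there exists a \emph{violator}, namely a vertex $v\notin A'$ with $|N_{G\setminus F}(v)\cap A'|\geq t(v)$, perform a binary branch: either (a) add $v$ to $A'$, charging one unit to the $r$-budget, or (b) add a canonically chosen edge $(v,u)$ with $u \in N_{G\setminus F}(v)\cap A'$ (for instance, the neighbor of smallest index) to $F$, charging one unit to the $t$-budget. When no violator remains and $|A'\setminus A|\leq s$, the pair $(A',F)$ is feasible for the reformulated problem; output the $F$ of smallest cardinality across all leaves.

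For correctness we maintain the invariant that some optimum $(I^*, F^*)$ with $|I^*|\leq r$ and $|F^*|\leq t$ is consistent with the current state, meaning $A'\setminus A\subseteq I^*$ and $F\subseteq F^*$. At a violator $v$: if $v\in I^*$ then branch (a) preserves the invariant, while if $v\notin I^*$ then $v$ is saved in OPT, and since $A'\subseteq A\cup I^*$ at least one edge $(v,w)\in F^*\setminus F$ must have $w\in A\cup I^*$. If the canonical edge $(v,u)$ already lies in $F^*$, branch (b) preserves the invariant directly; otherwise replace $F^*$ by $F'^*=(F^*\setminus\{(v,w)\})\cup\{(v,u)\}$. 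Because both $w$ and $u$ lie in $A\cup I^*$, swapping these two edges leaves $v$'s count of infected neighbors unchanged, affects no other vertex's infected-neighbor count, and preserves cardinality, so $F'^*$ is an equivalent optimum consistent with branch (b).

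Each branch strictly decrements either the $r$-budget or the $t$-budget by one, so the recursion tree has depth at most $r+t$ and at most $2^{r+t}$ leaves; each node performs polynomial work (simulating the contagion to find a violator, selecting the canonical edge). This yields the claimed $2^{r+t}\poly(n)$ running time. The main delicate step is the exchange argument justifying branch (b): one has to verify that redirecting a single edge of $F^*$ does not disturb any other saved vertex, which holds because the swap is localized at the saved vertex $v$ and both swapped endpoints lie in $A\cup I^*$, so no new infection can propagate as a result. The ``randomized'' qualifier in the theorem statement simply accommodates a random tie-breaking rule when choosing the canonical edge and violator; a deterministic rule (smallest index) works equally well.
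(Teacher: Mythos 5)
Your proof is correct but uses a genuinely different technique from the paper's. The paper's argument is color-coding (random separation): independently color each vertex of $V\setminus A$ red or blue, observe that with probability at least $2^{-(r+t)}$ every vertex infected by the hypothesized $t$-edge solution is red while every non-infected endpoint of a deleted edge is blue, and then show that under such a coloring, running the contagion restricted to red vertices recovers exactly the infected set $B$ (any extra infection would have to cross a deleted edge at its blue endpoint, a contradiction); once $B$ is known, the cheapest protecting edge set is a local, per-vertex computation, and repeating the coloring $O(2^{r+t})$ times gives the randomized bound. You instead give a deterministic bounded-search-tree algorithm: reformulate the problem as choosing the eventually-infected set $I$, observe that for a fixed $I$ the minimum cut has the closed form $D(I)$ because the per-vertex degree-reduction constraints act on disjoint edge sets, and branch on violators, using a localized exchange argument to show that committing to a canonical edge is without loss of generality. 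Both approaches give $2^{r+t}\poly(n)$, and yours has the real (if modest) advantage of being deterministic where the paper settles for randomization. One phrasing caution: the ``iff'' you state when deriving $D(I)$ is slightly too strong---the contagion can stay inside $A\cup I$ even if some $v\notin A\cup I$ has $t(v)$ or more surviving neighbors in $A\cup I$, as long as not all of $A\cup I$ actually becomes infected. You only ever use the sound direction, and taking $I^*$ to be the exact infected set of an optimal solution (for which the converse does hold) gives $\min_I D(I)=\mathrm{OPT}$, so the reformulation is valid; it would just be cleaner to state the two directions separately and to say explicitly that the branching invariant is maintained against $(I^*,F^{\mathrm{rem}})$ with $F^{\mathrm{rem}}$ a minimizer of $D(I^*)$, which pins down $|I^*|\le r$ and $|F^{\mathrm{rem}}|\le t$ and hence the budget.
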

 \begin{proof}
 Color every vertex in $V \setminus A$ independently blue or red. Consider a solution of $t$ edges such that after removing these edges
 a set $B$ of cardinality at most $r$ is infected from $A$.
 With probability at least $2^{-r}$ all vertices in $B$ are red.
 For each of the $t$ edges in an optimal solution, exactly one endpoint does not get infected from $A$ as a result of removing the edge.
 With probability at least $2^{-t}$, all these $t$ endpoints belonging to edges in an optimal solution are colored blue.
 Assuming the two events above occur, we run the contagion process only on red vertices and find the set of vertices $B$ infected from $A$.
 Once we recover $B$ we can remove the minimum set of edges from $G$ ensuring only $B$ is infected from $A$.
 Therefore we can solve the problem optimally with probability (at least) $2^{-(t+r)}$. Repeating this process independently $2^{r+t+10}$ times
 results with a randomized algorithm solving this problem with probability at least $2/3$. The running time is $2^{r+t}\poly(n)$ as desired.
 \end{proof}

The algorithm above can become slow if $r$ or $t$ are very large. Additionally, we do not know how to get similar results (e.g., algorithms of running time $2^{|A|}\poly(n)$) for stopping contagion.
Below we show that we can improve upon this algorithm for graphs that have some local sparsity conditions. A key property we use is that for both minimizing contagion and stopping contagion with a seed set $A,$ we restrict our attention to the subgraph of $G$ induced on $\langle A \rangle$.

 \subsection{Grids and planar graphs}
Consider the $n \times n$ grid where all vertices have threshold at least 2 we have the following ``bounded spread" result:
 \begin{lemma}
 In the $n \times n$ grid every set of size $k$ infects no more than $O(k^2)$ vertices.
 \end{lemma}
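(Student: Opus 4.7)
The plan is to deploy the classical perimeter-monotonicity argument of Aizenman--Lebowitz for $2$-neighbor bootstrap percolation on the square lattice, adapted to the slightly more general setting in which every vertex has threshold at least $2$. For a vertex set $S$ in the grid, let $\partial S$ denote its edge boundary, i.e., the set of grid edges with exactly one endpoint in $S$. The entire argument reduces to the claim that $|\partial A_i|$ is non-increasing along the contagion dynamics.

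First I would verify this perimeter monotonicity. Consider a single vertex $v$ being added to the active set. Because every threshold is at least $2$, at least two of $v$'s grid-neighbors are already active; since the grid has maximum degree $4$, at most two of $v$'s neighbors are inactive. Adding $v$ removes from the edge boundary every edge connecting $v$ to an active neighbor (at least two edges removed) and introduces into the edge boundary every edge connecting $v$ to an inactive neighbor (at most two edges added), so $|\partial(\cdot)|$ cannot grow. A single contagion step activates several vertices simultaneously, but we can process them one by one: any vertex qualifying in step $i$ under $A_{i-1}$ has at least as many active neighbors after its peers are added in, so the monotonicity is preserved. Iterating yields $|\partial \langle A\rangle| \le |\partial A|$.

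Next I would invoke the standard edge-isoperimetric inequality on the grid, which supplies a constant $c>0$ such that every subset $S$ of the $n\times n$ grid with $|S|\le n^2/2$ satisfies $|\partial S|\ge c\sqrt{|S|}$. Together with the trivial bound $|\partial A|\le 4|A|=4k$ (each seed contributes at most its grid-degree), this gives $|\langle A\rangle|\le (4k/c)^2=O(k^2)$ whenever $|\langle A\rangle|\le n^2/2$. If instead $|\langle A\rangle|>n^2/2$, apply the isoperimetric bound at the last partial-step at which the active set first crosses $n^2/2$; its perimeter there is $\Omega(n)$, and by monotonicity $4k$ must dominate this, forcing $k=\Omega(n)$ and hence $|\langle A\rangle|\le n^2=O(k^2)$.

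The proof presents essentially no obstacle; the only mildly delicate piece is handling boundary effects of the finite grid when applying the isoperimetric inequality, which the case split above dispatches cleanly. Raising thresholds strictly above $2$ only sharpens the perimeter argument, since then each newly infected vertex has even fewer inactive neighbors, so the $O(k^2)$ bound continues to hold a fortiori.
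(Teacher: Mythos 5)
Your argument is correct and matches the paper's approach: both rely on the perimeter-monotonicity property of $2$-neighbor bootstrap percolation on the grid combined with the $\Omega(\sqrt{m})$ isoperimetric lower bound on perimeters. The only cosmetic differences are that you use the edge boundary and prove monotonicity directly by a one-vertex-at-a-time update (also carefully handling the $|\langle A\rangle| > n^2/2$ regime), whereas the paper uses the vertex boundary and cites the monotonicity observation from the literature.
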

 \begin{proof}
 Embed the $n \times n$ grid $G=\{1, \ldots, n\}\times \{1, \ldots ,n\}$ in $H=\{0, \ldots, n+1\}\times \{0, \ldots, n+1\}$ in the natural way. 
 Given a subset $A$ of $G$, the \emph{perimeter} of $A$ is the set of all vertices not belonging to $A$ having a neighbor in $A$.
 The crucial observation is that if $A$ is a set of infected seeds, the perimeter of $A$ can never increase during the contagion process~\cite{balogh1998random}.
As the perimeter of $A$ is at most $4k$ the infected set has perimeter at most $4k$ as well. The result follows as every set $A \subseteq \{1, \ldots, n\}\times \{1, \ldots ,n\}$ of size 
$m$ has perimeter $\Omega(\sqrt{m})$. 
 \end{proof}

Using Theorem~\ref{thm:exp} we have that minimizing contagion on the $n$ by $n$ grid with $k=|A|$ can be solved in time $2^{O(k^2)}\poly(n)$. We simply apply the algorithm in Theorem~\ref{thm:exp} to $\langle A\rangle$. Alternatively we can use exhaustive search over all subsets of edges in the graph induced on $\langle A\rangle$ to solve\footnote{For minimizing contagion using the FPT algorithm may be preferable as it may run significantly faster if the optimal solution has cardinality $o(k^2)$.} both minimizing or stopping contagion.
 We can do better using the following fact:
 \begin{lemma}
 Let $G$ be a subgraph of an $n$ by $n$ grid with $r$ vertices. Then $G$ has treewidth $O(\sqrt{r})$.
 \end{lemma}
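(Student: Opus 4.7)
The plan is to observe that any subgraph of the grid is planar, and then invoke the classical $O(\sqrt{r})$ treewidth bound for planar graphs on $r$ vertices.

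First, I would note that the $n \times n$ grid embeds in the plane without crossings: place each vertex $(i,j)$ at the point $(i,j)\in\mathbb{R}^2$ and draw each edge as the corresponding axis-aligned unit segment. Any subgraph $G$ of the grid inherits this planar embedding and is therefore planar. Next, I would apply the Lipton--Tarjan planar separator theorem: every planar graph on $r$ vertices admits a vertex separator of size $O(\sqrt{r})$ whose removal leaves components of size at most $2r/3$. Recursing on each piece and using the standard construction that turns a recursive family of balanced separators into a tree decomposition yields a decomposition of $G$ of width $O(\sqrt{r})$, which is the lemma.

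An alternative, more self-contained route specific to grids would first reduce to the case that $G$ is connected (since treewidth is the maximum over connected components), and then argue by cases on the bounding box dimensions $a \times b$ with $a\le b$. When $a \le \sqrt{r}$, the graph $G$ is a subgraph of the $a \times b$ grid, which has treewidth $a \le \sqrt{r}$, so we are done. The remaining case is when $a > \sqrt{r}$, so that $G$ has $r$ vertices distributed sparsely across more than $\sqrt{r}$ columns. Here I would recursively construct a tree decomposition by finding a column (or row) slice whose removal roughly balances the remaining vertices while contributing only $O(\sqrt{r})$ vertices to the separator, using an averaging argument over a central band of columns that together contain half of the vertices.

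The main obstacle, if one pursues the second route, is making the separator argument work when vertices are concentrated unevenly among columns: a single column might contain many vertices, but then the bounding box is thin in the orthogonal direction, forcing a case split between column-based and row-based separators. This is precisely the difficulty that the planar separator theorem automates, so my preferred write-up is the short invocation via planarity; the grid-specific argument is really only useful if one wants to make the hidden constants explicit.
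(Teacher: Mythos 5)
Your proposal is correct and your preferred route — observe that any subgraph of the grid is planar and invoke the classical $O(\sqrt{r})$ treewidth bound for $r$-vertex planar graphs — is exactly the paper's one-line proof. The additional grid-specific separator argument you sketch is a reasonable alternative but is not needed.
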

 \begin{proof}
Every $m$-vertex planar graph has treewidth $O(\sqrt{m})$.
 \end{proof}
We get:
\begin{corollary}
Let $G=(V,E)$ be the $n$ by $n$ grid. Suppose $H=(V,E')$ where $E' \subseteq E$ and every vertex has a threshold of at least $2$. Let $A$ be the seed set with $k=|A|.$
Then stopping contagion and minimizing contagion can be solved in time $2^{O(k)}\poly(n)$.
\end{corollary}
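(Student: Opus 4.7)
The plan is to reduce the problem on $H$ to a bounded-treewidth instance on a small induced subgraph, and then invoke Theorem~\ref{thm:treewidth}. The first step is to apply the bounded-spread lemma for the grid: since $H$ is a spanning subgraph of $G$ and deleting edges never enlarges the infected set, we have
\[
|\langle A \rangle_H| \;\le\; |\langle A \rangle_G| \;=\; O(k^2).
\]
Now observe that every vertex outside $\langle A \rangle_H$ remains uninfected in $H$ under every subset of edge deletions (again by monotonicity of bootstrap percolation under edge removal). Hence edges not lying inside $\langle A \rangle_H$ are useless to delete: removing them affects neither $\langle A \rangle$ nor which vertices of $B$ become infected. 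It follows that both problems on $H$ reduce optimally to the corresponding problems on the induced subgraph $H' := H[\langle A \rangle_H]$, where for stopping contagion the protected set is replaced by $B \cap V(H')$ (vertices of $B$ outside $V(H')$ are already safe without any deletion).

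The second step is to control the treewidth of $H'$. Since $H'$ is a subgraph of the grid on $O(k^2)$ vertices, it is planar, and the preceding Lemma gives treewidth $\tau = O(\sqrt{k^2}) = O(k)$. Its maximum degree is at most $\Delta = 4$ (inherited from the grid), and every vertex threshold is at least $2$, so in the notation of Theorem~\ref{thm:treewidth} we have $\min\{r, \max\{\Delta, 2\}\} \le 4$.

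Finally, applying Theorem~\ref{thm:treewidth} to $H'$ yields a running time of
\[
O\bigl(\tau \cdot 1296^{\tau} \cdot 4^{4\tau} \cdot \poly(n)\bigr) \;=\; 2^{O(k)}\,\poly(n),
\]
as required. The only substantive step is the reduction to $H'$; the rest is a direct concatenation of the two preceding Lemmas with Theorem~\ref{thm:treewidth}. A small point worth noting in writing this out carefully is that running the bootstrap process on $H'$ with the inherited thresholds produces the same $\langle A \rangle$ as on $H$, since the vertices of $H$ adjacent to $V(H')$ but outside $V(H')$ are never infected and therefore contribute nothing to any threshold count; this ensures the reduction is exact rather than merely an upper bound.
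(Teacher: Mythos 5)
Your proof is correct and follows the paper's approach exactly: apply the bounded-spread lemma to bound $|\langle A\rangle|$ by $O(k^2)$, restrict to the induced subgraph on $\langle A\rangle$, invoke the $O(\sqrt{m})$-treewidth bound for subgraphs of the grid, and feed the result into Theorem~\ref{thm:treewidth}. Your write-up is actually more careful than the paper's one-line proof --- in particular you spell out why the restriction to $H[\langle A\rangle_H]$ is lossless (monotonicity of bootstrap percolation under edge deletion, and the fact that boundary vertices outside $\langle A\rangle_H$ are never active so contribute nothing to threshold counts), which the paper leaves implicit.
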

\begin{proof}
For solving either problems we only need to consider the subgraph of $G, \langle A\rangle$. The result now follows from 
Theorem~\ref{thm:treewidth}.
\end{proof}

Similarly, for a planar graph where every vertex has threshold at least $2$ and at most $b$ and every subset $A$ of size $k$ infects at most $f(k)$ vertices, stopping contagion can be solved in time
$b^{O(\sqrt{f(k)})}\poly(n)$. 

\subsection{Sparse random graphs}
Consider the random graph $G(n,d/n)$ assuming all vertices have threshold larger than $1$.
Assuming $d \leq n^{1/2-\delta}$ for $\delta \in (0,1/2)$, it is known~\cite{feige2017contagious} that with high probability every set of size $O(\frac{n}{d^2\log d})$ does not infect more than $O(|A|\log d)$ vertices. Furthermore,
 it is known~\cite{feige2017contagious} that any set of size $O(n/d^2)$ has with high probability constant average degree. It follows that assuming $|A|=O(\frac{n}{d^2\log d })$ the optimal solution to minimizing contagion is of size $O(|A|\log d)$.  Therefore
 in random graphs with $|A| \leq O(\frac{n}{d^2\log d})$, minimizing contagion can be solved using Theorem~\ref{thm:exp} in time $O(2^{|A|\log d}\poly(n))$. As before, exhaustive search over all edges on the graph induced on $\langle A\rangle$ can solve both minimizing and stopping contagion in time $O(2^{|A|\log d}\poly(n))$ as well.
 
Using our local treewidth estimates, Theorem~\ref{thm:treewidth}, the bounded spread property and the fact that w.h.p the maximum degree of $G$ is $O(\log n/ \log \log n)$
we have the following improvement for the running time:
\begin{theorem}
Let $G:=G(n,d/n), \epsilon \in (0,1)$ and $\delta \in (0,1/2)$. Denote by $k$ to be the size of the seed set $A$. Suppose that $k \leq O(\min(n^{1-\epsilon},\frac{n}{d^2\log d}))$ and $d \leq n^{1/2-\delta}$, and that every vertex has threshold larger than $1$. Then w.h.p both minimizing contagion and stopping contagion can be solved in
time $$\exp\left(O\left(\frac{k\log^2 d \log \log n}{\log n}\right)\right)\poly(n).$$
\end{theorem}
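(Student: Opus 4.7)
The plan is to chain together three ingredients appearing earlier in the paper: the bounded spread property for $G(n,d/n)$ (attributed to~\cite{feige2017contagious} in the preceding discussion), the local treewidth upper bound of Theorem~\ref{thm:upperbound}, and the bounded-treewidth algorithm of Theorem~\ref{thm:treewidth}. The observation that enables the combination is that for both minimizing contagion and stopping contagion, any optimal edge-deletion solution lives inside the induced subgraph on $\langle A \rangle$: edges incident only to vertices that never get infected cannot affect the dynamics. Hence it suffices to run the treewidth-based algorithm on $G[\langle A \rangle]$.

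First, I would invoke bounded spread. Under the hypotheses $k \leq O(n/(d^2 \log d))$ and $d \leq n^{1/2-\delta}$, this yields $|\langle A \rangle| \leq s := O(k \log d)$ w.h.p. Because $\log d = O(\log n)$ and $k \leq n^{1-\epsilon}$, we have $s \leq n^{1-\epsilon/2}$ for large $n$, so Theorem~\ref{thm:upperbound} still applies with $m = s$ and gives w.h.p.\ $\tau := \tw(G[\langle A \rangle]) \leq O(s \log d / \log n) = O(k \log^2 d / \log n)$.

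Next, I would use the standard Chernoff-plus-union-bound estimate that w.h.p.\ the relevant maximum degree is $\Delta = O(\log n / \log \log n)$ (for the parameter range in question; if $d$ is so large that this fails, the max degree within $G[\langle A \rangle]$ is still polylogarithmic because $|\langle A \rangle| \cdot d/n$ is small). Feeding $\tau$ and $\Delta$ into Theorem~\ref{thm:treewidth}, and noting that since thresholds are only assumed $>1$ the factor $\min\{r,\max\{\Delta,2\}\}^{4\tau}$ is at most $\Delta^{O(\tau)}$, the running time is $O(\tau \cdot 1296^{\tau} \Delta^{4\tau} \poly(n)) = \exp(O(\tau \log \log n)) \poly(n) = \exp\!\bigl(O(k \log^2 d \log \log n / \log n)\bigr) \poly(n)$, which is exactly the claimed bound.

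The only real obstacle is keeping the parameters consistent after the bounded spread step inflates the set size from $k$ to $O(k \log d)$: the combined hypothesis $k \leq \min(n^{1-\epsilon}, n/(d^2 \log d))$ is precisely what one needs to remain within the range of validity of both the bounded spread estimate and Theorem~\ref{thm:upperbound}. Once that consistency is verified, the rest of the argument is a straightforward substitution, and since all three high-probability events hold simultaneously by a union bound, the algorithm achieves the desired running time w.h.p.
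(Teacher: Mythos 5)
Your proof is correct and follows essentially the same route as the paper: restrict attention to $G[\langle A \rangle]$ via the bounded-spread property of~\cite{feige2017contagious}, bound its treewidth via Theorem~\ref{thm:upperbound}, control the maximum degree, and invoke the bounded-treewidth algorithm of Theorem~\ref{thm:treewidth}. In fact you supply more detail than the paper's one-sentence proof, including the parameter-range bookkeeping needed after the spread inflates the relevant set size from $k$ to $O(k\log d)$.
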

\begin{proof}
    As before we can solve either problem on $\langle A\rangle$ using the upper bound on the treewidth from Theorem~\ref{thm:upperbound}, the fact that with high probability $|\langle A\rangle| \leq O(\log d|A|)$ and the algorithm for graphs of bounded treewidth for stopping or minimizing contagion. 
\end{proof}

One can derive similar algorithms for stopping contagion in random $d$-regular graphs (where $d$ is a constant). As the details are very similar to the analysis of the binomial random graph they are omitted.
\subsection{Noisy trees}~\label{sec:NT}

We now devise an algorithm for stopping contagion and minimizing contagion for noisy trees. To achieve this we first prove that for forests every sets of seeds does not spread by much and furthermore this property is maintained after adding a "small" number of edges on top of the edges belonging to the forest. Then we use similar ideas to 
Theorem~\ref{thm:smallworld} and prove that noisy trees are locally sparse in the sense that every subsets of vertices of cardinality $k$ spans w.h.p $k+o(k)$ edges assuming $k$ is not too large. We use this property to prove that any subset $A$ of $k$ seeds infects w.h.p $O(k)$ vertices. Thereafter we can use the algorithms for bounded treewidth to solve either minimizing contagion or stopping contagion on $\langle A \rangle$. We assume throughout this section that $\epsilon \in (0,1)$ is a sufficiently small constant ($\epsilon <1/100$ would suffice for our proofs to go through).

 Let $T$ be an $n$-vertex tree with max degree $\Delta$ and let $T'$ be the noisy tree obtained from $T$.
Here we show that with high probability every set of $k\leq n^{\epsilon}$ seeds does not infect more than $ck$ additional nodes where $c$ is an absolute constant.
One key ingredient is proving that such noisy trees are locally sparse. 
%Recall that $m(G,2)$ is\footnote{the results in this section easily extend when every vertex has threshold at least 2.} the minimal size of a contagious set in a graph $G$. 

\begin{theorem}\label{thm:spreadtree}
Let $T$ be a tree of maximum degree $\Delta$ and let $T'$ be the noisy graph obtained from $T$.
Suppose $k, \Delta\leq n^{\epsilon}$. Then with high probability every set of $k$ seeds infects no more than $ck$ vertices where $c$ is some absolute constant.
\end{theorem}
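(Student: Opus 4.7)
The plan is to combine a local sparsity estimate for $T'$, similar to the one inside the proof of Theorem~\ref{thm:smallworld}, with a \emph{supply forest} structural observation that forces the infected set to contain many noisy edges whenever it is large relative to the seed set. First I would establish the sparsity lemma: w.h.p., every connected subgraph $H$ of $T'$ of size $s$ with $s \leq n^{O(\epsilon)}$ contains at most $O(s(\log s + \log \Delta)/\log n)$ noisy (non-tree) edges. Under $k, \Delta \leq n^{\epsilon}$ this bound is $O(\epsilon s)$, which can be made arbitrarily small by shrinking $\epsilon$. The proof mirrors the union bound in Theorem~\ref{thm:smallworld}: use Lemma~\ref{lem:bagchi} to count vertex sets of $T$ (parameterized by their number of tree components) that underlie connected subgraphs of $T'$, together with independence of the noisy edges to control the excess edge count; a further union bound over $s \leq n^{\epsilon}$ preserves $o(1)$ failure.

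The next ingredient is structural. Assume for contradiction that $|\langle A \rangle_{T'}| > 3k$ for some $A$ with $|A| = k$, and truncate the process at the moment it has infected $m = 3k+1$ vertices, giving a set $S$. For each non-seed $v \in S$ pick two neighbors $u_1(v), u_2(v) \in S$ that were infected strictly before $v$ (they exist because thresholds are $\geq 2$), and form $F_{\text{sup}} = \{(v,u_1(v)),(v,u_2(v)) : v \in S \setminus A\}$. I would verify that these $2(m-k)$ edges are all distinct: a repeated edge $(v,u)$ would force both $v$ to supply $u$ and $u$ to supply $v$, so each would have to be infected strictly before the other. Letting $\gamma$ denote the number of connected components of $T[S]$, the forest $T[S]$ has $m-\gamma$ edges, so $F_{\text{sup}}$ contains at least $2(m-k)-(m-\gamma) \geq m-2k+1$ noisy edges.

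To finish, decompose $S$ into the connected components $C_1,\ldots,C_j$ of $T'[S]$. Each component contains at least one seed of $A$, so $j \leq k$. By the sparsity lemma applied to each $C_i$ (a connected subgraph of $T'$ of size at most $m = 3k+1 \leq n^{O(\epsilon)}$), the total number of noisy edges in $T'[S]$ is at most $\sum_i O(\epsilon |C_i|) = O(\epsilon m)$. Matching this against the lower bound from the supply forest gives $O(\epsilon m) \geq m-2k+1$, which for sufficiently small $\epsilon$ forces $m \leq 2k/(1-O(\epsilon))$, contradicting $m = 3k+1$; the theorem follows with $c=3$. The hardest step I expect is making the sparsity lemma uniform in $s$: Theorem~\ref{thm:smallworld} is stated for a single size $k$, and one must re-parameterize by the tree-component structure of the subgraph and sum a geometric-type contribution over both sizes and component counts. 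A subtler but crucial point is the distinctness of the $2(m-k)$ supply edges; without it one only obtains $e(T'[S]) \geq m-k$, yielding the useless noisy-edge lower bound $X \geq \gamma - k$ that collapses for $\gamma \leq k$.
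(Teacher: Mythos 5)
Your supply--forest argument is a genuinely different (and arguably more robust) route than the one in the paper. The paper instead passes through the parameter $m(F,2)$, the minimum contagious set of the forest $F = T[A\cup B]$: Lemma~\ref{lem:spreadtree} gives $m(F,2)\geq (|A|+|B|)/2$, Lemma~\ref{lem:edgeaddition} shows adding noisy edges can decrease $m(\cdot,2)$ by at most their number, the edge-span theorem bounds that number, and $m(F',2)\leq |A|$ closes the loop. Your direct double-counting of ``supply'' edges replaces the $m(\cdot,2)$ machinery entirely; both routes ultimately rest on the same edge-span estimate.

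One technical point in your write-up needs fixing, and it is worth noting because it actually illustrates an advantage of your approach. Your sparsity lemma claims that a $T'$-connected subgraph on $s$ vertices has at most $O\bigl(s(\log s+\log\Delta)/\log n\bigr)$ noisy edges. This is not what Theorem~\ref{thm:edgespan} gives: that theorem bounds the \emph{total} edge count of $T'[S]$ by $s + O(\epsilon s)$, so if $T[S]$ has $\gamma$ tree components (hence $s-\gamma$ tree edges) the noisy-edge count is only bounded by $\gamma + O(\epsilon s)$, and $\gamma$ can be as large as $s$. (For instance, $s$ vertices that are pairwise non-adjacent in $T$ but joined into a path by $s-1$ noisy edges satisfy the edge-span bound while having $\sim s$ noisy edges.) So the cleaner step ``total noisy edges $\leq O(\epsilon m)$'' does not hold. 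However, your argument survives this correction untouched: you already computed the supply-forest lower bound as $m-2k+\gamma$ (before you discarded $\gamma$ via $\gamma\geq 1$), and the corrected upper bound is $\gamma + O(\epsilon m)$; the two $\gamma$'s cancel, leaving $m - 2k \leq O(\epsilon m)$ and hence $m \leq 2k(1+O(\epsilon))$, contradicting $m = 3k+1$. Keep $\gamma$ on both sides rather than lower-bounding it by $1$, and state the noisy-edge count as $\gamma + O(\epsilon s)$ (equivalently, phrase the lemma as bounding the total edge count, as the paper does); then the proof is correct.

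Two smaller comments. The distinctness argument for the $2(m-k)$ supply edges is correct and indeed necessary; as you note, without it the count degenerates. Also, the uniformity over $s$ that you flag as the hardest step is in fact supplied verbatim by the paper's Corollary~\ref{cor:unionbound} (a union bound over $s\leq k$), so there is less work there than you anticipated.
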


We need a few preliminaries before we can prove Theorem~\ref{thm:spreadtree}.

When $T$ is a tree, a subset of vertices with $k$ vertices and $\ell$ components spans exactly $k-\ell$ edges.
A simple modification of the proof of Theorem~\ref{thm:smallworld} yields:

% \color{blue}
\begin{theorem}\label{thm:edgespan}
Let $T'$ be the noisy tree. Then, with probability $1 - O(1/n)$, every connected subset of vertices of size $k$ in $T'$ spans at most 
\begin{align*}
	(k - 1) + 1 + O\left(\frac{k (\log k+ \log \Delta)}{\log n}\right)
\end{align*}
edges.
\end{theorem}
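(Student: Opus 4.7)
The plan is to mimic the proof of Theorem~\ref{thm:smallworld}, but track edges rather than treewidth. Fix a candidate $k$-vertex subset $S$ and let $\ell$ denote the number of connected components of the subgraph $H$ of $T$ induced on $S$. Because $T$ is a tree, each component of $H$ is a subtree, so $H$ contains exactly $k-\ell$ edges. Letting $e_r(S)$ be the number of noise edges with both endpoints in $S$, the total number of edges that $S$ spans in $T'$ is $(k-\ell)+e_r(S)$. For $S$ to be a counterexample to the theorem with excess parameter $r$, we need this total to exceed $k-1+1+r=k+r$, which forces $e_r(S)\ge \ell+r+1$ noise edges inside $S$.

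Next I would count the subgraphs and estimate the noise probability. Using the same decomposition as in Theorem~\ref{thm:smallworld}, the number of $k$-vertex subsets $S$ of $T$ whose induced subgraph in $T$ has exactly $\ell$ components can be bounded by
\begin{align*}
\binom{k-1}{\ell-1}\prod_{i=1}^{\ell} n\Delta^{2(s_i-1)}\ \le\ \binom{k-1}{\ell-1}\, n^{\ell}\Delta^{2k},
\end{align*}
where $s_1+\dots+s_\ell=k$ are the component sizes and Lemma~\ref{lem:bagchi} bounds the number of connected subtrees of a given size. Conditioning on $S$, each of the at most $\binom{k}{2}$ non-tree pairs inside $S$ is an edge of $T'$ independently with probability $1/n$, so the probability that $e_r(S)\ge \ell+r+1$ is at most $\binom{k^2}{\ell+r+1}\,n^{-(\ell+r+1)}\le (k^2/n)^{\ell+r+1}$.

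I would then take a union bound over $1\le\ell\le k$, obtaining
\begin{align*}
\sum_{\ell=1}^{k}\binom{k-1}{\ell-1}\,n^{\ell}\Delta^{2k}\cdot\Bigl(\frac{k^2}{n}\Bigr)^{\ell+r+1}
\ \le\ \Delta^{2k}\,\frac{k^{2(r+1)}}{n^{r+1}}\sum_{\ell=1}^{k} k^{2\ell}\,\binom{k-1}{\ell-1}
\ \le\ \frac{\Delta^{2k}\,k^{O(k)}\,k^{2r}}{n^{r+1}},
\end{align*}
where the cancellation between $n^{\ell}$ in the subset count and $n^{-\ell}$ in the noise probability is what makes the sum well-behaved. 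Taking logarithms, this bound is $O(1/n)$ as soon as $r\log n$ dominates $O\!\left(k(\log k+\log\Delta)\right)$, i.e.\ for $r=O\!\left(\frac{k(\log k+\log\Delta)}{\log n}\right)$, which is exactly the slack claimed.

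Finally I would package the result: the expression above bounds the probability that \emph{some} $k$-vertex $S$ spans more than $k+r$ edges in $T'$, which is stronger than the connectedness-conditioned statement we need, so the conclusion of Theorem~\ref{thm:edgespan} follows. The only real subtlety compared to Theorem~\ref{thm:smallworld} is recognizing that in a tree the relation ``edges $=$ vertices $-$ components'' is exact, which removes the treewidth bookkeeping and makes the excess parameter cleanly equal to $e_r(S)-(\ell-1)$; the rest of the work is the same union-bound computation, and I do not expect any serious obstacle beyond this tidying.
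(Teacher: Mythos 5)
Your proof is correct and follows essentially the same route as the paper's: decompose candidate $k$-sets by the number $\ell$ of components of the induced forest in $T$, count such sets via Lemma~\ref{lem:bagchi} with the $\binom{k-1}{\ell-1}\,n^{\ell}\Delta^{2k}$ bound, observe that exceeding the edge threshold forces roughly $\ell+r$ noise edges (whose probability is at most $\binom{k^2}{\ell+r}n^{-(\ell+r)}$), and exploit the $n^{\ell}\cdot n^{-\ell}$ cancellation in the union bound. Your accounting is marginally cleaner in that you drop the explicit connectedness conditioning and bound the stronger event (any $k$-set spanning too many edges); the two calculations give the same asymptotics, differing only in off-by-one bookkeeping and the exact power of $n$ in the denominator. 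The one small omission is the paper's preliminary reduction to $k,\Delta\leq n^{\epsilon}$ (for larger values the claimed bound is trivial), which you implicitly rely on when you assert that $r\log n$ dominates $O(k(\log k+\log\Delta))$ at the chosen $r$; it is worth stating this assumption explicitly before the union-bound computation.
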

\begin{proof}
First observe that if $k$ or $\Delta$ are larger than $n^{\Omega(1)}$ the statement in the Theorem follows immediately from the fact that in $G(n,1/n)$ w.h.p every subset $|S|$ of vertices spans at most $2|S|$ edges~\cite{krivelevich2015smoothed}. Hence we shall assume that $\Delta,k \leq n^{\epsilon}.$
	We employ a nearly identical argument to that used in the proof of Theorem~\ref{thm:smallworld}. Here, we are interested in bounding the probability some $k$-vertex subgraph of $T$ ends up connected in $T'$ with at least $k - 1 + r$ edges.
	
	Now fix such a $k$-vertex subgraph of $T$ with $\ell$ connected components, $H$. Since $T$ is a tree, any such subgraph will contain exactly $k - \ell$ edges; $\ell - 1$ random edges will be required to connect $H$ and $r$ additional to reach the $k - 1 + r$ edge threshold. As argued in the proof of Theorem~\ref{thm:smallworld}, this happens with probability upper bounded by
	
	% \color{blue}
	\begin{align*}
   \binom{k^2}{r + \ell - 1} n^{-(r + \ell - 1)}
	\end{align*}
	
    We can now take a union bound over all possible subgraphs (of $T$) on $k$ vertices and $1 \le \ell \le k$ connected components in $T$; this yields the same upper bound on the probability $T'$ contains a connected subgraph with more than $k+r$ edges computed in the proof of Theorem~\ref{thm:smallworld}: 
    
	\begin{align*}
	 \frac{k^{2r}\Delta^{2k}}{n^{r - 1}} k^{3k+1.}
	\end{align*}
	
	Setting $r$ as in the claim ensures the probability some $k$-vertex connected subgraph in $T'$ has at least $k + r$ edges is $O(1/n)$, as desired. 
\end{proof}

We now easily extend this argument to all connected subgraphs of size {\bf at most} $k$ in $T'$, rather than simply those size exactly $k$.
\begin{corollary}
	Let $T, T'$ be as in Theorem~\ref{thm:edgespan}. Suppose $t = n^{\epsilon}$. Then w.h.p every connected subgraph of $T'$ with $k \leq t$ vertices has at most 
	
	\begin{align*}
		(k - 1) + 1 + O\left(\frac{k (\log k+ \log \Delta)}{\log n}\right)
	\end{align*}
	
	edges. 
\end{corollary}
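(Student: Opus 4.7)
The plan is to derive the corollary as an immediate consequence of Theorem~\ref{thm:edgespan} via a union bound over the vertex count $k$. Theorem~\ref{thm:edgespan} already tells us that for each fixed $k$ the ``bad event'' $E_k$ (that some connected subgraph of $T'$ on exactly $k$ vertices spans more than $(k-1) + 1 + O(k(\log k + \log\Delta)/\log n)$ edges) has probability at most $O(1/n)$. The corollary just asks for all these events to fail simultaneously for $k$ ranging up to $t = n^{\epsilon}$.

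First I would fix the implicit constant $C$ in the $O(\cdot)$ expression of Theorem~\ref{thm:edgespan}, so that the excess bound reads $r(k) := \lceil Ck(\log k + \log \Delta)/\log n\rceil$. A quick pass through the proof of Theorem~\ref{thm:edgespan} confirms that the same $C$ works for every $k \leq t$: the final calculation upper bounds the failure probability by $k^{3k+1}k^{2r}\Delta^{2k}/n^{r-1}$, and taking logarithms and solving for the smallest $r$ that makes this at most $1/n$ gives a value of $r$ of exactly the stated form with a universal constant (using only $k,\Delta \leq n^\epsilon$, which holds uniformly for all $k \leq t$).

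Next I would apply the union bound:
\begin{align*}
\Pr\left[\bigcup_{k=1}^{t} E_k\right] \;\leq\; \sum_{k=1}^{t} \Pr[E_k] \;\leq\; t \cdot O(1/n) \;=\; O(n^{\epsilon - 1}) \;=\; o(1),
\end{align*}
since $\epsilon < 1$. On the complementary event, for every $1 \leq k \leq t$ simultaneously, every connected subgraph of $T'$ on exactly $k$ vertices has at most $(k-1) + 1 + r(k)$ edges, which is precisely the bound claimed.

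I do not anticipate a real obstacle here; the only care needed is to ensure the hidden constant in $O(\cdot)$ does not silently depend on $k$ (handled above) and to confirm that $t = n^\epsilon$ is small enough for the union bound to stay $o(1)$, which it is. The corollary is essentially a packaging of Theorem~\ref{thm:edgespan} and its proof uses no new ideas.
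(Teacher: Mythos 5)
Your proof is correct and matches the paper's own argument: the paper also derives the corollary by a direct union bound of the $O(1/n)$ failure probability from Theorem~\ref{thm:edgespan} over all $k \leq t = n^{\epsilon}$, giving $O(n^{\epsilon-1}) = o(1)$. Your extra care in checking that the hidden constant is uniform in $k$ is a worthwhile sanity check but does not constitute a different route.
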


\begin{proof}\label{cor:unionbound}
	From Theorem~\ref{thm:edgespan}, we know the probability that some connected subgraph on $k$ vertices exceeds this number of edges is $O(1/n)$. Taking a union bound over all $k \leq t$ concludes the proof.
\end{proof}

We can now prove:
\begin{theorem}
Let $T,T',k$ be as in Theorem~\ref{thm:edgespan}. Then, w.h.p every subset of vertices of size $k$ in $T'$ spans at most 
\begin{align*}
	(k - 1) + 1 + O\left(\frac{k (\log k+ \log \Delta)}{\log n}\right)
\end{align*}
edges.
\end{theorem}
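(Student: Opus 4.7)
The plan is to reduce the arbitrary-subset statement to the connected-subgraph statement already proved in Theorem~\ref{thm:edgespan} (and extended to all sizes $\le k$ in Corollary~\ref{cor:unionbound}), by decomposing an arbitrary $k$-vertex subset into its connected components in $T'$ and summing the per-component bound.

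More concretely, fix any subset $S \subseteq V(T')$ with $|S| = k$, and let $C_1,\dots,C_\ell$ be the connected components of the induced subgraph $T'[S]$, with $|C_i| = k_i$, so that $\sum_i k_i = k$. Each $C_i$ is a connected subgraph of $T'$ on at most $k \le n^{\epsilon}$ vertices, so by Corollary~\ref{cor:unionbound} applied with $t = k$, with probability $1 - O(1/n)$ the following holds simultaneously for every $C_i$:
\begin{align*}
   e(C_i) \;\le\; (k_i - 1) + 1 + O\!\left(\frac{k_i(\log k_i + \log\Delta)}{\log n}\right) \;=\; k_i + O\!\left(\frac{k_i(\log k + \log\Delta)}{\log n}\right),
\end{align*}
where I used $\log k_i \le \log k$ in the last step. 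The total number of edges $T'$ spans on $S$ is exactly $\sum_i e(C_i)$, so summing over $i$ yields
\begin{align*}
   \sum_{i=1}^\ell e(C_i) \;\le\; \sum_{i=1}^\ell k_i + O\!\left(\frac{(\log k + \log\Delta)}{\log n}\sum_{i=1}^\ell k_i\right) \;=\; k + O\!\left(\frac{k(\log k + \log\Delta)}{\log n}\right),
\end{align*}
which is exactly the desired bound $(k-1) + 1 + O(k(\log k + \log\Delta)/\log n)$. The high-probability event from Corollary~\ref{cor:unionbound} is a single event that governs all connected subgraphs of size at most $k$ simultaneously, so no further union bound is needed across subsets $S$: once the corollary's event holds, the conclusion holds for every $k$-vertex $S$.

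There is essentially no obstacle beyond bookkeeping. The only thing to check carefully is that the $+1$ additive slack and the $-1$ from the tree-edge count do not accumulate badly when summed over $\ell$ components; but since the connected bound collapses to $k_i + O(\cdot)$, summing gives $k + O(\cdot)$ regardless of $\ell$, which is the stated bound.
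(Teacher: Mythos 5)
Your proof is correct and takes essentially the same approach as the paper's: decompose the arbitrary $k$-vertex set into its connected components in $T'$, apply the per-component bound from Theorem~\ref{thm:edgespan} (via Corollary~\ref{cor:unionbound}), and sum. The only cosmetic difference is in bounding $\sum_i k_i \log k_i \le k\log k$ — you use $\log k_i \le \log k$ directly, while the paper rewrites $\log k_i = \log k - \log(k/k_i)$ and drops the nonnegative terms; both yield the same estimate.
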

\begin{proof}
The result Theorem~\ref{thm:edgespan} extends to arbitrary (not necessarily connected) subgraphs of $T'$ by decomposing an arbitrary subgraph with $k$ vertices of $F$ to its $\ell$ connected components with sizes $k_1 + \dots + k_\ell = k$. Applying the bound derived above to each of these connected subgraphs, we know that for some constants $c_1, \dots c_\ell$, we have that with high probability for $C=\max(c_1 \ldots c_\ell)$:

\begin{align*}
	\sum_{i = 1}^\ell \left[ (k_i - 1) + 1 + c_i \frac{k_i (\log k_i+ \log \Delta)}{\log n} \right] &\le k + \frac{C}{\log n} \sum_{i = 1}^\ell k_i \log k_i + C  \frac{k \log \Delta}{\log n} \\
	&\le k + \frac{C}{\log n} k \log k + C  \frac{k \log \Delta}{\log n} \\
	&= (k - 1) + 1 + O \left( \frac{k (\log k + \log \Delta)}{\log n} \right).
\end{align*}

The second inequality follows from rewriting $\log k_i$ as $\log k-\log (k/k_i)$ and collecting the positive terms.
This concludes the proof.
\end{proof}

As before it is easy to extend the Theorem above to all subsets of $T'$ of size at most $k$. Details omitted. 

We proceed to prove the bounded spread property of noisy trees. We need a few more auxiliary Lemmas.
\begin{lemma}\label{lem:spreadtree}
Let $G$ be a forest. Suppose the threshold of every vertex is at least $2$. Then any set of $k$ seeds in $G$ activates less than $k$ additional vertices in $G$.
\end{lemma}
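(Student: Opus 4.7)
The plan is a one-shot edge-counting argument exploiting the fact that a forest is sparse. Let $A_0$ be the set of $k$ seeds and let $A = \langle A_0 \rangle$ denote the final activated set. Set $H = G[A]$, the subgraph of $G$ induced on $A$. Since $G$ is a forest, $H$ is also a forest, and in particular
\begin{equation*}
    e(H) \leq |A| - 1.
\end{equation*}
I would then produce a matching lower bound on $e(H)$ from the activation dynamics.

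The key observation is that each vertex $v \in A \setminus A_0$ is infected at some time step $i_v \geq 1$, at which moment $|N(v) \cap A_{i_v - 1}| \geq 2$ (using the hypothesis that every threshold is at least $2$). Since $A_{i_v - 1} \subseteq A$, this gives at least $2$ edges in $H$ incident to $v$ whose other endpoint is a vertex activated strictly before $v$. Charging each such edge to its later-activated endpoint $v$ gives an injection from a multiset of size $\geq 2 |A \setminus A_0|$ into $E(H)$: different $v$'s give disjoint contributions because the charged endpoint is $v$ itself, and for a fixed $v$ the two edges are distinct since they go to two distinct neighbors in $A_{i_v - 1}$. Hence
\begin{equation*}
    e(H) \geq 2 \, |A \setminus A_0| = 2(|A| - k).
\end{equation*}

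Combining the two bounds yields $2(|A| - k) \leq |A| - 1$, so $|A| \leq 2k - 1$, and therefore $|A \setminus A_0| \leq k - 1 < k$, which is exactly the claim. There is no real obstacle here; the only point requiring a moment of care is the charging argument for the lower bound on $e(H)$, specifically verifying that the edges used for different newly infected vertices do not collide (which follows immediately from charging each edge to its chronologically later endpoint). Note that the forest assumption is used only through the bound $e(H) \leq |A| - 1$, and that the proof works verbatim for forests rather than just trees, matching the statement.
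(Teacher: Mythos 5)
Your proof is correct and follows essentially the same route as the paper: lower-bound the edge count of the activated subgraph by $2\,|A\setminus A_0|$ using the threshold-$\geq 2$ condition, upper-bound it by $|A|-1$ using the forest property, and compare. The only difference is that you spell out the charging argument (charge each of the two activating edges to its later-activated endpoint) that the paper states tersely as ``$|E(A\cup B)|\geq 2|B|$,'' which is a welcome clarification but not a different method.
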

\begin{proof}
If a set of seeds $A$ activates a set $B$ of additional vertices then $|E(A \cup B)| \geq 2|B|$ as every vertex in $B$ must be adjacent to two active vertices.
On the other hand, as $G$ is a forest we have that $|E(A \cup B)|\leq |A|+|B|-1$. Therefore $|B|<|A|$ which is what we wanted to prove.
\end{proof}
 
\begin{lemma}\label{lem:edgeaddition}
Let $G$ be a graph and suppose we add a set of $k$ edges to $G$. Call the resulting graph $H$.
Then $m(H,2) \geq m(G,2)-k$.
\end{lemma}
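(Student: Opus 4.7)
\medskip

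\noindent\textbf{Proof plan for Lemma~\ref{lem:edgeaddition}.}

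The plan is to take a minimum contagious set $S$ in $H$ (so $|S|=m(H,2)$) and show that adding at most $k$ carefully chosen vertices to $S$ yields a contagious set in $G$. This will give $m(G,2) \le |S|+k = m(H,2)+k$, which is the desired inequality.

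The key device is a \emph{witness assignment}. Let $e_1,\ldots,e_k$ denote the added edges (so $E(H)=E(G)\cup\{e_1,\ldots,e_k\}$), and run bootstrap percolation on $H$ from seed set $S$, producing $S=A_0\subseteq A_1\subseteq\cdots\subseteq A_T=V$. For each $v\in A_t\setminus A_{t-1}$ with $t\ge1$, fix an arbitrary pair of \emph{witnesses} $w_1^v,w_2^v\in A_{t-1}$ adjacent to $v$ in $H$ (such a pair exists because the threshold is $2$). Call $v$ \emph{bad} if at least one of the edges $(v,w_1^v),(v,w_2^v)$ is one of the $e_i$'s, and let $X$ be the set of bad vertices. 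The crucial counting observation is that each added edge $e_i=(u_i,v_i)$ can witness at most one infection: whichever of $u_i,v_i$ is activated later is the only endpoint that could use $e_i$ among its two witnesses (the earlier endpoint was activated before the later one belonged to any $A_s$). Hence $|X|\le k$.

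Set $S'=S\cup X$, so $|S'|\le m(H,2)+k$. I will then prove by induction on $t$ that $A_t\subseteq B_t$, where $B_0,B_1,\ldots$ is the bootstrap process on $G$ from $S'$. The base case holds since $A_0=S\subseteq S'=B_0$. For the step, take $v\in A_{t+1}\setminus A_t$; if $v\in X$ then $v\in S'\subseteq B_{t+1}$, while if $v\notin X$ both witnesses $w_1^v,w_2^v$ are neighbors of $v$ in $G$, and by the inductive hypothesis they lie in $B_t$, so $v\in B_{t+1}$. Since $A_T=V$, we get $B_T=V$, proving that $S'$ is contagious in $G$.

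I do not anticipate a real obstacle; the only subtle point is the witness-counting step, which must be stated carefully to ensure that one added edge is charged to at most one bad vertex. One further minor point is whether the lemma is meant for uniform threshold exactly $2$ or for arbitrary thresholds $\ge 2$; the argument goes through essentially verbatim in the latter case by picking $t(v)$ witnesses per infection and declaring $v$ bad whenever any witness edge is new, which still charges each added edge to at most one vertex.
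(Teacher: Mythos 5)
Your proof is correct and takes essentially the same approach as the paper's: start from a minimum contagious set in $H$, add to it the vertices whose activation depended on one of the $k$ new edges, and observe that each added edge is responsible for at most one such vertex, giving a contagious set in $G$ of size at most $m(H,2)+k$. Your write-up is in fact more careful than the paper's informal sketch, since the witness-assignment device and the charging argument make the bound of $k$ added vertices explicit.
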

\begin{proof}
Suppose towards contradiction that $m(H,2)<m(G,2)-k$. Consider a contagious set $A$ in $H$ of minimal size.
$A$ can be turned into a contagious set in $G$ by adding no more than $k$ vertices: we run the contagious process on $G$ and whenever
we reach a vertex that was infected from $A$ (in $H$) because of an additional edge in $H$ we simply add it to $A$. The total number of vertices added in this way is at most $k$. Therefore, if  $m(H,2)<m(G,2)-k$
we would have found a contagious set in $G$ of size smaller than $m(G,2)$ which is absurd.
\end{proof}
Lemmas~\ref{lem:spreadtree} and ~\ref{lem:edgeaddition} easily extend to the case where the threshold of every vertex is at least 2.

We can now prove Theorem~\ref{thm:spreadtree}.

\begin{proof}
We prove the result for the case every vertex has threshold exactly $2$. The result when the thresholds are at least $2$ is similar.
Consider a set $A$ of $k$ seeds. Suppose $A$ infects an additional set of vertices $B$. We now show that w.h.p for a large enough constant $c,$ $|B|$ must be smaller than
$c|A|$. Else, suppose that $A$ infects at least $c|A|$ vertices for some fixed constant $c$. Without loss of generality $A$ infects exactly $c|A|$ additional vertices.
Assuming that $c$ is sufficiency large, we have using Theorem~\ref{thm:edgespan} that w.h.p the number of edges in $T'$ added on top of $F$, the subgraph of $T$ induced on $A \cup B$ is smaller than $(|A|+|B|)/4$. In addition, $F$ 
satisfies by Theorem~\ref{thm:edgespan} the inequality $m(F,2) \geq (|A|+|B|)/2$. Therefore, by Lemma~\ref{lem:edgeaddition}, in $T'$ the subgraph $F'$ induced on $A \cup B$ satisfies w.h.p. $m(F',2) \geq (|A|+|B|)/4$.
On the other hand, we have that $m(F',2) \leq |A|$ as we assume $A$ infects $B$ in $F'$. Taking $c> 3$ leads to a contradiction concluding the proof.
\end{proof}

As the star with $n-1$ leafs shows, the spread of a subset of size $2$ in a noisy tree with degree $\Omega(n)$ can be w.h.p $\Omega(\log n).$ In addition, we believe that this is the worst possible spread: every subset of size $k$ in a noisy tree will not infect with high probability more than $O(k \log n)$ vertices. 
It seems likely that that restriction on $k$ in Theorem~\ref{thm:spreadtree} can be lifted and that the Theorem holds for arbitrary $k$. Whether this is the case is left for future work. 

Finally, we can leverage Theorem~\ref{thm:spreadtree}  to get algorithms for stopping contagion in noisy trees:
\begin{theorem}\label{thm:algorthm noisy}
Let $T$ be a tree and let $T'$ be the noisy tree obtained from $T$. Assume $|A|=k,\Delta \leq n^{\epsilon}$ and that every vertex has threshold larger than $1$. Let $m:=\max(\log \log n,\log \Delta)$. Then both minimizing contagion and stopping contagion can be solved in $T'$ in time
$$\exp\left(O\left(\frac{k(\log k+\log \Delta)m}{\log n}\right)\right)\poly(n).$$
\end{theorem}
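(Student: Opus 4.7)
The plan is to reduce the problem to a single call to the treewidth-based algorithm of Theorem~\ref{thm:treewidth} on the subgraph of $T'$ induced on $\langle A\rangle_{T'}$, and then to bound the relevant parameters (size, treewidth, maximum degree) of this induced subgraph using Theorems~\ref{thm:spreadtree} and~\ref{thm:smallworld}.

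First I would argue that we may restrict attention to $H := T'[\langle A\rangle_{T'}]$. For stopping contagion, any vertex of $B$ outside $\langle A\rangle_{T'}$ is never infected regardless of which edges we delete, so only $B \cap \langle A\rangle_{T'}$ is relevant; moreover, an edge with an endpoint outside $\langle A\rangle_{T'}$ is never traversed by the infection and so can be ignored. An identical observation holds for minimizing contagion, where only the subgraph induced on $\langle A\rangle_{T'}$ influences $|\langle A\rangle_{G'}|$. Since $\langle A\rangle_{T'}$ can be computed in polynomial time by simulating bootstrap percolation, the reduction is effective.

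Next I would estimate the three parameters of $H$ that feed into Theorem~\ref{thm:treewidth}. Theorem~\ref{thm:spreadtree}, applied with the stated assumptions $k,\Delta \leq n^{\epsilon}$, gives $|\langle A\rangle_{T'}| = O(k)$ w.h.p., so $H$ has $O(k)$ vertices. Since every subgraph of a tree is a forest, $t_s(T) \leq 1$ for all $s$; plugging this into Theorem~\ref{thm:smallworld} at parameter $O(k)$ yields $\tw(H) \leq \tau$ w.h.p., where
\begin{align*}
\tau = O\!\left(\frac{k(\log k + \log \Delta)}{\log n}\right).
\end{align*}
For the maximum degree, each vertex of $T'$ has degree equal to its degree in $T$ plus an independent $\mathrm{Binomial}(n-1,1/n)$ contribution from the perturbation; standard tail bounds give that w.h.p.\ the maximum degree of $T'$ is $\Delta' = O(\Delta + \log n/\log\log n)$, so that $\log \Delta' = O(\log \Delta + \log\log n) = O(m)$.

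Finally I would invoke Theorem~\ref{thm:treewidth} on $H$, whose runtime is $O(\tau\, 1296^{\tau} \min\{r,\max\{\Delta',2\}\}^{4\tau}\poly(n))$. Bounding $\min\{r,\Delta'\}^{4\tau}$ by $\exp(O(\tau m))$ and substituting the value of $\tau$ produces the claimed runtime $\exp(O(k(\log k+\log \Delta)m/\log n))\poly(n)$; a union bound over the two high-probability events (bounded spread and bounded local treewidth) ensures they hold simultaneously. The main conceptual step is the reduction to $H$, while the main technical content is imported wholesale from the earlier theorems; the only routine item that still needs a short verification is the high-probability bound on the maximum degree of $T'$.
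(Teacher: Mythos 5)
Your proposal is correct and takes essentially the same route as the paper: restrict to the induced subgraph on $\langle A\rangle_{T'}$, bound its size by $O(k)$ via Theorem~\ref{thm:spreadtree}, bound its treewidth via Theorem~\ref{thm:smallworld} (using that subgraphs of a tree have treewidth $\le 1$), bound the maximum degree of $T'$ to control the $\min\{r,\Delta'\}^{4\tau}$ factor, and invoke Theorem~\ref{thm:treewidth}. The paper's proof is a single sentence citing exactly these ingredients, while you spell out the reduction to $H = T'[\langle A\rangle_{T'}]$ and the degree calculation more explicitly; the content is the same.
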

\begin{proof}
    This follows the fact that w.h.p $|\langle A \rangle|=O(k)$, the upper bound on the treewidth of the subgraph induced on $\langle A \rangle$ from~\ref{thm:smallworld}, Theorem~\ref{thm:treewidth} and the fact that the maximum degree of $G(n,1/n)$ is $O(\log n/ \log \log n)$ with high probability.
\end{proof}
\section*{Acknowledgements}
We are very grateful to Michael Krivelevich who provided numerous valuable comments and links to relevant work. Josh Erde offered useful feedback. Finally we would like to thank the anonymous referees for helpful comments and suggestions. In particular we thank a reviewer for noting a gap in a claimed proof of a stronger lower bound of $\Omega(k\log d/\log n)$ of the local treewidth of $G(n,d/n)$. 

An inspiration for this paper has been the operation of the Oncology department at Haddasah Ein Karem hospital during the Covid-19 pandemic. Their professionalism and dedication are greatly acknowledged.  
\bibliographystyle{plain}
\bibliography{reference}

\begin{thebibliography}{10}

\bibitem{allard2020role}
Antoine Allard, Cristopher Moore, Samuel~V Scarpino, Benjamin~M Althouse, and
  Laurent H{\'e}bert-Dufresne.
\newblock The role of directionality, heterogeneity and correlations in
  epidemic risk and spread.
\newblock {\em arXiv preprint arXiv:2005.11283}, 2020.

\bibitem{amini2014bootstrap}
Hamed Amini and Nikolaos Fountoulakis.
\newblock Bootstrap percolation in power-law random graphs.
\newblock {\em Journal of Statistical Physics}, 155(1):72--92, 2014.

\bibitem{aspnes2006inoculation}
James Aspnes, Kevin Chang, and Aleksandr Yampolskiy.
\newblock Inoculation strategies for victims of viruses and the sum-of-squares
  partition problem.
\newblock {\em Journal of Computer and System Sciences}, 72(6):1077--1093,
  2006.

\bibitem{austrin2022perfect}
Per Austrin and Kilian Risse.
\newblock Perfect matching in random graphs is as hard as tseitin*.
\newblock In {\em Proceedings of the 2022 Annual ACM-SIAM Symposium on Discrete
  Algorithms (SODA)}, pages 979--1012. SIAM, 2022.

\bibitem{babay2022controlling}
Amy Babay, Michael Dinitz, Aravind Srinivasan, Leonidas Tsepenekas, and Anil
  Vullikanti.
\newblock Controlling epidemic spread using probabilistic diffusion models on
  networks.
\newblock {\em arXiv preprint arXiv:2202.08296}, 2022.

\bibitem{bagchi2006effect}
Amitabha Bagchi, Ankur Bhargava, Amitabh Chaudhary, David Eppstein, and
  Christian Scheideler.
\newblock The effect of faults on network expansion.
\newblock {\em Theory of Computing Systems}, 39(6):903--928, 2006.

\bibitem{balogh2006bootstrap}
J{\'o}zsef Balogh and B{\'e}la Bollob{\'a}s.
\newblock Bootstrap percolation on the hypercube.
\newblock {\em Probability Theory and Related Fields}, 134(4):624--648, 2006.

\bibitem{balogh2012sharp}
J{\'o}zsef Balogh, B{\'e}la Bollob{\'a}s, Hugo Duminil-Copin, and Robert
  Morris.
\newblock The sharp threshold for bootstrap percolation in all dimensions.
\newblock {\em Transactions of the American Mathematical Society},
  364(5):2667--2701, 2012.

\bibitem{balogh1998random}
J{\'o}zsef Balogh and G{\'a}bor Pete.
\newblock Random disease on the square grid.
\newblock {\em Random Structures \& Algorithms}, 13(3-4):409--422, 1998.

\bibitem{becchetti2021sharp}
Luca Becchetti, Andrea Clementi, Riccardo Denni, Francesco Pasquale, Luca
  Trevisan, and Isabella Ziccardi.
\newblock Sharp thresholds for a sir model on one-dimensional small-world
  networks.
\newblock {\em arXiv preprint arXiv:2103.16398}, 2021.

\bibitem{ben2011treewidth}
Oren Ben-Zwi, Danny Hermelin, Daniel Lokshtanov, and Ilan Newman.
\newblock Treewidth governs the complexity of target set selection.
\newblock {\em Discrete Optimization}, 8(1):87--96, 2011.

\bibitem{beveridge1998random}
Andrew Beveridge, Alan Frieze, and Colin McDiarmid.
\newblock Random minimum length spanning trees in regular graphs.
\newblock {\em Combinatorica}, 18(3):311--333, 1998.

\bibitem{bollobas1988isoperimetric}
B{\'e}la Bollob{\'a}s.
\newblock The isoperimetric number of random regular graphs.
\newblock {\em European Journal of combinatorics}, 9(3):241--244, 1988.

\bibitem{braunstein2016network}
Alfredo Braunstein, Luca Dall’Asta, Guilhem Semerjian, and Lenka
  Zdeborov{\'a}.
\newblock Network dismantling.
\newblock {\em Proceedings of the National Academy of Sciences},
  113(44):12368--12373, 2016.

\bibitem{chalupa1979bootstrap}
John Chalupa, Paul~L Leath, and Gary~R Reich.
\newblock Bootstrap percolation on a bethe lattice.
\newblock {\em Journal of Physics C: Solid State Physics}, 12(1):L31, 1979.

\bibitem{chuzhoy2019large}
Julia Chuzhoy and Rachit Nimavat.
\newblock Large minors in expanders.
\newblock {\em arXiv preprint arXiv:1901.09349}, 2019.

\bibitem{coja2014contagious}
Amin Coja-Oghlan, Uriel Feige, Michael Krivelevich, and Daniel Reichman.
\newblock Contagious sets in expanders.
\newblock In {\em Proceedings of the twenty-sixth annual ACM-SIAM symposium on
  discrete algorithms}, pages 1953--1987. SIAM, 2014.

\bibitem{cordasco2021parameterized}
Gennaro Cordasco, Luisa Gargano, and Adele~Anna Rescigno.
\newblock Parameterized complexity of immunization in the threshold model.
\newblock {\em arXiv preprint arXiv:2102.03537}, 2021.

\bibitem{cygan2015parameterized}
Marek Cygan, Fedor~V Fomin, {\L}ukasz Kowalik, Daniel Lokshtanov, D{\'a}niel
  Marx, Marcin Pilipczuk, Micha{\l} Pilipczuk, and Saket Saurabh.
\newblock {\em Parameterized algorithms}, volume~5.
\newblock Springer, 2015.

\bibitem{demaine2014structural}
Erik~D Demaine, Felix Reidl, Peter Rossmanith, Fernando~S{\'a}nchez Villaamil,
  Somnath Sikdar, and Blair~D Sullivan.
\newblock Structural sparsity of complex networks: Bounded expansion in random
  models and real-world graphs.
\newblock {\em arXiv preprint arXiv:1406.2587}, 2014.

\bibitem{diestel2005graph}
Reinhard Diestel.
\newblock Graph theory 3rd ed.
\newblock {\em Graduate texts in mathematics}, 173, 2005.

\bibitem{do2022note}
Tuan~Anh Do, Joshua Erde, and Mihyun Kang.
\newblock A note on the width of sparse random graphs.
\newblock {\em arXiv preprint arXiv:2202.06087}, 2022.

\bibitem{dreier2018local}
Jan Dreier, Philipp Kuinke, Ba~Le~Xuan, and Peter Rossmanith.
\newblock Local structure theorems for erd{\H{o}}s--r{\'e}nyi graphs and their
  algorithmic applications.
\newblock In {\em International Conference on Current Trends in Theory and
  Practice of Informatics}, pages 125--136. Springer, 2018.

\bibitem{ebrahimi2015complex}
Roozbeh Ebrahimi, Jie Gao, Golnaz Ghasemiesfeh, and Grant Schoenebeck.
\newblock Complex contagions in kleinberg's small world model.
\newblock In {\em Proceedings of the 2015 Conference on Innovations in
  Theoretical Computer Science}, pages 63--72, 2015.

\bibitem{enright2018deleting}
Jessica Enright and Kitty Meeks.
\newblock Deleting edges to restrict the size of an epidemic: a new application
  for treewidth.
\newblock {\em Algorithmica}, 80(6):1857--1889, 2018.

\bibitem{enright2021deleting}
Jessica Enright, Kitty Meeks, George~B Mertzios, and Viktor Zamaraev.
\newblock Deleting edges to restrict the size of an epidemic in temporal
  networks.
\newblock {\em Journal of Computer and System Sciences}, 119:60--77, 2021.

\bibitem{eppstein2002subgraph}
David Eppstein.
\newblock Subgraph isomorphism in planar graphs and related problems.
\newblock In {\em Graph Algorithms and Applications I}, pages 283--309. World
  Scientific, 2002.

\bibitem{erdHos1960evolution}
Paul Erd{\H{o}}s, Alfr{\'e}d R{\'e}nyi, et~al.
\newblock On the evolution of random graphs.
\newblock {\em Publ. Math. Inst. Hung. Acad. Sci}, 5(1):17--60, 1960.

\bibitem{feige2016giant}
Uriel Feige, Jonathan Hermon, and Daniel Reichman.
\newblock On giant components and treewidth in the layers model.
\newblock {\em Random Structures \& Algorithms}, 48(3):524--545, 2016.

\bibitem{feige2017contagious}
Uriel Feige, Michael Krivelevich, Daniel Reichman, et~al.
\newblock Contagious sets in random graphs.
\newblock {\em The Annals of Applied Probability}, 27(5):2675--2697, 2017.

\bibitem{fomin2013parameterized}
Fedor~V Fomin, Petr~A Golovach, and Janne~H Korhonen.
\newblock On the parameterized complexity of cutting a few vertices from a
  graph.
\newblock In {\em International Symposium on Mathematical Foundations of
  Computer Science}, pages 421--432. Springer, 2013.

\bibitem{frick2001deciding}
Markus Frick and Martin Grohe.
\newblock Deciding first-order properties of locally tree-decomposable
  structures.
\newblock {\em Journal of the ACM (JACM)}, 48(6):1184--1206, 2001.

\bibitem{gao2012treewidth}
Yong Gao.
\newblock Treewidth of erd{\H{o}}s--r{\'e}nyi random graphs, random
  intersection graphs, and scale-free random graphs.
\newblock {\em Discrete Applied Mathematics}, 160(4-5):566--578, 2012.

\bibitem{grohe2003local}
Martin Grohe.
\newblock Local tree-width, excluded minors, and approximation algorithms.
\newblock {\em Combinatorica}, 4(23):613--632, 2003.

\bibitem{hajiaghayi2001algorithms}
Mohammad~Taghi Hajiaghayi.
\newblock {\em Algorithms for graphs of (locally) bounded treewidth}.
\newblock PhD thesis, Citeseer, 2001.

\bibitem{kleinberg1996short}
Jon Kleinberg and Ronitt Rubinfeld.
\newblock Short paths in expander graphs.
\newblock In {\em Proceedings of 37th Conference on Foundations of Computer
  Science}, pages 86--95. IEEE, 1996.

\bibitem{kloks1994treewidth}
Ton Kloks.
\newblock {\em Treewidth: computations and approximations}.
\newblock Springer, 1994.

\bibitem{kolesnik2014lower}
Brett Kolesnik and Nick Wormald.
\newblock Lower bounds for the isoperimetric numbers of random regular graphs.
\newblock {\em SIAM Journal on Discrete Mathematics}, 28(1):553--575, 2014.

\bibitem{krivelevich2018finding}
Michael Krivelevich.
\newblock Finding and using expanders in locally sparse graphs.
\newblock {\em SIAM Journal on Discrete Mathematics}, 32(1):611--623, 2018.

\bibitem{krivelevich2019expanders}
Michael Krivelevich.
\newblock Expanders -- how to find them, and what to find in them.
\newblock {\em Surveys in Combinatorics}, 456:115--142, 2019.

\bibitem{krivelevich2015smoothed}
Michael Krivelevich, Daniel Reichman, and Wojciech Samotij.
\newblock Smoothed analysis on connected graphs.
\newblock {\em SIAM Journal on Discrete Mathematics}, 29(3):1654--1669, 2015.

\bibitem{krivelevich2009minors}
Michael Krivelevich and Benjamin Sudakov.
\newblock Minors in expanding graphs.
\newblock {\em Geometric and Functional Analysis}, 19(1):294--331, 2009.

\bibitem{lee2012rank}
Choongbum Lee, Joonkyung Lee, and Sang-il Oum.
\newblock Rank-width of random graphs.
\newblock {\em Journal of Graph Theory}, 70(3):339--347, 2012.

\bibitem{molloy1995critical}
Michael Molloy and Bruce Reed.
\newblock A critical point for random graphs with a given degree sequence.
\newblock {\em Random Structures \& Algorithms}, 6(2-3):161--180, 1995.

\bibitem{morrison2018extremal}
Natasha Morrison and Jonathan~A Noel.
\newblock Extremal bounds for bootstrap percolation in the hypercube.
\newblock {\em Journal of Combinatorial Theory, Series A}, 156:61--84, 2018.

\bibitem{nevsetvril2012sparsity}
Jaroslav Ne{\v{s}}et{\v{r}}il and Patrice~Ossona De~Mendez.
\newblock {\em Sparsity: graphs, structures, and algorithms}, volume~28.
\newblock Springer Science \& Business Media, 2012.

\bibitem{nevsetvril2012characterisations}
Jaroslav Ne{\v{s}}et{\v{r}}il, Patrice~Ossona de~Mendez, and David~R Wood.
\newblock Characterisations and examples of graph classes with bounded
  expansion.
\newblock {\em European Journal of Combinatorics}, 33(3):350--373, 2012.

\bibitem{newman2011scaling}
Mark~EJ Newman, DJ~Walls, Mark Newman, Albert-L{\'a}szl{\'o} Barab{\'a}si, and
  Duncan~J Watts.
\newblock Scaling and percolation in the small-world network model.
\newblock In {\em The Structure and Dynamics of Networks}, pages 310--320.
  Princeton University Press, 2011.

\bibitem{newman1999renormalization}
Mark~EJ Newman and Duncan~J Watts.
\newblock Renormalization group analysis of the small-world network model.
\newblock {\em Physics Letters A}, 263(4-6):341--346, 1999.

\bibitem{perarnau2014tree}
Guillem Perarnau and Oriol Serra.
\newblock On the tree-depth of random graphs.
\newblock {\em Discrete Applied Mathematics}, 168:119--126, 2014.

\bibitem{ren2019generalized}
Xiao-Long Ren, Niels Gleinig, Dirk Helbing, and Nino Antulov-Fantulin.
\newblock Generalized network dismantling.
\newblock {\em Proceedings of the national academy of sciences},
  116(14):6554--6559, 2019.

\bibitem{riedl2012largest}
Eric Riedl.
\newblock Largest and smallest minimal percolating sets in trees.
\newblock {\em The electronic journal of combinatorics}, pages P64--P64, 2012.

\bibitem{robertson1995graph}
Neil Robertson and Paul~D Seymour.
\newblock Graph minors. xiii. the disjoint paths problem.
\newblock {\em Journal of combinatorial theory, Series B}, 63(1):65--110, 1995.

\bibitem{sambaturu2020designing}
Prathyush Sambaturu, Bijaya Adhikari, B~Aditya Prakash, Srinivasan
  Venkatramanan, and Anil Vullikanti.
\newblock Designing effective and practical interventions to contain epidemics.
\newblock In {\em Proceedings of the 19th International Conference on
  Autonomous Agents and MultiAgent Systems}, pages 1187--1195, 2020.

\bibitem{schoenebeck2016complex}
Grant Schoenebeck and Fang-Yi Yu.
\newblock Complex contagions on configuration model graphs with a power-law
  degree distribution.
\newblock In {\em International Conference on Web and Internet Economics},
  pages 459--472. Springer, 2016.

\end{thebibliography}
\appendix
\section{NP hardness proofs}
Here we present the proofs that stopping contagion and minimizing contagion are NP-hard. We assume throughout this section that every vertex has threshold $r=2$. 
\subsection{Stopping contagion}
We first introduce a slight variant of stopping contagion and prove it is NP-Hard.

\begin{definition}
    In the stopping contagion by vertex deletion problem, we are given as input a graph $G=(V,E)$, two disjoints sets of vertices $A, B \subseteq V$ and a parameter $\ell$. Given that all vertices in $A$ are active, our goal is to determine whether it is possible to ensure no (remaining) vertices from $B$ are infected by deleting at most $\ell$ vertices from $G$. Deleting vertices from $A$ and $B$ is allowed.
\end{definition}

\begin{theorem}\label{thm:vertex_deletion}
    The stopping contagion by vertex deletion problem is NP-Complete. 
\end{theorem}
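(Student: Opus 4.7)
The plan is to establish membership in NP by the obvious certificate and then prove NP-hardness by a polynomial-time reduction from \textsc{Vertex Cover}. For membership, given a candidate deletion set $D$ of size at most $\ell$, one can run the threshold-$2$ bootstrap process on the subgraph induced by $V \setminus D$ starting from $A \setminus D$ in polynomial time and verify that no surviving vertex of $B$ becomes active.

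For hardness, given an instance $(H,k)$ of \textsc{Vertex Cover} with $H=(V_H, E_H)$, I would build $G=(V,E)$ as follows. Take a seed set $A=\{a_1,\dots,a_{k+2}\}$; for each $v \in V_H$ add a fresh vertex $v'$; and for each edge $(u,w)\in E_H$ add a fresh vertex $e_{uw}$, placing all such $e_{uw}$ into $B$. Put an edge from each $a_i$ to every $v'$, and connect every $e_{uw}$ to both $u'$ and $w'$. Set the threshold to $r=2$ at every vertex and set $\ell := k$. Note that without any deletions, each $v'$ has $k+2 \geq 2$ active neighbors and so becomes active, after which every $e_{uw}$ has two active neighbors and also becomes active, so the instance is nontrivial.

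The two correctness directions are then: (i) if $S \subseteq V_H$ is a vertex cover of size $k$, deleting $\{v' : v \in S\}$ is feasible, because for each $(u,w) \in E_H$ at least one of $u',w'$ is gone and thus $e_{uw}$ has at most one active neighbor; and (ii) if $D$ is a feasible deletion set of size at most $k$, then since $|D|\leq k < |A|-1$ at least two seeds survive, which forces every undeleted $v'$ to become active, hence for each $(u,w) \in E_H$ with $e_{uw} \notin D$ at least one of $u',w'$ must lie in $D$. Setting $S := \{v \in V_H : v' \in D\}$ and $T := \{(u,w)\in E_H : e_{uw} \in D\}$, we have $|S|+|T| \leq |D| \leq k$ and $S$ covers $E_H \setminus T$; augmenting $S$ with one endpoint of each edge in $T$ produces a vertex cover of $H$ of size at most $k$.

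The main technical issue is ruling out ``shortcut'' deletion strategies. Attacking the seeds is blocked by the pad of $k+2$ seeds, which keeps the number of active seeds above $1$ under any budget of $k$ deletions and hence preserves infection of every undeleted $v'$. Attacking $B$ directly is handled by the elementary identity $\min_{T \subseteq E_H}(|T|+\mathrm{VC}(E_H \setminus T))=\mathrm{VC}(H)$, proved by charging one endpoint per edge of $T$: this shows any mixed deletion set can be converted into a vertex cover of $H$ without increasing its size, so the minimum number of deletions needed to stop the contagion equals $\mathrm{VC}(H)$ exactly, completing the reduction.
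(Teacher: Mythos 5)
Your proof is correct, and while it also reduces from \textsc{Vertex Cover}, the construction is genuinely different from the paper's. The paper's reduction is more direct: it makes the vertex-copies $v'$ themselves the seed set $A$ and places one degree-$2$ vertex per edge into $B$, so that deleting a seed corresponds exactly to picking a cover vertex; deletions inside $B$ are then handled by a short ``swap'' argument (if $w_{st}$ is deleted, delete $s$ instead, which can only help). In contrast, you keep the $v'$ vertices as non-seeds and introduce a separate \emph{padded} seed set of size $k+2$ so that, under budget $k$, deleting seeds can never prevent the $v'$ layer from activating; you then deal with deletions in $B$ explicitly by charging one endpoint per deleted edge-vertex. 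Both arguments are sound. The paper's approach is shorter and more economical since the seed deletions are exactly the vertex-cover choices; your padding-plus-charging route is a little longer but is more modular: the padding trick and the explicit conversion of mixed deletion sets are reusable in settings where the WLOG swap is less obvious (e.g.\ when the problem disallows deletions from certain sets, a variant the paper goes on to prove separately). One small presentational point: you should state that the graph and threshold assignment is computable in time polynomial in $|V_H|+|E_H|+k$, though this is immediate from the construction.
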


\begin{proof}
    Clearly, this problem is in NP; we show that is it NP-Hard by reducing it to vertex cover. Recall, for a graph $G = (V, E)$ and a parameter $\ell$, an instance of Vertex Cover asks whether there exists a subset $R \subseteq V$ of size at most $\ell$ such that every edge in $E$ intersects $R$. Given such an instance of Vertex Cover, we construct the following bipartite graph $G' = (A, B, F)$:
   
    \begin{align*}
        A &= V \\
        B &= \{w_{st}: (s,t) \in E)\} \\
        F &= \{ (v, w_{st}) \in V \times E : v = s\ \text{or}\ v = t \}
    \end{align*}
 
    In words, we let $A$ and $B$ represent the vertices and edges of $G$ respectively, connecting each edge in $B$ to both its corresponding endpoints in $A$. Clearly $G'$ can be obtained from $G$ in polynomial time. We now prove there is a vertex cover of size at most $\ell$ in $G = (V, E)$ if and only if it is possible to delete $\ell$ or fewer vertices in $G' = (A, B, F)$ to stop contagion.
    %so that no node in $B$ is infected from the remaining nodes in $A$. 

    If there is a vertex cover $R$ of size at most $\ell$ in $G$ then deleting $R$ from $G'$ will ensure that no vertex in $B$ is activated, as every node in $B$ has at least one neighbor in $R$ and exactly two neighbors in $A$. On the other hand, suppose we can delete a set of size at most $\ell$ from $G'$ ensuring that no vertex in $B$ is activated after deletions. Notice that without loss of generality, we may assume only vertices in $A$ are deleted. This is because if some node $w_{st} \in B$ is deleted, we could instead delete $s \in A$ while preserving the safety of all vertices in $B$, since $w_{st}$ has exactly two neighbors. Now let $R$ be a set of at most $\ell$ nodes whose deletion from $A$ results with every vertex in $B$ inactive. The point is that $R$ must be a vertex cover in $G$; otherwise, there would an edge $(s,t)$ in $G$ not covered by $R$, which would imply $w_{st} \in B$ will be infected even after $S$ is deleted as both $s$ and $t$ are seeds. This contradiction concludes the proof.

\end{proof}

Sometimes, deleting a vertex from $A$ or $B$ might be impossible (e.g., shutting down a transportation hub that is too central). The problem above remains hard even if we are not allowed to delete vertices from $A \cup B$, although we leave the proof for the supplementary.

\begin{theorem}
    The stopping contagion by vertex deletion problem is NP-Complete, even when the additional constraint is added that only vertices in $V \setminus (A \cup B)$ may be deleted.
\end{theorem}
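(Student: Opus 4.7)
The plan is to reduce from Vertex Cover, mirroring the proof of Theorem~\ref{thm:vertex_deletion} but replacing each ``original vertex'' in the reduction by a small gadget, so that the vertex one must remove to separate any $w_{st} \in B$ from its two active neighbors lies outside $A \cup B$. Membership in NP is immediate: nondeterministically guess a deletion set of size at most $\ell$ and simulate the bootstrap process in polynomial time.

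Given a Vertex Cover instance $(G = (V, E), \ell)$, I would construct $G' = (A \cup B \cup X, F)$ as follows. For each $v \in V$ introduce two seed vertices $a_v^1, a_v^2 \in A$ and one ``switch'' vertex $x_v \in X$, with $X \cap (A \cup B) = \emptyset$; add edges $a_v^1 x_v$ and $a_v^2 x_v$. For each $(s,t) \in E$ introduce $w_{st} \in B$ and add edges $x_s w_{st}$ and $x_t w_{st}$. Set every threshold to $2$. The construction is polynomial-time, and by definition $V(G') \setminus (A \cup B) = X$, so the restriction on which vertices may be deleted forces any solution to lie entirely inside $X$.

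For correctness, note two structural facts. First, each $x_v \in X$ has exactly two neighbors in $A$ (namely $a_v^1, a_v^2$), so if $x_v$ is not deleted it is activated in the first step of bootstrap percolation. Second, each $w_{st} \in B$ has exactly the two neighbors $x_s, x_t$, so $w_{st}$ is activated if and only if both $x_s$ and $x_t$ survive the deletions. Hence, if $R \subseteq V$ is a vertex cover with $|R| \leq \ell$, then deleting $\{x_v : v \in R\}$ prevents every $w_{st} \in B$ from being activated. Conversely, given any valid deletion set $D$ with $|D| \leq \ell$, the hypothesis forces $D \subseteq X$; setting $R = \{v : x_v \in D\}$, if some $(s,t) \in E$ were uncovered then $x_s$ and $x_t$ would both survive, both be activated, and then activate $w_{st}$, contradicting the assumption on $D$. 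Therefore $R$ is a vertex cover of $G$ of size at most $\ell$.

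I do not anticipate a serious technical obstacle: the threshold-$2$ requirement is precisely what forces the two-seed gadget $\{a_v^1, a_v^2\}$ so that the switch $x_v$ is activated exactly when it is present, and this is what makes the correspondence between deletion sets and vertex covers rigid. The only minor care needed in the write-up is for isolated vertices of $G$, which play no role in Vertex Cover; their $x_v$'s can simply be left alone. A reader familiar with the proof of Theorem~\ref{thm:vertex_deletion} should see this as essentially the same reduction, with one extra layer inserted between each ``vertex copy'' in $A$ and the ``edge copies'' in $B$ so that the mandatory deletions are pushed off $A \cup B$.
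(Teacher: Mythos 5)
Your proof is correct and takes essentially the same approach as the paper: both reductions go through Vertex Cover and insert a layer of ``switch'' vertices between the seeds and the protected edge-gadgets, so that the switches lie in $V\setminus(A\cup B)$ and deleting a switch encodes choosing the corresponding vertex for the cover. The only difference is cosmetic: the paper reuses the bipartite graph from the previous theorem and activates the whole switch layer with two shared global seeds $\{s,t\}$, whereas you give each switch $x_v$ its own private pair of seeds $a_v^1,a_v^2$; both force every surviving switch to activate in the first round and yield the same correspondence between feasible deletion sets and vertex covers.
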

\begin{proof}
    The proof of Theorem~\ref{thm:vertex_deletion} illustrates that the vertex deletion problem remains NP-Hard even if no deletions of nodes from $B$ are allowed. To show hardness in the case where $A \cup B$ are disallowed, start with $G'$ constructed in the proof. We further modify $G'$ to include $A'=\{s,t\}$ connected to all vertices in $A$. The proof follows by taking the set of seeds to equal $A'$, since these vertices immediately activate all vertices in $A$.
\end{proof}

We can now leverage these results to show hardness results for the edge deletion problem we were interested in.

\begin{theorem}\label{thm:edge_deletion}
    The stopping contagion decision problem is NP-Complete. 
\end{theorem}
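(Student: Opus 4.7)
The plan is to reduce from Vertex Cover, extending the construction used in Theorem~\ref{thm:vertex_deletion} with a multiplicity trick so that any cheap edge-deletion strategy is forced to behave like a vertex cover. Membership in NP is immediate: guess the set of deleted edges, simulate the bootstrap percolation process (which stabilizes in at most $n$ rounds), and verify that no vertex of $B$ is ever activated.

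Given a Vertex Cover instance $(G=(V,E),\ell)$, I would build a graph $G'$ with threshold $r=2$ at every vertex as follows. Introduce two seed vertices $s,t$ and let $A=\{s,t\}$. For each $v \in V$, add a middle vertex $v'$ and connect it to both $s$ and $t$. For each edge $(u,v) \in E$, introduce $M := |V|+1$ parallel copies $w_{uv}^{1},\ldots,w_{uv}^{M}$, each joined only to $u'$ and $v'$, and let $B$ be the collection of all such copies. Set the edge-deletion budget to $\ell'=\ell$. The role of $M$ is to make any strategy that ``directly blocks'' an edge $(u,v) \in E$ by cutting $B$-incident edges cost at least $M > \ell$, so any cheap feasible solution must instead act on the middle layer.

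For correctness in the forward direction, a vertex cover $C \subseteq V$ of size at most $\ell$ yields a feasible edge-deletion set by removing $(s,v')$ for each $v \in C$: each such $v'$ retains only $t$ as an infected neighbor, and for every $(u,v) \in E$ at least one of $u',v'$ stays inactive, so no $w_{uv}^i$ reaches threshold. Conversely, given any $D$ with $|D| \le \ell$ that stops contagion, define $S := \{v \in V : D \text{ cuts } (s,v') \text{ or } (t,v')\}$; the cuttable edges at different middle vertices are disjoint, so $|S| \le |D| \le \ell$, and if some $(u,v) \in E$ were uncovered by $S$ then both $u',v'$ would activate, forcing $D$ to sever at least one edge incident to each of the $M$ copies $w_{uv}^i$ --- contradicting $|D| \le \ell < M$. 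The one step that requires care is the self-consistency of the infection analysis after a seed-edge cut: once $(s,v')$ is removed, one must confirm $v'$ cannot be later reactivated through its $B$-neighbors; this is automatic because each $w_{uv}^i$ has only the two neighbors $u',v'$, so as long as $v'$ is inactive no $w_{uv}^i$ ever reaches threshold, and therefore the $B$-neighbors of $v'$ contribute nothing back, making the reasoning stable under the iterative dynamics.
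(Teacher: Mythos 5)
Your proof is correct, but it takes a different route from the paper. The paper reduces in two stages: Vertex Cover is first reduced to stopping contagion by \emph{vertex} deletion (Theorem~\ref{thm:vertex_deletion}), and that intermediate problem is then reduced to the edge-deletion version by hanging a private pair of pendant seed vertices off each vertex $a_i$; since every edge in the resulting graph is incident to some $a_i \in A$, a deleted-edge set $D$ is translated into a vertex set $R$ (the $A$-endpoints of $D$) with $|R| \le |D|$, and $R$ is automatically a vertex cover, so no multiplicity gadget is required. You instead reduce directly from Vertex Cover with two shared seeds $s,t$, one middle vertex $v'$ per original vertex, and $M=|V|+1$ parallel copies of each edge-vertex $w_{uv}^i$; the multiplicity is what forces any budget-$\ell$ solution to act on the seed layer rather than severing edges near $B$, which is exactly the role the paper's ``every edge touches $A$'' invariant plays. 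Both reductions are sound; the paper's is more modular and produces a linear-size instance, while yours is self-contained and direct but pays a factor-$|V|$ size blowup. Your closing remark about the stability of the dynamics under seed-edge cuts is the right thing to worry about and your argument for it (each $w_{uv}^i$ has exactly the two neighbors $u',v'$, so it can never activate before both do, breaking the circular dependency) is valid; an induction on activation time would make it fully formal.
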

\begin{proof}
    Consider an input to the vertex deletion problem. $G=(V,E)$ with disjoint $A, B \subseteq V$ as the set of active protected and nodes respectively. By the proof of Theorem ~\ref{thm:vertex_deletion}, we may assume $V=A \cup B$ and insist only deletions from $A$ are allowed. We transform $G$ to an input graph $G'=(V',E')$ for the edge deletion problem as follows.
    
    We take $V'$ to be the vertex set of $G$, together with a set $Z$ disjoint from $V$ of cardinality $2|A|$. The vertices in $V$ are connected in $G'$ as they are in $G$. The vertices in $Z$ are divided to $|A|$ disjoint pairs: $\{z_{2i-1}, z_{2i}\}$ for $1 \leq i \leq |A|$. Suppose that $A=\{a_i : 1 \le i \le |A|\}.$ Every vertex $a_i \in A$ is connected (only) to $z_{2i}$ and $z_{2i-1}$ in $Z$. No edge in $G'$ connects a vertex in $Z$ to a vertex in $V \setminus A$. Finally, we let the set of active and protected vertices in $G'$ be $A' = Z$ and $B' = B$ respectively. $G'$ can clearly be constructed from $G$ in polynomial time.
    
    We claim that we can stop every vertex in $B$ from being infected (in $G$) from the set of seeds $A$ by deleting at most $\ell$ vertices from $A$ if and only if we can stop the vertices $B$ from getting infected (in $G'$) from the seed set $A$ by deleting at most $\ell$ edges. Indeed, if we can stop contagion in $G$ by deleting $R\subseteq A$ with $|R| \leq \ell$, we can stop contagion in $G'$ by deleting at most $\ell$ edges by simply deleting a single edge from every pair in $Z$ connected to a vertex in $R$. On the other hand, if we can stop contagion in $G'$ by deleting at most $\ell$ edges, we can stop contagion in $G$ by deleting at most $\ell$ vertices in $A$. This is because we can simply delete every vertex in $A$ that is incident (in $G'$) to an edge deleted from $G'$. (Observe that every edge in $G'$ is adjacent to a vertex in $A$.)
    \end{proof}

We conclude this subsection by showing that stopping contagion remains hard even if $|A|=2$ and $|B|=1$.
\begin{theorem}
    The stopping contagion decision problem is NP-Complete when $|A| = 2$ and $|B| = 1$.
\end{theorem}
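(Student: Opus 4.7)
My plan is to reduce Vertex Cover, which is well-known to be NP-hard, to stopping contagion with $|A|=2$ and $|B|=1$. Given an instance $(G=(V,E),k)$ of Vertex Cover, I construct a graph $G^*$ consisting of two seeds $s_1, s_2$; a single protected vertex $b$; a ``copy'' $x_v$ for every $v \in V$, adjacent to both $s_1$ and $s_2$; and two ``witnesses'' $y_e, y_e'$ for every edge $e=(u,v) \in E$, each adjacent to $x_u$, $x_v$, and $b$. I set $A^* = \{s_1, s_2\}$, $B^* = \{b\}$, and all thresholds equal to $2$. With no deletions, the cascade activates every $x_v$ in round $1$, then every $y_e$ and $y_e'$ in round $2$ (each has its two $x$-neighbors active; $b$ is still inactive and contributes nothing), and finally $b$ in round $3$ as soon as $E$ is nonempty.

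For the forward direction, given any vertex cover $C \subseteq V$ of size at most $k$, I delete only the $|C|$ edges $\{(s_1, x_v) : v \in C\}$. Each such $x_v$ retains only one active neighbor and stays permanently inactive; since $C$ covers $E$, every $e \in E$ has $x_u$ or $x_v$ inactive, so neither $y_e$ nor $y_e'$ can reach threshold, and $b$ therefore sees no active neighbor. Cost $|C| \le k$ suffices.

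For the reverse direction, let $D$ be any deletion set of size at most $k$ that keeps $b$ safe. Define $C_D := \{v \in V : x_v \text{ never activates in } G^* - D\}$ and let $U$ be the set of edges of $G$ whose endpoints both lie outside $C_D$. Every $v \in C_D$ forces at least one of $(s_1, x_v), (s_2, x_v)$ into $D$ (otherwise $x_v$ has two active seed-neighbors in round $1$), so $|D| \geq |C_D|$. For each $e \in U$ both $x_u, x_v$ activate, so without intervention both witnesses $y_e, y_e'$ would activate and be live neighbors of $b$; to keep $b$'s active-neighbor count at most $1$, the solver must kill at least $2|U| - 1$ of these $2|U|$ witnesses, and each kill consumes a distinct edge chosen from $\{(x_u, y_e^i), (x_v, y_e^i), (b, y_e^i)\}$, disjoint from the seed-edges counted for $C_D$. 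Hence $|D| \geq |C_D| + 2|U| - 1$ whenever $|U| \geq 1$. Adjoining one endpoint of each edge of $U$ to $C_D$ produces a vertex cover of size at most $|C_D| + |U|$, and combining with $|D| \leq k$ gives $|C_D| + |U| \leq k$ in both the $|U|=0$ and $|U| \geq 1$ cases, as required.

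The main obstacle is precisely this reverse-direction accounting. I must verify that blocking edges for distinct witnesses $y_e^i$ are genuinely disjoint from one another and from the seed-edge cuts (so that costs truly add), and critically that leaving an edge uncovered forces at least two separate witness-blocks rather than one---this is what prevents the solver from beating the vertex-cover strategy by cheap tricks near $b$ or along the $x$-to-$y$ edges. Once this charging is nailed down, the equivalence ``min stopping cost $=$ min vertex cover'' is immediate, and NP-completeness follows since a candidate deletion set can be verified in polynomial time by simulating the bootstrap process.
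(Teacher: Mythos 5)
Your proof is correct, and it takes a genuinely different route from the paper. The paper builds up to the $|A|=2, |B|=1$ case through a chain of intermediate reductions: it first shows stopping contagion by \emph{vertex} deletion is NP-hard via a bipartite incidence graph reduction from Vertex Cover, then transfers hardness to the edge-deletion version by attaching degree-$2$ "pendant pair" gadgets that simulate vertex removal, and finally composes two separate padding lemmas (each introducing an amplifier set $Z$ of size $|E|+1$) to force $|A|=2$ and $|B|=1$. You instead give a single direct reduction from Vertex Cover: two seeds feeding a copy $x_v$ of each vertex, two witnesses per edge feeding a single sink $b$, with the "two witnesses per edge" multiplicity doing the essential work---it makes leaving an edge of $G$ uncovered cost at least $2|U|-1$ kill-edges, so your charging argument $|D|\geq |C_D|+2|U|-1$ (when $|U|\geq 1$) cleanly dominates the vertex-cover bound $|C_D|+|U|$. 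The disjointness of seed-edges $(s_i,x_v)$ from kill-edges incident to the $y_e^i$'s is what lets the costs add, and you correctly observe that non-$U$ witnesses never activate so only $U$-edges threaten $b$. Your approach is shorter and more self-contained; the paper's modular approach is heavier but yields the vertex-deletion and intermediate edge-deletion hardness results as standalone theorems along the way, which the paper also states and uses.
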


\begin{proof}
    We will show the $|A| = 2$ and $|B| = 1$ cases are hard separately over the following two lemmas. Taken together, the reductions in Lemma~\ref{lem:small-a} and Lemma~\ref{lem:small-b} naturally compose, which proves our desired result.
\end{proof}

\begin{figure*}[t] \label{fig:reduction}
    \centering
    \includegraphics[width=.6\textwidth]{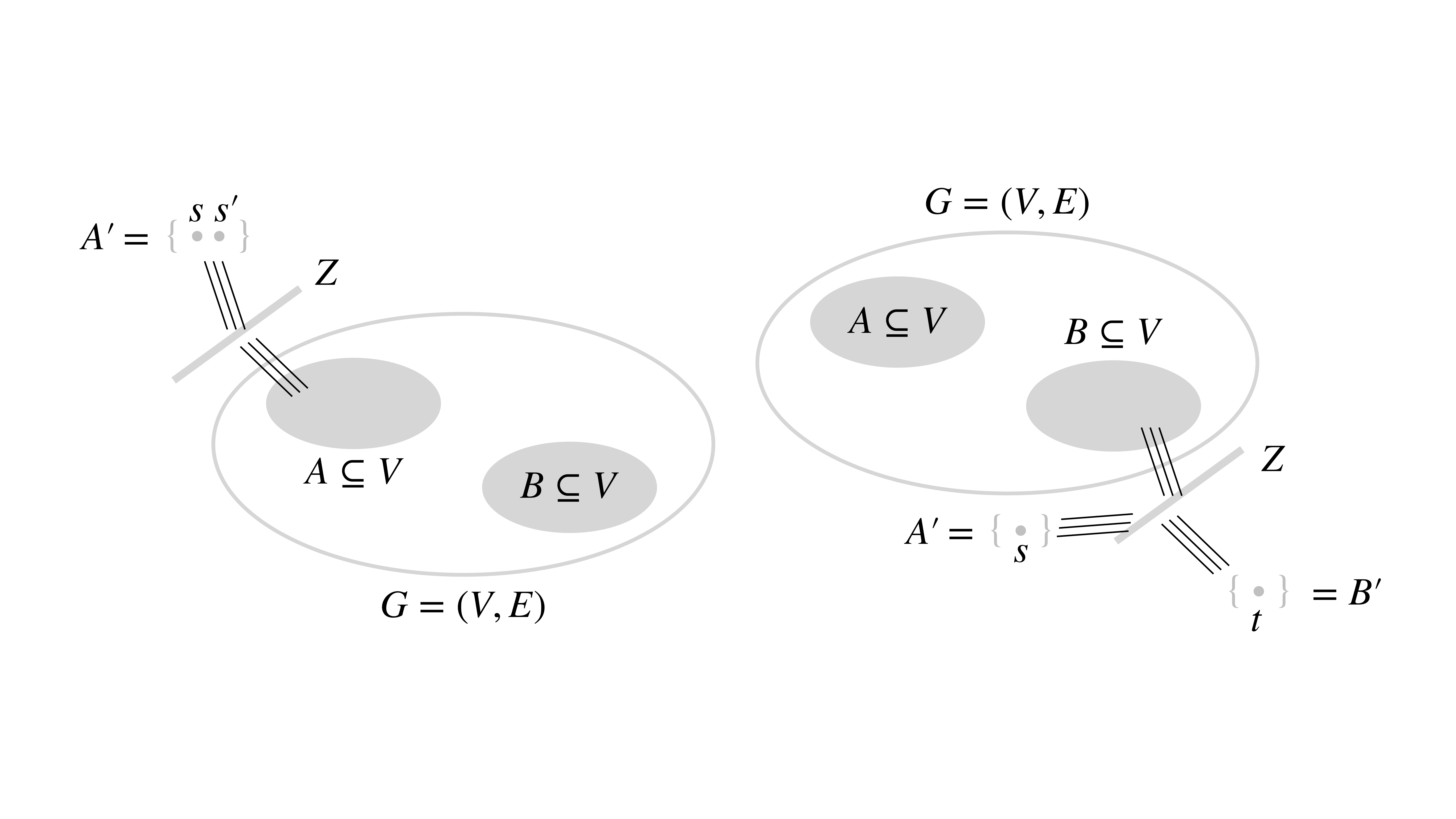}
    \label{fig:first-reduction}
    \caption{$G'$ constructed in reduction to the case where $|A| = 2$ and $|B| = 1$, Lemma~\ref{lem:small-a} and Lemma~\ref{lem:small-b} respectively (left-to-right). In the diagram above, three parallel lines represent a complete bipartite linkage between the two adjacent sets.}
\end{figure*}

\begin{lemma}\label{lem:small-a}
    The stopping contagion decision problem is NP-Complete when $|A| = 2$.
\end{lemma}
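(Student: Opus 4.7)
The plan is to reduce in polynomial time from the general stopping contagion decision problem (NP-hard by Theorem~\ref{thm:edge_deletion}) to the restricted variant with $|A'|=2$; membership in NP is routine since the deleted edge set is a certificate and bootstrap percolation is polynomial to simulate. Given an instance $(G,A,B,\ell)$ with $A=\{a_1,\ldots,a_m\}$, I will build $G'=(V',E')$ by adjoining two fresh ``super-seeds'' $s_1,s_2$ together with $mL$ fresh ``channel'' vertices $b_{i,j}$ for $i\in[m]$ and $j\in[L]$, where $L:=\ell+2$, and adding, for every $(i,j)$, the three edges $(s_1,b_{i,j}),(s_2,b_{i,j}),(b_{i,j},a_i)$. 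All edges of $G$ are kept. Set $A'=\{s_1,s_2\}$, $B'=B$, $\ell'=\ell$, and give every non-seed vertex threshold $2$ (consistent with the uniform-threshold assumption of this section). The construction is clearly polynomial in $|V|,|E|,\ell$.

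The forward direction is immediate: if $D\subseteq E(G)$ of size at most $\ell$ stops contagion from $A$ to $B$ in $G$, then the same $D$ works in $G'$. Indeed, since $D$ touches no gadget edge, each $b_{i,j}$ has both seeds as active neighbors and activates at time $1$; each $a_i$ then sees $L\ge 2$ active neighbors and activates at time $2$; from this point onward the infection is confined to $V(G)$ with edge set $E(G)\setminus D$ and mimics the original process.

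The crux is the backward direction, which rests on the following counting claim. Suppose $D'\subseteq E'$ with $|D'|\le\ell$ is a valid set of deletions for $G'$. For each fixed $i$ and each $j\in[L]$, call the three-edge set $\{(s_1,b_{i,j}),(s_2,b_{i,j}),(b_{i,j},a_i)\}$ the $j$th channel for $a_i$; the $L$ channels for $a_i$ are pairwise edge-disjoint (they share only the vertices $s_1,s_2,a_i$), and channels belonging to different $a_i$'s are edge-disjoint as well because their $b$-vertices are distinct. If $D'$ contains no edge from channel $j$, then $b_{i,j}$ has two active seed neighbors at time $1$ and activates, supplying one active neighbor to $a_i$. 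Since each deleted edge breaks at most one channel of $a_i$ and $|D'|\le\ell$, at least $L-\ell=2$ channels for $a_i$ survive, giving $a_i$ at least two active neighbors at time $1$ and activating $a_i$ at time $2$. Thus every $a_i$ becomes active in $G'$ under $D'$. Now set $D:=D'\cap E(G)$; because $N_{G'}(b_{i,j})\subseteq\{s_1,s_2\}\cup A$ for every channel vertex, no gadget vertex can directly infect any $v\in V(G)\setminus A$, so the contagion restricted to $V(G)$ under $D'$ coincides (up to a time shift of $2$) with ordinary contagion in $G$ from seed set $A$ under the deletions $D$. Since $B$ is safe in $G'$ and $|D|\le|D'|\le\ell$, the set $D$ certifies a yes-instance of the original problem.

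The main obstacle I anticipate is ensuring the counting argument is impervious to indirect dynamics---in particular, ruling out the worry that a late activation of some $b_{i,j}$ via $a_i$ could somehow be exploited by a clever adversary. This is handled by the monotone, order-independent nature of bootstrap percolation: the implication ``at least two channels survive $\Rightarrow a_i$ activates'' is established already at time $2$ using only the time-$1$ active set, and any later activations can only enlarge the active set. A small complementary check is that no channel vertex lies in $B$ and no channel vertex has a neighbor in $V(G)\setminus A$, so all infection into $B$ must pass through $E(G)$-edges incident to activated $V(G)$-vertices and is therefore fully controlled by $D$. Together these observations give the equivalence of instances and complete the reduction.
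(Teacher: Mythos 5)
Your proof is correct and takes essentially the same route as the paper: adjoin two fresh seeds together with a layer of gadget vertices adjacent to both seeds and to $A$, sized so that disconnecting enough of them to shield any vertex of $A$ would exceed the deletion budget, which forces all of $A$ to activate and reduces the problem to the original instance. The only difference is cosmetic---the paper uses a single shared pool of $|E|+1$ gadget vertices each adjacent to every vertex of $A$, while you use $\ell+2$ edge-disjoint channels per $a_i$, which makes the counting step (each deleted edge breaks at most one channel) slightly more explicit but rests on the same idea.
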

\begin{proof}
    Given an instance of the stopping contagion problem on a graph $G = (V, E)$ with $A, B \subseteq V$, we construct a new instance on a graph $G' = (V', E')$ with $A' = \{s, s'\}$ with the same optimum. We do this by attaching a large set of vertices $Z = \{ z_i : 1 \le i \le |E| + 1 \}$ to the graph $G$, and connecting these with $A$ and $A'$. More precisely, we build $G' = (V', E')$ by adding the vertices in $Z$ in the following way.
    \begin{align*}
        V' &= V \cup Z \cup A' \\
        E' &= E \cup (A' \times Z) \cup (Z \times A)
    \end{align*}
    
    Suppose the stopping contagion problem had an optimum of $\ell$ in the original graph; we now argue that $G'$ has the same. First, notice that the optimum is at most $\ell$: deleting the edges which correspond to the original optimal solution will effectively stop the contagion. In the new graph, $A'$ will infect the vertices in $C$ and $A$, but the deleted edges will then prevent the contagion from spreading to $B$.
    
    Now assume, for sake of contradiction, there exists a way to delete $\ell' < \ell$ edges in $G'$ to stop contagion. Given only $\ell'$ edges are deleted, it is impossible to prevent the vertices in $A$ from being infected. This is because each vertex in $A$ is connected to all $|E| + 1$ vertices in $C$, which in turn get infected by $A'$---to prevent this from happening would require deleting at least $|E| \ge \ell > \ell'$ edges. Since $G$ is embedded in $G'$, this would imply that deleting fewer than $\ell$ edges is possible to prevent contagion from spreading from vertices in $A$ to $B$, contradicting the optimality of $\ell$ in the original instance.
    
    This shows that solving the stopping contagion problem in $G'$ can be used to solve the stopping contagion problem in $G$, completing the reduction. Since reduction in the opposite direction is trivial, this shows that the two problems are equivalent under polynomial time reduction. 
\end{proof}

\begin{lemma}\label{lem:small-b}
    The stopping contagion decision problem is NP-Complete when $|B| = 1$.
\end{lemma}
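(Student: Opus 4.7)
The plan is to reduce from general stopping contagion (NP-complete by Theorem~\ref{thm:edge_deletion}) to the case $|B|=1$, mirroring on the $B$-side the trick that Lemma~\ref{lem:small-a} used on the $A$-side. Given an instance $(G=(V,E), A, B, \ell)$, I will construct $(G'=(V',E'), A', B'=\{t\}, \ell)$ (with the same $\ell$) whose decision problem has the same answer. Start from $G$ and add a single new vertex $t$ (the unique protected vertex), an auxiliary seed $p$, and a set $Z = \{z_1, \ldots, z_k\}$ of size $k := \ell+2$. Install a complete bipartite linkage between $B$ and $Z$, make $p$ adjacent to every $z \in Z$, and make $t$ adjacent to every $z \in Z$. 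Set $A' = A \cup \{p\}$ and $B' = \{t\}$; every vertex still has threshold $r=2$. The construction is clearly polynomial.

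Next I trace propagation from the seeds $A'$. Each $z \in Z$ has neighborhood $B \cup \{p,t\}$ with only $p$ initially infected, so $z$ cannot be activated until some $b_i \in B$ or $t$ is. The vertex $t$ has $Z$ as its entire neighborhood and is activated as soon as two $z$'s are. Hence the gadget stays dormant while no $b_i$ has been infected, and propagation inside $G'$ restricted to $V$ coincides with propagation in $G$. Once any $b_i$ becomes infected, each $z$ whose edges to $p$ and $b_i$ are intact acquires two infected neighbors and is activated; then $t$ sees $k \geq 2$ infected $z$'s and is activated too. Conversely, if no $b_i$ is ever infected by $G$-propagation, then no $z$ and hence $t$ remains safe.

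Equivalence now follows easily. If $R \subseteq E$ with $|R| \leq \ell$ saves $B$ in $G$, the same deletions applied in $G'$ keep every $b_i$ uninfected, hence keep $t$ uninfected. Conversely, assume $R' \subseteq E'$ with $|R'| \leq \ell$ saves $t$ in $G'$, and set $R := R' \cap E$, so $|R| \leq \ell$. I claim $R$ saves $B$ in $G$: if not, some $b_i$ would be infected in $G$ under $R$, and therefore also in $G'$ under $R'$ (the dormant gadget cannot pre-empt the first $b_i$-infection), after which the cheating argument below forces $|R'| \geq \ell+1$, a contradiction.

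The main obstacle is that cheating argument, which shows that once $b_i$ is infected the adversary cannot save $t$ on a small budget. For $t$ to have at most one infected neighbor, at most one $z \in Z$ may be simultaneously infected and still adjacent to $t$. For every other $z$ the adversary must delete at least one of the three edges $(z,p)$, $(z,b_i)$, $(z,t)$: deleting $(z,p)$ or $(z,b_i)$ keeps $z$ uninfected, while deleting $(z,t)$ removes $z$'s contribution to $t$'s threshold. That is at least $|Z|-1 = \ell+1$ gadget-edge deletions, exceeding the budget $|R'| \leq \ell$. Hence the assumption fails, $R$ saves $B$, and the two decision problems are equivalent, proving NP-completeness of the $|B|=1$ case.
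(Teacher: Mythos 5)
Your proof is correct and takes essentially the same route as the paper: both introduce an auxiliary seed (your $p$, the paper's $s$), a set $Z$ completely joined to $B$, to the new seed, and to a single target $t$, and both argue that the gadget stays dormant until $G$-propagation fires some $b\in B$, after which saving $t$ would cost more gadget-edge deletions than the budget allows (you take $|Z|=\ell+2$, the paper $|Z|=|E|+1$; both suffice, with $\ell\le|E|$ assumed WLOG to keep the construction polynomial). One small imprecision worth noting: the parenthetical claim that ``deleting $(z,p)$ or $(z,b_i)$ keeps $z$ uninfected'' need not actually hold, since $z$ is adjacent to all of $B$ and some other $b_j$ may also become infected; but your counting argument only uses the \emph{necessity} of deleting one of the three edges $(z,p),(z,b_i),(z,t)$ for all but one $z$, which is correct, so the proof stands.
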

\begin{proof}
    We reduce the general problem of stopping contagion to the case where $|B|=1$. In similar fashion to the previous proof, given an instance of the stopping contagion problem on a graph $G = (V, E)$ with $A, B \subseteq V$, we construct the following graph $G' = (V', E')$ with $B' = \{t\}$.
    \begin{align*}
        V' &= V \cup S \cup Z \cup B' \\
        E' &= E \cup ((B \cup S) \times Z) \cup (Z \times B')
    \end{align*}
    Here, $S = \{s\}$ and $Z$ is an additional large set of $|E| + 1$ vertices introduced to the graph. We now prove that if original instance has optimum $k$, then the stopping contagion problem on $G' = (V', E')$ with $A' = A \cup S$ and $B'$ also has optimum exactly $\ell$.  
    
    It is not hard to see that deleting $k$ edges is sufficient to prevent contagion of $B'$ in $G' = (V', E')$. This follows from the observation that it is possible to remove $\ell$ edges to prevent vertices in $A'$ from contaminating those in $B$: $S$ by itself does not affect the percolation process in the embedded original graph. If $B$ remains inactive, despite the presence of $S$, so will the vertices in $Z$ and thus $B' = \{t\}$.

    To complete the reduction, assume, for sake of contradiction, it was possible to delete $\ell' < \ell$ edges in $G'$ to stop contagion. If a single vertex in $B$ gets infected, so will every vertex in $Z$, thus infecting $B'$ \. Preventing this would require deleting at least $|E| \ge \ell > \ell'$ edges; hence, such a solution could not allow any vertex in $B$ from being infected. Once again, exploiting the fact that $S$ does not affect the percolation process in the embedded graph $G$, this would imply there exists a way to delete at most $\ell'$ edges in $G$ to stop contagion. However, this is a contradiction since $\ell$ is supposed to be the optimum in the original problem, proving our desired result.
\end{proof}

\subsection{Minimizing contagion}

\begin{theorem}
    The minimizing contagion decision problem is NP-Complete.
\end{theorem}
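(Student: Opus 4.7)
The plan is to prove NP-completeness by (a) a trivial NP-membership argument---the certificate is the subset of at most $\ell$ deleted edges, and simulating bootstrap percolation on the resulting graph to count newly infected vertices is polynomial---and (b) a polynomial-time reduction from the stopping contagion decision problem, shown NP-hard immediately above. Given an instance $(G, A, B, \ell)$ of stopping contagion with $|V(G)| = n$, I would construct a minimizing contagion instance $(G', A', s, \ell')$ by attaching to each $b \in B$ an ``amplifier sunflower'': a fresh hub vertex $a_b$ together with $M := n + \ell + 1$ fresh amplifier vertices, each adjacent to exactly the two vertices $b$ and $a_b$. Then $A' = A \cup \{a_b : b \in B\}$, $\ell' = \ell$, and $s = n$. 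The construction is clearly polynomial and produces a simple graph.

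The correctness proof rests on one structural observation followed by a short case analysis. Since every vertex has threshold at least $2$, an amplifier has only one infected neighbor (the seed $a_b$) until $b$ itself becomes active, so amplifiers never contribute to the infection of any vertex in $V$. Hence deletions inside the amplifier gadgets cannot influence which vertices of $V \subseteq V'$ get infected, and the percolation on $V$ in $G'$ reduces exactly to the percolation in $G$ regardless of which edges are deleted. Using this, the forward direction is immediate: a stopping-contagion solution $F$ of size at most $\ell$ used in $G'$ keeps every $b \in B$ inactive, no amplifier ever activates, and so the total number of newly infected vertices is at most $n - |A| - |B| \leq s$. For the reverse direction, any set of at most $\ell$ deletions in $G'$ restricts to at most $\ell$ deletions inside $G$, which by hypothesis fail to stop contagion, so some $b \in B$ is infected; since preventing an amplifier from activating requires a dedicated deletion of one of its two incident edges, at least $M - \ell = n + 1 > s$ amplifiers are ultimately infected.

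The main point needing careful justification is the ``budget-spending'' argument in the reverse direction: I must verify that spending some of the $\ell$ deletions inside the gadgets can neither covertly help stop contagion on $V$ nor cheaply protect many amplifiers at once. The threshold-$2$ assumption handles the first concern, because amplifiers contribute nothing to $V$-infections; and the fanned-out shape of each sunflower---every amplifier having its own private pair of edges---handles the second, because amplifiers share no neighbors other than $b$ and $a_b$ so there is no single edge whose removal protects two amplifiers simultaneously. This is the only step of the proof that requires real thought; everything else reduces to bookkeeping.
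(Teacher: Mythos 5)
Your reduction is correct and uses the same core idea as the paper: attach an ``amplifier'' gadget to the protected vertices so that infecting any $b \in B$ forces strictly more than $s$ extra infections, while a budget of $\ell$ deletions cannot cheaply suppress the amplifier signal because each amplifier has its own private pair of edges. The key structural observation---that an amplifier never activates before $b$ does (since its only other neighbor is the always-infected hub), so gadget vertices cannot feed information back into the percolation on $V$---is exactly what the paper's argument also relies on implicitly when it adds the set $Z$ joined to a seed $a$ and to $b$. Two small differences are worth noting. First, the paper pre-reduces to $|B| = 1$ via Lemma~\ref{lem:small-b} and then attaches a single gadget, whereas you handle general $B$ directly by attaching one per-$b$ sunflower; this avoids invoking the earlier lemma and keeps the reduction self-contained. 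Second, the paper sets $|Z| = 2n$ and observes WLOG $\ell \leq n-1$, so that at least $n+1 > s = n-1$ members of $Z$ survive the budget; you instead take $M = n + \ell + 1$, which absorbs the deletion budget explicitly and dispenses with the WLOG step. You also use a fresh hub $a_b$ rather than an existing seed $a \in A$, which is marginally cleaner in that the gadget is fully disjoint from $G$, although either choice works since the hub is a seed in both cases. The one place you should make sure the writeup is explicit is the claim that the $V$-percolation in $G'$ (after any deletions) coincides with the $V$-percolation in $G$ after the restricted deletion set: this requires observing that an amplifier's two neighbors are $a_b$ and $b$ only, so it becomes infected (if at all) no earlier than $b$ does and hence contributes nothing to $\langle A \rangle \cap V$; you state the right reason, and it should be given as a one-line lemma in the proof proper.
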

\begin{proof}
    The decision problem is clearly in NP, so we prove it is NP-Hard by reduction from stopping contagion. Recall the in the stopping contagion problem we are given a graph $G = (V, E)$ on $n$ vertices, two disjoint subsets $A, B \subseteq V$ and a parameter $\ell$. The goal is to determine whether $B$ can be protected from infection by deleting at most $\ell$ edges. From Lemma~\ref{lem:small-b}, we can further assume $B = \{b\}$. Given such an instance, we transform it into an input of the minimizing contagion problem as follows.
    
    We create a graph $G'$ from $G$ by adding a set $Z$ of $2n$ vertices. We then choose an arbitrary vertex $a \in A$ and connect all vertices in $Z$ to both $a$ and $b$. Finally, we set the maximum number of infections allowed as $n - 1$ and ask if this possible with $\ell$ edge deletions. Observe that we can assume $\ell \leq n-1$ since we can always prevent $b$ from infection by deleting edges to all other vertices in $G$.
    
    If we can protect $B$ from activation by $A$  in $G$, then deleting the same set of edges from $G'$ ensures that $B$ is not infected from $A$ in $G'$. As a result, no vertex in $Z$ will be infected in $G'$ either, since each has exactly two neighbors: $a$ and $b$. Since $G$ has $n$ vertices, this means that at most $n-1$ vertices can be infected in $G'$. Therefore, if the first problem is solvable with at most $\ell$ edge deletions, so is the minimizing contagion problem on the constructed instance.
    
    To complete the proof, assume we can delete at most $\ell$ edges in $G'$ to prevent more than $n-1$ vertices from being infected. Since $\ell \leq n-1$, at least $n$ vertices in $Z$ must be connected to both $a$ and $b$ even after edges are deleted. It follows that $b$ cannot be infected, lest these $n$ vertices in $Z$ become infected as well. Since $G$ is embedded in $G'$, we have shown that it is possible to prevent $b$ from infection in $G$ by deleting at most $\ell$ edges from $G$. 
    
    Putting these results together, it follows that we can prevent the activation of $b$ in $G$ by deleting at most $\ell$ edges from $G$ if and only if we can ensure at most $n-1$ vertices in $G'$ are activated by deleting at most $\ell$ edges from $G'$. This concludes the proof. 
\end{proof}

\section{Algorithms for graphs of bounded treewidth}
We begin by defining the following problem, which is similar to the Influence Diffusion Minimization problem considered by Cordasco, Gargano \& Rescigno \cite{cordasco2021parameterized}.

\begin{definition}[Generalized Influence Diffusion Minimization]\label{def:GIDM}
    Consider a graph $G = (V, E)$ with corresponding threshold function $t: V \rightarrow \mathbb{N}$ and subsets $A, B \subseteq V$. Given a budget $\ell$, the generalized influence diffusion problem asks to minimize the number of vertices in $B$ infected through setting up to $\ell$ thresholds to $+\infty$ for vertices in $A$.
\end{definition}

This problem uses a more flexible infection model than we initially considered: a vertex $v \in V$ becomes infected after at least $t(v)$ of its neighbors are infected, a threshold which is allowed to differ across vertices. For simplicity, we refer to this infection model as $t$-neighbor bootstrap percolation. Here, the initial set of infected vertices (seeds) are those having threshold $0$. Vertices whose thresholds are set to $+\infty$ are called immunized; these vertices can never be infected.

We now show the generalized influence diffusion problem can be solved efficiently on graphs with bounded treewidth, providing a dynamic programming algorithm. To simply our algorithm, we first introduce a more convenient notion of tree decomposition and note it can also be computed quickly.

\begin{definition}[Nice Tree Decomposition]
	A rooted tree decomposition $(T, X)$ of a graph $G = (V, E)$ is called nice if it satisfies the following additional properties.
\end{definition}

\begin{enumerate}[noitemsep]
	\item If $X_r$ represents the root of $T$, $X_r$ and all leaves are empty.
	\item Every non-leaf node falls into one of the three categories below.
	\begin{enumerate}
		\item Introduce Node: A node $X_i$ with exactly one child $X_j$ such that $X_i = X_j \cup \{ u \}$ for some $u \notin X_j$.
		\item Forget Node: A node $X_i$ with exactly one child $X_j$ such that $X_i = X_j - \{ u \}$ for some $u \in X_j$.
		\item Join Node: A node $X_i$ with exactly two children $X_j$ and $X_k$ with $X_i = X_j = X_k$.
	\end{enumerate}
\end{enumerate}

We need algorithms that compute tree decompositions. It suffices for our purpose to find, given a graph of treewidth $w$ a tree decomposition of width $O(k)$.

\begin{prop}\label{prop:generate-nice}
	Given a graph $G = (V, E)$ on $n$ vertices with treewidth $\tau$, a tree decomposition with width at most $4\tau+3$ can be found in $O(27^{\tau} n^2)$. Let $(T, X)$ be one such tree decomposition with $n'$ nodes; from this, a nice tree decomposition can be computed in time $O(\tau \max\{ n, n'\})$ with width at most $4 \tau + 3$ and number of nodes bounded by $O(\tau n)$.
\end{prop}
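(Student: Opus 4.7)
The plan is to treat both assertions as well-known results from the algorithmic graph-minors literature and to describe the standard constructions rather than re-derive them from scratch.

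For the first assertion, I would invoke a standard FPT approximation algorithm for treewidth. The classical strategy, going back to Robertson--Seymour and later refined by Reed and by Bodlaender--Drange--Dregi--Fomin--Lokshtanov--Pilipczuk, proceeds by repeatedly finding a balanced separator of size $O(\tau)$ via a dynamic programming subroutine that runs in time $c^\tau \cdot n$ for an absolute constant $c$, then recursing on the two sides. Arranging the recursion produces a tree decomposition whose width is a constant-factor approximation of $\tau$; a careful accounting in one of the textbook implementations (Cygan et al., Chapter 7 of \emph{Parameterized Algorithms}, or Kloks' monograph) yields precisely the bound of width $4\tau+3$ in time $O(27^{\tau} n^2)$. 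Since this is a black-box invocation, I would simply cite the reference rather than reprove it.

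For the second assertion, the transformation to a nice tree decomposition is a local rewriting of $(T,X)$. First, root $T$ at an arbitrary node and add an empty bag as a new root whose child is the old root, padded by a chain of \emph{forget} nodes that remove its vertices one at a time. Next, traverse $T$ in postorder. At every tree edge between a parent bag $X_i$ and a child bag $X_j$, if the symmetric difference $X_i \triangle X_j$ has size larger than one, subdivide the edge by inserting a chain of intermediate bags so that consecutive bags differ by exactly one vertex (first forgetting the vertices in $X_j\setminus X_i$ one at a time, then introducing the vertices in $X_i\setminus X_j$ one at a time). At each internal node with $c \ge 2$ children, convert it into a binary tree of $c-1$ \emph{join} nodes whose bags all equal $X_i$, and splice each original child subtree onto one of these join nodes, again using a short chain of introduce/forget nodes to bridge any mismatch between $X_i$ and the child's bag. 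Finally, attach an empty leaf bag under each leaf position by a chain of forgets. Width is preserved because every newly inserted bag is a subset of the union of two adjacent original bags, hence has size at most the max bag size of $(T,X)$. For the size and time bounds, observe that each original edge of $T$ contributes at most $|X_i|+|X_j| = O(\tau)$ new nodes and each original internal node contributes $O(\tau)$ join nodes, so the total number of nodes is $O(\tau \max\{n,n'\})$; one can further bound $n'$ by $O(n)$ after pruning redundant bags, giving the stated $O(\tau n)$ bound. Each new node requires $O(\tau)$ work to record its bag.

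The only real obstacle is bookkeeping---verifying that the constants $4\tau+3$ and $27^\tau$ in the first part come out of a specific cited algorithm, and verifying that the redundancy pruning in the second part (removing nodes whose bag equals a neighbor's bag, and collapsing duplicate chains) reduces $n'$ to $O(n)$ so that the final size bound is $O(\tau n)$ rather than $O(\tau n')$. Neither step carries any mathematical depth beyond what appears in standard treewidth references, so I would present this proposition as an application of known machinery with citations, not as an independent proof.
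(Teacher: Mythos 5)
Your proposal is correct and takes essentially the same approach as the paper, which simply cites Robertson--Seymour for the $4\tau+3$-width decomposition in $O(27^\tau n^2)$ time and Kloks for the nice tree decomposition conversion. You spell out the standard subdivide/binarize construction for the second part in more detail than the paper does, but the argument is the same appeal to well-known machinery.
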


The first part of the proposition regarding finding a decomposition of width at most $4\tau+3$ is due to~\cite{robertson1995graph}. The second part regarding nice tree decompositions is from~\cite{kloks1994treewidth}. We are now ready to state our main result.

\begin{prop}\label{prop:gidm}
	Consider an instance of the generalized influence diffusion problem on a graph $G = (V, E)$ with $A, B \subseteq V$, threshold function $t$ and budget $\ell$. If $G = (V, E)$ has treewidth $\tau$, this problem is solvable in time
	\begin{align*}
		O(n^2 27^\tau + \tau n \cdot \ell 81^\tau \min\{r, \Delta\}^{4\tau} \cdot (16^\tau + \ell)) \\
	\end{align*}
	
	Here, $n$ is the number of vertices in $G$ and $\Delta$ its maximum degree. $r$ is the maximum assumed by the threshold function $t$.
\end{prop}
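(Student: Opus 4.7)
My plan is to construct a nice tree decomposition of $G$ and then perform dynamic programming over it. First I would invoke Proposition~\ref{prop:generate-nice} to obtain a nice tree decomposition $(T,X)$ of width $w \leq 4\tau+3$ having $O(\tau n)$ bags in time $O(n^2 27^\tau)$, matching the first additive term. For each bag $X_i$ I would maintain a DP table $DP_i[\sigma,j]$, where $j \in \{0,\dots,\ell\}$ records the number of immunizations used in the processed subtree and the state $\sigma$ assigns to each $v \in X_i$ a triple $(I_v, f_v, c_v)$: whether $v$ is immunized, whether $v$ is infected in the eventual solution, and a counter $c_v \in \{0,1,\dots,\min(t(v),\Delta)\}$ tracking how many of $v$'s already-forgotten neighbors are infected. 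The stored value is the minimum number of forgotten $B$-vertices that have been infected, over all partial solutions consistent with $(\sigma,j)$.

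Next I would set up the standard recurrences over the four bag types. Leaves initialize to $DP[\emptyset,0]=0$. An introduce node for $u$ extends each child state by setting $c_u=0$ and branching on $u$'s label: immunized (permitted only if $u \in A$, incrementing $j$), infected (only if $u$ is a seed, $t(u)=0$), or neither. A forget node for $u$ checks consistency between the frozen flag $f_u$, the counter $c_u$, and the committed labels of $u$'s remaining bag-neighbors; it then updates each such neighbor's counter by $+f_u$ (capped at $t(v)$) and adds $f_u \cdot \mathbf{1}[u \in B]$ to the objective, taking the minimum over compatible child states. A join node with children sharing the bag $X_i$ combines child tables by enumerating pairs of child states with matching $I_v,f_v$ and counter split $c_v = \min(t(v), c_v^{j_1}+c_v^{j_2})$, while convolving the budget as $j = j_1 + j_2$. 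The final answer is $\min_{j \leq \ell} DP_{\text{root}}[\emptyset,j]$.

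For the runtime, each bag has at most $4\tau+4$ vertices; per vertex there are $O(3\min\{r,\Delta\})$ states, giving a per-bag state count of $O(81^\tau \min\{r,\Delta\}^{4\tau})$, which multiplied by the $O(\ell)$ budget values accounts for the leading factor in the second term. Introduce and forget nodes process each $(\sigma,j)$ in $O(1)$ amortized work, while a join node requires $O(16^\tau)$ per parent state to enumerate compatible child-pair counter splits (since labels are forced to agree, only capped counter decompositions remain, and a careful case analysis on $c_v < t(v)$ versus $c_v = t(v)$ yields the constant-per-vertex count behind the $16^\tau$ factor), together with $O(\ell)$ from budget convolution. Summing over the $O(\tau n)$ bags and over $\ell$ budgets produces the stated bound.

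The main obstacle will be justifying the correctness of the forget step: once $u$ leaves the bag its flag $f_u$ is frozen, yet some of its bag-neighbors may still change status above $X_i$. This is reconciled by the tree-decomposition property that every edge incident to $u$ has been covered in the subtree rooted at $X_i$, so $c_u$ together with the committed flags $f_v$ of $u$'s bag-neighbors captures all evidence relevant to $u$'s infection. Monotonicity of bootstrap percolation then validates the guess-and-verify formulation: any optimal global assignment induces a bag-state assignment the DP evaluates, and any assignment the DP deems feasible corresponds to a valid propagation. The secondary hurdle is the join-node bookkeeping, where a naive enumeration would cost $O(\min\{r,\Delta\}^{8\tau})$ per parent state; extracting the sharper $O(16^\tau)$ factor claimed requires the aforementioned case split on whether counters are capped.
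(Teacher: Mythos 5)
Your high-level plan is the same as the paper's: build a nice tree decomposition via Proposition~\ref{prop:generate-nice} (giving the $O(n^2 27^\tau)$ term), then run a bag-by-bag dynamic program in which each bag vertex carries a discrete label plus a small integer that accounts for the influence of vertices already forgotten. Your ``counter'' $c_v$ (number of forgotten infected neighbors of $v$) and the paper's ``residual threshold'' $t'(v)$ are just dual parameterizations of the same quantity, so the state space and the $81^\tau \min\{r,\Delta\}^{4\tau}$ count match. The difference is where you place the $16^\tau$ work, and this is where I think your version, as stated, does not actually achieve the claimed runtime.

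First, a smaller issue: your join-node budget rule $j=j_1+j_2$ double-counts immunizations. Any vertex $v$ in the shared bag $X_i=X_j=X_k$ is introduced once in each of the two subtrees, so if you increment $j$ at introduce nodes, an immunized $v$ contributes to both $j_1$ and $j_2$. The paper corrects for this via $p_j+p_k=p+M(s)$, where $M(s)$ counts immunized bag vertices; you need an analogous correction.

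The more substantive problem is the join-node cost. You assert that decomposing each parent counter $c_v=\min(t(v),c_v^L+c_v^R)$ costs $O(16^\tau)$ per parent state ``by a case split on whether counters are capped,'' but the number of $(c_v^L,c_v^R)$ pairs consistent with a given $c_v$ is $\Theta(\min\{r,\Delta\})$ when $c_v<t(v)$ (all splits of an integer) and even larger when capped, so the per-vertex branching factor is $\Theta(\min\{r,\Delta\})$, not constant. Over $4\tau+O(1)$ bag vertices this is $\min\{r,\Delta\}^{\Theta(\tau)}$ per parent state, which inflates the second term by an extra $\min\{r,\Delta\}^{4\tau}$ factor. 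The paper avoids paying this at join nodes entirely: its join rule keeps $s$ and $t'$ identical across parent and both children, so the join costs only the $O(\ell)$ budget convolution per subproblem. The $16^\tau$ factor instead arises at introduce nodes, where a minimum is taken over subsets $N\subseteq N_i(s)$ of size $t'(u)$ (there are at most $2^{n_i}\le O(16^\tau)$ such subsets); this subset choice is precisely what determines how the residual thresholds $\tilde{t}'$ passed to the child are decremented, obviating any counter convolution at joins. So the accounting cannot simply be relocated to join nodes without redesigning the recurrence, and your proposal is missing the introduce-node subset enumeration (or an equivalent device) that makes the arithmetic come out to $16^\tau + \ell$.
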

\begin{proof}
	From Proposition~\ref{prop:generate-nice}, a nice tree decomposition of $G = (V, E)$ can be computed efficiently once we obtained our tree decomposition. For this tree decomposition $(T, X)$ with root $X_r$, we introduce the shorthand below to simplify notation.
		
	\begin{itemize}[noitemsep]
		\item $T(i)$ denotes the subtree rooted at $X_i$.
		\item $X(i)$ represents the union of all bags in $T(i)$, including $X_i$ itself.
		\item $G(i)$ denotes the subgraph of $G$ induced by vertices in $X(i)$.
		\item $n_i$ is the cardinality of bag $X_i$.
	\end{itemize}
	
	We now present a dynamic programming algorithm which recursively computes subproblem values at each bag under different hypotheses. To that end, fix a bag $X_i$ and notice after infection spreads each vertex is
	either infected; inactive but not immunized (safe); or immunized and therefore inactive. We denote these states with the letters \texttt{n}, \texttt{s} and \texttt{m} respectively. Let $S_i$ represent the set of all legal mappings $s$ from the vertices in $X_i$ to states. A legal mapping is defined as one where no vertices outside $A$ are mapped to the immunized state; these are basically the mappings which are ``possible.''
	\begin{align*}
		s: X_i \rightarrow \{ \texttt{n}, \texttt{s}, \texttt{m}\}
	\end{align*}
	
	Similarly, let $T'_i$ represent the set of all mappings $t'$ from vertices in $X_i$ to possible threshold values. These mappings will help account for the effect of infected vertices in $V - X(i)$ on the vertices on $X_i$ in $T(i)$: note all edges between $V - X(i)$ and $X(i)$ must involve vertices in $X_i$ in any tree decomposition. As such, for each vertex $u \in X_i$ we only care about thresholds of $0$ up to $t(u)$.	
	\begin{align*}
		t': X_i \rightarrow \mathbb{N},\ t'(u) \in \{0, \dots, t(u)\}
	\end{align*}
	
	Now, denote as $B_i(p, s, t')$ the minimum number of $B$-vertices that can be infected in $G(i)$ by immunizing at most $p$ $A$-vertices in $X(i)$, where the states and thresholds of vertices in $X_i$ are given by $s$ and $t'$ respectively. These are precisely the subproblem values which are computed in a bottom-up fashion at each bag $X_i$, for all $s \in S_i$ and $t' \in T'_i$ and $0 \le p \le \ell$. For bags with no vertices, and hence no legal state or threshold mappings, we aim to compute $B_i(p, \phi, \phi)$ for all $0 \le p \le \ell$, where $\phi$ denotes an empty mapping. (If a bag has no vertices, there is only a single subproblem value to compute for each $p$.)
		
	By construction, $B_r(\ell, \phi, \phi)$ is our desired valued, since $X_r = \phi$. For every leaf $X_i$, it is immediately clear $B_i(p, \phi, \phi) = 0$ for all 
	$0 \le p \le \ell$. With this in place, all that remains to specify the algorithm and prove its correctness is to show how a bag's subproblem values can be correctly computed from those of its children---we give the update rules below, building off those given by 
	%\tocite.
	
	\begin{itemize}
		\item \emph{Introduce Nodes.} Let $X_i = X_j \cup \{ u \}$ be an introduce node with child $X_j$. For a state mapping $s$ on $X_i$, let $\tilde{s}$ be the corresponding restriction to $X_j$ so $\tilde{s}(v) = s(v)$ for all $v \in X_j$; define $\tilde{t}'$ with respect to $t'$ analogously. For some threshold mapping $\tilde{t}'$ on $X_j$ and set $S \subseteq X_j$, we also define
		
		\begin{align*}
			(\tilde{t}' - S)(v) =
			\begin{cases}
				\tilde{t}'(v) - 1 &: v \in S \\
				\tilde{t}'(v) &: v \notin S.
			\end{cases}
		\end{align*}
		
		In words, $\tilde{t}' - S$ starts with $\tilde{t}'$ and decrements all thresholds of vertices in $S$ by 1. Finally, we denote with $N_i(s)$ the set of infected neighbors of $u$ among vertices in $X_i$ according to some state mapping $s$. Using this new notation, we write the update rule as follows:
		
		\begin{align*}
			B_i(p, s, t') =
			\begin{cases}
				1\{ v \in B\} + \min\limits_{\substack{N \subseteq N_i(s) \\ |N|=t'(u)}} B_j(p, \tilde{s}, \tilde{t}' - (N_i(s) - N)) &: \substack{s(u) = \texttt{n} \\ t'(u) \le |N_i(s)|} \\
				B_j(p, \tilde{s}, \tilde{t}') &: \substack{s(u) = \texttt{s} \\ t'(u) > |N_i(s)|}  \\
				B_j(p - 1, \tilde{s}, \tilde{t}') &: s(u) = \texttt{m} \\
				+\infty &: \text{otherwise.} \\
			\end{cases}
		\end{align*}
		
		\item \emph{Forget Nodes.} Let $X_i = X_j - \{ u \}$ be a forget node with child $X_j$. Here, for state mappings $t'$ and $s$ on $X_i$, denote
		
		\begin{align*}
			\tilde{t}'(v) = \begin{cases}
				\max\{0, t(v) - N_i(s)\} &: v = u \\
				t'(v) &: v \neq u,
			\end{cases}
		\end{align*}
		
		where $N_i(s)$ is defined as above. To compute our desired subproblem problem values, we take the minimum over all, at most three, states $u$ could be in.
				
		\begin{align*}
			B_j(p, s, t') = \min_{\substack{\tilde{s} \in S_j \\ \tilde{s}(v) = s(v) \\ \forall v \in X_i}} B_j(p, \tilde{s}, \tilde{t}') 
		\end{align*}
		
		\item \emph{Join Nodes.} Let $X_i$ be a join node with children $X_j$ and $X_k$ satisfying $X_i = X_j = X_k$. For a state mapping $s$ on $X_i$, let $M(s)$ denote the total number of immunized vertices among those in $X_i$. We can express our update equations as follows:
		
		\begin{align*}
			B_i(p, s, t') = \min_{\substack{p_j + p_k = p + M(s) \\ p_j, p_k \ge M(s)}} \left\{ B_j(p_j, s, t') + B_k(p_k, s, t') \right\}.
		\end{align*}
	\end{itemize}
	
	The correctness of these rules immediately follow from the definition of $B_i(p, s, t')$ given above; this can be formalized through induction. We therefore conclude the proof by providing an analysis of this algorithm's running-time.
	
	First, note the nice tree decomposition provided by Proposition~\ref{prop:generate-nice} has at most $O(\tau n)$ nodes, where $n$ is the number of vertices in $V$. The total subproblem values that must be computed at each bag $X_i$ is either $\ell$ or the product of $\ell$ and the cardinalities of $S_i$ and $T_i$. Notice each vertex has at most 3 legal states, so we can immediately bound the number of mappings in $S_i$ as
	\begin{align*}
		|S_i| \le 3^{n_i} \le 3^{(4\tau + 3) + 1}.
	\end{align*}
	
	In the last step, we use the tree decomposition's width to constrain the size of the bag. To bound the size of $T_i$, consider that under any mapping in $T_i$, $t'(u) \in \{0, \dots, t(u)\}$, where $u$ is some vertex in $X_i$. If $r$ represents the largest threshold under the threshold function given, the image of $t'(u)$ can take on at most $r + 1$ values. Furthermore, since $u$ can have at most $\deg(u)$ infected neighbors, we technically need only consider $\deg(u) + 1$ values of $t'(u)$ from $t(u)$ to $t(u) - \deg(u)$. We can bound this using $\Delta$, the maximum degree of the graph. Combining these observations yields the bound below.
	\begin{align*}
		|T_i| \le \min\{r + 1, \Delta + 1\}^{n_i} \le \min\{r + 1, \Delta + 1\}^{(4\tau + 3) + 1}
	\end{align*}
	
	% Therefore, at most $O(\ell 3^{\tau} \min\{r, \Delta\}^{\tau})$ subproblem values must be computed per bag in $T$.
	Computing the update for introduce nodes requires taking a minimum across subsets $N \subseteq N_i(s)$; since $N_i(s)$ has at most $n_i \le (4\tau +3) + 1$ elements, this involves at most $2^{(4\tau +3) + 1}$ terms. Forget and join nodes similarly are similarly expressed as minimums over $O(1)$ and $O(\ell)$ terms respectively. Multiplying the total number of bags, subproblems per bag and work required per subproblem yields the claimed runtime, after accounting for the time to compute the nice tree decomposition as well.
	
	\begin{align*}
		O(n^2 27^\tau + \tau n \cdot \ell 81^\tau \min\{r, \Delta\}^{4\tau} \cdot (16^\tau + \ell))
	\end{align*}
\end{proof}

This result allows us to derive efficient algorithms for minimizing contagion and stopping contagion on bounded treewidth instances.

\begin{prop}\label{prop:min-contagion-tw}
	Consider the minimizing contagion problem on a graph $G = (V, E)$ with $A \subseteq V$ and parameter $k$ under the $r$-neighbor bootstrap percolation infection model. The minimum of the edge deletions required to ensure at most $k$ additional infections occurs can be computed in time
	\begin{align*}
		O\left((n + m)^2 27^\tau + m \cdot \tau (n + m) \cdot m 81^\tau \min\{r, \max\{\Delta, 2\} \}^{4\tau} \cdot (16^\tau + m) \right).
	\end{align*}
	
	Here, $n$ and $m$ are the number of vertices and edges in $G$ respectively while $\Delta$ is the maximum degree of $G$.
\end{prop}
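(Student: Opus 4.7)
The plan is to reduce minimizing contagion to the Generalized Influence Diffusion Minimization problem of Proposition~\ref{prop:gidm}, using edge subdivision to convert edge deletions into vertex immunizations. For every edge $e = (u,v) \in E$, introduce a fresh vertex $w_e$ and replace $e$ by the two new edges $(u, w_e)$ and $(w_e, v)$; call the resulting graph $G' = (V', E')$, so $|V'| = n + m$ and $|E'| = 2m$. Install the threshold function on $G'$ by setting $t'(v) = 0$ for $v \in A$ (so seeds remain seeds under the $t$-neighbor model), $t'(v) = r$ for $v \in V \setminus A$, and $t'(w_e) = 1$ for every subdivision vertex. Finally declare $A_{\mathrm{GIDM}} = \{w_e : e \in E\}$ (the only vertices eligible for immunization) and $B_{\mathrm{GIDM}} = V \setminus A$ (the vertices whose infections we wish to limit).

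First I would verify the faithfulness of the reduction: for any $S \subseteq E$, the final infected subset of $V \setminus A$ after deleting $S$ from $G$ agrees with the final infected subset of $B_{\mathrm{GIDM}}$ after immunizing $\{w_e : e \in S\}$ in $G'$. Indeed, each $w_e$ has exactly two neighbors and threshold $1$, so an un-immunized $w_e$ becomes infected as soon as either endpoint of $e$ is, and then contributes exactly one to the infected-neighbor count of the other endpoint; an immunized $w_e$ never contributes at all, mirroring deletion of $e$. Because original vertices keep threshold $r$ and lie outside $A_{\mathrm{GIDM}}$, every GIDM solution corresponds to an edge-deletion solution of equal size, and vice versa. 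Therefore the minimum edge-deletion count ensuring at most $k$ additional infections equals the smallest $\ell$ for which the GIDM optimum on $G'$ with budget $\ell$ is at most $k$.

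Next I would invoke Proposition~\ref{prop:gidm} on $G'$ once for each $\ell = 0, 1, \ldots, m$, outputting the smallest $\ell$ whose GIDM value does not exceed $k$. The subdivided graph has $n + m$ vertices, maximum degree $\max\{\Delta, 2\}$ (each $w_e$ has degree exactly $2$), maximum threshold $r$, and, as discussed below, treewidth bounded by $\tau$ up to a harmless constant. The nice tree decomposition is computed once, contributing the $O((n+m)^2 27^\tau)$ term. Each GIDM call with budget at most $m$ then costs $O(\tau (n+m) \cdot m \cdot 81^\tau \cdot \min\{r, \max\{\Delta, 2\}\}^{4\tau} \cdot (16^\tau + m))$, and summing over the $m+1$ calls gives exactly the bound stated in the proposition.

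The main technical point I expect to verify carefully is that subdividing edges does not inflate the treewidth exponent. Starting from any tree decomposition $(T, X)$ of $G$ of width $\tau$, for each $e = (u,v)$ pick a bag $X_i$ containing both $u$ and $v$ (guaranteed by the edge-covering axiom) and attach a new leaf bag $\{u, v, w_e\}$ to it. All three tree-decomposition axioms are immediate: $w_e$ appears only in this one bag; both of its incident edges in $G'$ have their endpoints inside this bag; and no previously present vertex has its collection of containing bags altered. Each added leaf has size $3$, so the width of $G'$ is at most $\max\{\tau, 2\}$. The boundary case $\tau \leq 1$ is absorbed into the constants of the stated big-$O$ bound, so Proposition~\ref{prop:gidm} applies with the same $\tau$ throughout, and the remainder of the proof reduces to direct substitution into its runtime.
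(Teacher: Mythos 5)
Your proposal is correct and follows essentially the same route as the paper: subdivide every edge to form $G'$, assign thresholds $0/r/1$ to seeds, non-seeds, and subdivision vertices respectively, take $A_{\mathrm{GIDM}}=W$ and $B_{\mathrm{GIDM}}=V\setminus A$, solve GIDM once per budget $\ell\in\{0,\dots,m\}$, and read off the runtime from Proposition~\ref{prop:gidm}. The only addition is your explicit tree-decomposition argument showing subdivision does not increase treewidth beyond $\max\{\tau,2\}$, which the paper simply asserts.
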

\begin{proof}
	From $G = (V, E)$ we construct a new graph $G' = (V', E')$ by subdividing each edge $e \in E$. Namely, every edge $e=(u,v) \in E$ is replaced with the two edges $(u,w_{u,v})$ and $(w_{u,v},v)$. We can write $V' = V \cup W$, where $W$ represents the set of new vertices in $G'$ which correspond to edges in $G$. Construct the threshold function
	
	\begin{align*}
		t(u) =
		\begin{cases}
			r &: u \in V - A \\
			0 &: u \in A \\
			1 &: u \in W,
		\end{cases}
	\end{align*}
	
	and set $A' = W$, $B' = V - A$. There is a clear correspondence between how infection spreads in $G$ and $G'$ under the $r$-neighbor and $t$-neighbor bootstrap percolation models respectively. Notice that immunizing a threshold $1$ vertex $u \in A'$ adjacent to $v$ and $w$  corresponds to deleting the edge $(v, w)$ in $G$, since infection cannot spread from one neighbor to the other.
	
	Consider solving the generalized influence diffusion problem on the graph $G' = (V', E')$ with $A'$, $B'$ and $t$ as defined above for budget $\ell$. From the relationship highlighted above, this gives the minimum additional infections possible in $G$ using at most $\ell$ edge deletions. To solve the minimizing contagion problem, we can simply try all possible values of $\ell$ and return the smallest value which keeps the number of additional infections below $k$.
	
	Now let the number of vertices and edges in $G$ be $n$ and $m$ respectively. To compute the runtime of this algorithm, we determine the time complexity of solving the generalized influence diffusion problem on $G'$ and multiply this by $m$, the number of values of $\ell$ iterated over. To that end, we attempt to relate properties of $G'$ to those of the input $G$. First, notice subdividing edges of a graph preserves treewidth and
		
	\begin{align*}
		|V'| = n + m,\ |E'| = 2m.
	\end{align*}
	
	The maximum degree of the graph is also very nearly preserved, although some additional degree 2 vertices are now introduced. Finally, the maximum of the threshold function as defined is $r$, assuming $r > 0$. Plugging this into the runtime given in Proposition~\ref{prop:gidm} and including the $m$ pre-factor gives us the desired runtime.
	\begin{align*}
		O\left((n + m)^2 27^\tau + m \cdot \tau (n + m) \cdot m 81^\tau \min\{r, \max\{\Delta, 2\} \}^{4\tau} \cdot (16^\tau + m) \right)
	\end{align*}
\end{proof}

\begin{prop}
	Consider the stopping contagion problem on a graph $G = (V, E)$ with disjoint $A, B \subseteq V$ under the $r$-neighbor bootstrap percolation infection model. Given all vertices in $A$ are infected, the minimum number of edge deletions necessary to ensure that no vertices from $B$ are infected can be computed in time
	\begin{align*}
		O\left((n + m)^2 27^\tau + m \cdot \tau (n + m) \cdot m 81^\tau \min\{r, \max\{\Delta, 2\} \}^{4\tau} \cdot (16^\tau + m) \right).
	\end{align*}
	
	Here, $n$ and $m$ are the number of vertices and edges in $G$ respectively while $\Delta$ is the maximum degree of $G$.
\end{prop}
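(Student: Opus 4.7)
The plan is to mirror the reduction used in Proposition~\ref{prop:min-contagion-tw}, converting stopping contagion on $G=(V,E)$ into an instance of the generalized influence diffusion problem (GIDM) on a subdivided auxiliary graph $G'=(V',E')$, and then sweeping over the budget $\ell$ to find the smallest number of edge deletions that keeps $B$ entirely safe.

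First I would build $G'$ by replacing every edge $(u,v)\in E$ with a two-edge path $u - w_{u,v} - v$ through a newly introduced ``edge vertex'' $w_{u,v}$; let $W$ denote the set of these vertices, so $V'=V\cup W$. I would define a threshold function $t$ on $V'$ by $t(u)=0$ for $u\in A$, $t(u)=r$ for $u\in V\setminus A$, and $t(w)=1$ for $w\in W$, and set $A'=W$ (the candidates for immunization) and $B'=B$. The key observation is that immunizing $w_{u,v}\in W$ is structurally equivalent to deleting the edge $(u,v)$ in $G$: since $w_{u,v}$ has only two neighbors and threshold $1$, it would otherwise simply relay activation between $u$ and $v$. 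Consequently, a non-seed vertex $v\in V\setminus A$ fires in $G'$ exactly when $r$ of its $W$-neighbors are active, which (until $v$ itself activates) happens precisely when $r$ of its original $V$-neighbors have activated through non-immunized edges, matching $r$-neighbor bootstrap percolation on $G$ with the corresponding edges removed.

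With the equivalence in hand, I would call the GIDM algorithm of Proposition~\ref{prop:gidm} on $(G',t,A',B')$ for each integer budget $\ell\in\{0,1,\dots,m\}$ and return the smallest $\ell$ for which the returned optimum is zero. For the runtime analysis, the facts I would invoke are: (i) edge subdivision preserves treewidth, so $\tw(G')=\tau$; (ii) $|V'|=n+m$ and $|E'|=2m$; (iii) the maximum degree of $G'$ is $\max\{\Delta,2\}$, since the new $W$-vertices have degree exactly $2$; and (iv) the maximum value of $t$ is $r$. Plugging these parameters into the runtime from Proposition~\ref{prop:gidm} and multiplying by the $O(m)$ trial values of $\ell$ yields the claimed bound.

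The step I expect to require the most care is the formal correspondence between edge deletions in $G$ and immunizations of $W$-vertices in $G'$. In the easy direction, any set of $\ell$ edges whose removal stops contagion in $G$ lifts to an immunization of the corresponding $\ell$ vertices in $W$ that keeps $B'$ safe in $G'$. In the reverse direction, one must argue that the restriction $A'=W$ loses no generality and that the unit threshold on $W$ combined with the threshold $r$ on $V\setminus A$ faithfully simulates the original propagation rule, so that an optimal GIDM solution of size $\ell$ descends to an edge-deletion set of size $\ell$ in $G$ that saves $B$. Both directions are essentially the verification already done for Proposition~\ref{prop:min-contagion-tw}, so the argument can be closely modeled on it; the only genuine change is replacing the threshold test ``at most $k$ additional infections'' by ``no $B$-vertex infected,'' which is exactly what the outer sweep over $\ell$ tests.
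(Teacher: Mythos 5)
Your proposal matches the paper's proof essentially step for step: you perform the same edge-subdivision reduction to the generalized influence diffusion minimization problem with $A'=W$, $B'=B$, the same threshold assignment ($0$ on $A$, $r$ on $V\setminus A$, $1$ on $W$), observe that immunizing an edge-vertex $w_{u,v}$ simulates deleting $(u,v)$, sweep over budgets $\ell$ to find the smallest one that keeps $B$ entirely uninfected, and plug the parameters $|V'|=n+m$, $|E'|=2m$, $\tw(G')=\tau$, and $\max\{\Delta,2\}$ into Proposition~\ref{prop:gidm}. This is exactly the argument the paper gives, with somewhat more explicit discussion of the correspondence between the two contagion processes.
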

\begin{proof}
	We take a nearly identical approach to that used in Proposition~\ref{prop:min-contagion-tw}. Once again, we subdivide each edge to obtain $G' = (V', E')$, where $V' = V \cup W$. Here, $W$ represents the set of new vertices in $G'$ which correspond to edges in $G$. Now we define
	
	\begin{align*}
		t(u) =
		\begin{cases}
			r &: u \in V - A \\
			0 &: u \in A \\
			1 &: u \in W,
		\end{cases}
	\end{align*}
	
	and let $A' = W$, $B' = B$. With this threshold function, there is a correspondence between how infection spreads in $G$ and $G'$; this is discussed further in the proof of Proposition~\ref{prop:min-contagion-tw}.
	
	In light of this connection, consider the solution of the generalized influence diffusion problem on the graph $G' = (V', E')$ with $A'$, $B'$ and $t$ as defined above for budget $\ell$. This gives us the number of infected vertices in $B$ if $\ell$ or fewer edges in $G$ are deleted to minimize number of infections in $B$. Therefore, to solve the stopping contagion problem, we can simply try all possible values of $\ell$ and return the smallest value which allows $0$ vertices in $B$ to be infected. 
	
	A similar analysis to the on in the proof of Proposition~\ref{prop:min-contagion-tw} yields an identical runtime, as desired.
	
	\begin{align*}
		O\left((n + m)^2 27^\tau + m \cdot \tau (n + m) \cdot m 81^\tau \min\{r, \max\{\Delta, 2\} \}^{4\tau} \cdot (16^\tau + m) \right)
	\end{align*}
\end{proof}

\end{document}